\documentclass{lmcs} 

\keywords{}

\usepackage[utf8]{inputenc}

\usepackage{amsmath}
\usepackage{amssymb}

\usepackage{amsthm}
\usepackage{tablefootnote}

\usepackage{xspace}

\usepackage{colortbl}

\usepackage{booktabs}

\usepackage{mathtools}

\usepackage{tikz}
\usetikzlibrary{arrows,decorations.text,decorations.pathmorphing,decorations.pathreplacing,positioning,patterns,tikzmark,arrows.meta,backgrounds,shapes.geometric}


\newcommand{\nats}{\mathbb{N}}
\renewcommand{\epsilon}{\varepsilon}
\renewcommand{\phi}{\varphi}
\newcommand{\size}[1]{|#1|}
\newcommand{\pow}[1]{2^{#1}}
\newcommand{\cceq}{\mathop{::=}}
\newcommand{\set}[1]{\{#1\}}

\newcommand{\F}{\mathop{\mathbf{F}\vphantom{a}}\nolimits}
\newcommand{\G}{\mathop{\mathbf{G}\vphantom{a}}\nolimits}
\DeclareMathOperator{\U}{\mathbf{U}}
\newcommand{\X}{\mathop{\mathbf{X}\vphantom{a}}\nolimits}

\newcommand{\Once}{\mathop{\mathbf{O}\vphantom{a}}\nolimits}
\newcommand{\Hist}{\mathop{\mathbf{H}\vphantom{a}}\nolimits}
\DeclareMathOperator{\Since}{\mathbf{S}}
\newcommand{\Y}{\mathop{\mathbf{Y}\vphantom{a}}\nolimits}

\newcommand{\ltl}{{LTL}\xspace}
\newcommand{\pdl}{{PDL}\xspace}
\newcommand{\ctl}{{CTL}\xspace}
\newcommand{\ctlstar}{{CTL$^*$}\xspace}
\newcommand{\hyltl}{{Hyper\-LTL}\xspace}
\newcommand{\hypdl}{{Hyper\-PDL-$\Delta$}\xspace}
\newcommand{\hyqptl}{{Hyper\-QPTL}\xspace}
\newcommand{\sohyltl}{{Hyper$^2$LTL}\xspace}
\newcommand{\hmu}{{$H_\mu$}\xspace}

\newcommand{\hyctlstar}{{HyperCTL$^*$}\xspace}

\newcommand{\qptl}{{QPTL}\xspace}
\newcommand{\hyqptlplus}{{Hyper\-QPTL$^+$}\xspace}
\newcommand{\ahyltl}{{A-HLTL}\xspace}
\newcommand{\hyaut}{HA\xspace}
\newcommand{\foe}{FO$[E,<]$\xspace}
\newcommand{\hyperfo}{HyperFO\xspace}


\newcommand{\ap}[0]{\mathrm{AP}}


\newcommand{\tower}{\textsc{Tower}\xspace}

\newcommand{\myquot}[1]{``#1''}

\newcommand{\tsys}{\mathfrak{T}}
\newcommand{\traces}{\mathrm{Tr}}





\newcommand{\hyperize}{{\mathit{hyp}}}
\newcommand{\arithmetize}{{\mathit{ar}}}

\newcommand{\pair}{\mathit{pair}}

\newcommand{\ists}{\mathit{isTS}}
\newcommand{\ispath}{\mathit{isPath}}
\newcommand{\traceof}{\mathit{traceOf}}

\newcommand{\natsstruct}{(\nats, +, \cdot, <, \in)}

\newcommand{\quant}{Q}

\newcommand{\ghltl}{\textrm{GHyLTL$_\textrm{S+C}$}\xspace}

\newcommand{\hltls}{\textrm{HyperLTL$_\textrm S$}\xspace}
\newcommand{\hltlc}{\textrm{HyperLTL$_\textrm C$}\xspace}
\renewcommand{\phi}{\varphi}
\newcommand{\vars}{\textsf{VAR}}
\newcommand{\tra}{\sigma}
\newcommand{\asg}{\Pi}

\newcommand{\p}{p}

\newcommand{\tr}{x}
\newcommand{\C}[1]{\langle #1 \rangle}
\newcommand{\ctx}{C}

\newcommand{\intdeco}{\texttt{pos}}
\newcommand{\nonintdeco}{\texttt{ori}}
\newcommand{\existsint}{\exists^{\intdeco}\xspace}
\newcommand{\forallint}{\forall^{\intdeco}\xspace}
\newcommand{\olexistsnonint}{\overline{\exists}^{\nonintdeco}\xspace}
\newcommand{\olforallnonint}{\overline{\forall}^{\nonintdeco}\xspace}
\newcommand{\existsnonint}{\exists^{\nonintdeco}\xspace}
\newcommand{\forallnonint}{\forall^{\nonintdeco}\xspace}
\newcommand{\quantnonint}{\quant^{\nonintdeco}\xspace}
\newcommand{\olquantnonint}{\overline{\quant}^{\nonintdeco}\xspace}

\newcommand{\setS}{\Gamma\xspace}
\newcommand{\setL}{\mathcal{L}\xspace}

\newcommand{\pltl}{{PLTL}\xspace}

\newcommand{\dom}[1]{\textrm{Dom}(#1)\xspace}

\newcommand{\TS}{\mathcal{T}\xspace}

\newcommand{\labfunc}{\ell}


\renewcommand{\succ}{\mathrm{succ}}
\newcommand{\pred}{\mathrm{pred}}

\newcommand{\encvar}{h_{var}}
\newcommand{\encprop}{h_{prop}}

\newcommand{\intpropm}[1]{\intprop(#1)}
\newcommand{\auxprop}{\$}
\newcommand{\auxpropp}{\$'}
\newcommand{\traux}{x}
\newcommand{\trauxp}{x'}
\newcommand{\adddeco}{\mathrm{add}}
\newcommand{\multdeco}{\mathrm{mult}}

\newcommand{\per}{\mathrm{per}}
\newcommand{\blockchange}{\mathrm{algn}}

\newcommand{\true}{\texttt{true}}
\newcommand{\intprop}{\texttt{\#}}

\newcommand{\sweepprop}{\dagger}
\usepackage[color=red!70]{todonotes}
\begin{document}

\title[The Complexity of Generalized HyperLTL with Stuttering and Contexts]{The Complexity of Generalized HyperLTL\texorpdfstring{\\}{} with Stuttering and Contexts}
\titlecomment{A preliminary version of this article has been published in the proceedings of GandALF 2025~\cite{confversion}. This version contains a new section on finite-state satisfiability and results for the simple fragments of \ghltl and \hltls.}
\thanks{Gaëtan Regaud has been supported by the European Union.\newline
\includegraphics[scale=.15]{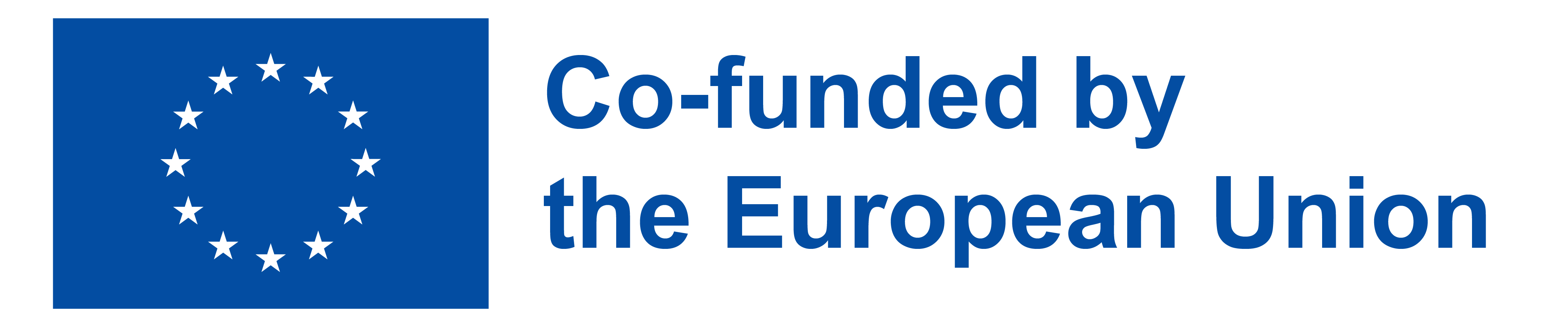}\newline
Martin Zimmermann has been supported by DIREC – Digital Research Centre Denmark and by the project  `Hyperlogics: Expressiveness, Monitorability and Tools (H.-Lo)' of the Icelandic Research Fund, project no.~2612260-051. }	

\author[G.~Regaud]{Gaëtan Regaud\lmcsorcid{0009-0000-1409-5707}}[a]
\author[M.~Zimmermann]{Martin Zimmermann\lmcsorcid{0000-0002-8038-2453}}[b]

\address{ENS Rennes, Rennes, France}	
\email{gaetan.regaud@ens-rennes.fr}  

\address{Aalborg University, Aalborg, Denmark}	
\email{mzi@cs.aau.dk}  





\begin{abstract}
  \noindent We settle the complexity of satisfiability, finite-state satisfiability, and model-checking for generalized HyperLTL with stuttering and contexts, an expressive logic for the specification of asynchronous hyperproperties.
Such properties cannot be specified in HyperLTL, as it is restricted to synchronous hyperproperties.

Nevertheless, we prove that satisfiability is $\Sigma_1^1$-complete and thus not harder than for HyperLTL. 
On the other hand, we prove that model-checking and finite-state satisfiability are equivalent to truth in second-order arithmetic, and thus much harder than the decidable HyperLTL model-checking problem and the $\Sigma_0^1$-complete HyperLTL finite-state satisfiability problem.
The lower bounds for the model-checking and finite-state satisfiability problems hold even when only allowing stuttering or only allowing contexts.
\end{abstract}

\maketitle

\section{Introduction}

The introduction of hyperlogics has been an important milestone in the specification, analysis, and verification of hyperproperties~\cite{ClarksonS10}, properties that relate several execution traces of a system. 
These have important applications in, e.g., information-flow security. 
Before their introduction, temporal logics (e.g., \ltl, \ctl, \ctlstar, \qptl and \pdl) were only able to reason about a single trace at a time~\cite{DBLP:journals/eatcs/Finkbeiner17}.
However, this is not sufficient to reason about the complex flow of information.
For example, noninterference~\cite{noninterference} requires that all traces that coincide on their low-security inputs also coincide on their low-security outputs, independently of their high-security inputs (which may differ, but may not leak via low-security~outputs).
This property intuitively \myquot{compares} the inputs and outputs on pairs of traces and is therefore not expressible in LTL, which can only reason about individual traces.

The first generation of hyperlogics has been introduced by equipping the temporal logics~\ltl, \ctlstar, \qptl, and \pdl with quantification over traces, obtaining \hyltl~\cite{ClarksonFKMRS14}, \hyctlstar~\cite{ClarksonFKMRS14}, \hyqptl~\cite{FinkbeinerHHT20,Rabe16diss} and \hypdl~\cite{hyperpdl}. They are able to express noninterference (and many other hyperproperties), have intuitive syntax and semantics, and a decidable model-checking problem, making them attractive specification languages for hyperproperties.
For example, noninterference is expressed in \hyltl as
\[
\forall \tr.\ \forall \tr'.\ \Big(\bigwedge\nolimits_{i \in I_\ell} \G \big(i_\tr \leftrightarrow i_{\tr'} \big)\Big) \rightarrow \Big(\bigwedge\nolimits_{o \in O_\ell} \G \big(o_\tr \leftrightarrow o_{\tr'} \big)\Big),
\]
where $I_\ell$ is the set of low-security inputs and $O_\ell$ is the set of low-security outputs. The formula expresses that for all pairs of traces (assigned to the trace variables~$\tr$ and $\tr'$), if they agree at every position on the truth value of all propositions in $I_\ell$, then they must also agree at every position on the truth value of all propositions in $O_\ell$.

All the hyperlogics mentioned above are synchronous in the sense that time passes on all quantified traces at the same rate, e.g., the subformula~$i_\tr \leftrightarrow i_{\tr'}$ above, which is in the scope of an always operator, is evaluated, for all $n \in \nats$, at the positions $n$ of $\tr$ and $\tr'$.
However, not every system is synchronous, e.g., multi-threaded systems in which processes are not scheduled in lockstep. 
The first generation of hyperlogics is not able to express asynchronous hyperproperties. 
Hence, in a second wave, several asynchronous hyperlogics have been introduced, which employ various mechanisms to enable the asynchronous evolution of time on different traces under consideration.
\begin{itemize}
    \item Asynchronous \hyltl (\ahyltl)~\cite{DBLP:conf/cav/BaumeisterCBFS21} adds so-called trajectories to \hyltl, which intuitively specify the rates at which different traces evolve. 
    \item \hyltl with stuttering (\hltls)~\cite{DBLP:conf/lics/BozzelliPS21} changes the semantics of the temporal operators of \hyltl so that time does not evolve synchronously on all traces, but instead evolves based on \ltl-definable stuttering.
    \item \hyltl with contexts (\hltlc)~\cite{DBLP:conf/lics/BozzelliPS21} adds a context-operator to \hyltl, which allows to select a subset of traces on which time passes synchronously, while it is frozen on all others. 
    \item Generalized \hyltl with stuttering and contexts (\ghltl)~\cite{DBLP:conf/fsttcs/BombardelliB0T24} adds both stuttering and contexts to \hyltl and additionally allows trace quantification under the scope of temporal operators, which \hyltl does not allow. 
    \item \hmu~\cite{hmu} adds trace quantification to the linear-time $\mu$-calculus with asynchronous semantics for the modal operators. 
    \item Hypernode automata (\hyaut)~\cite{hypernode} combine automata and hyperlogic with stuttering. 
\end{itemize}
The known relations between these logics are depicted in Figure~\ref{fig_asynchlogics}.

\begin{figure}[t]
    \centering

    \begin{tikzpicture}[thick,yscale=1.25]
        \node[fill=gray!20,rounded corners] (hyltl) at (1,0.5) {\hyltl};
        \node[fill=gray!20,rounded corners,align=center,anchor = north] (shltls) at (4,2) {simple\\ \hltls};
        \node[fill=gray!20,rounded corners,align=center,anchor = north] (hltls) at (4,3) {\hltls};
        \node[fill=gray!20,rounded corners,align=center,anchor = north] (hltlc) at (-2,3) {\hltlc};
        \node[fill=gray!20,rounded corners,align=center,anchor = north] (sghltl) at (1,3) {simple\\ \ghltl};
        \node[fill=gray!20,rounded corners,align=center] (ghltl) at (1,4) {\ghltl};

        \node[fill=gray!20,rounded corners,align=center,anchor = north] (foe) at (6.5,2) {\foe};
        \node[fill=gray!20,rounded corners,align=center] (hyperfo) at (6.5,0.5) {\hyperfo};
        
        \node[fill=gray!20,rounded corners,align=center,anchor = north] (hyaut) at (8.5,2) {\hyaut};
        \node[fill=gray!20,rounded corners,align=center,anchor = north] (ahyltl) at (-4,2) {\ahyltl};
        \node[fill=gray!20,rounded corners,align=center] (hmu) at (-4,4) {\hmu};

        \path[->, > = stealth]
        (hyltl) edge node[above] {\footnotesize \cite{expressiveness}} (ahyltl.south)
        (hyltl) edge node[above] {} (hltlc)
        (hyltl) edge node[above] {} (shltls)
        (shltls) edge node[above] {} (hltls)
        (shltls) edge[bend left] node[above] {} (sghltl) 
        (sghltl) edge node[above] {} (ghltl)
        (hltls) edge[bend right] node[above] {} (ghltl.east)
        (hltlc) edge[bend left] node[above] {} (ghltl.west)
        (hyltl) edge[<->] node[below] {\footnotesize \cite{FZ17}} (hyperfo)
        (hyperfo) edge node[above] {} (foe)
        (ahyltl) edge node[left ]{\footnotesize \cite{expressiveness}} (hmu)
        (foe) edge[bend right] node[above,yshift=10] {\footnotesize \cite{DBLP:conf/fsttcs/BombardelliB0T24}} (ghltl)
        (hltls.north west) edge[bend right=5] node[above, near end]{\footnotesize \cite{DBLP:conf/lics/BozzelliPS21}} (hmu)
        (hltlc) edge node[above]{\footnotesize \,\,\cite{DBLP:conf/lics/BozzelliPS21}} (hmu)
        ;

        \draw[dashed, rounded corners] (-5,1) -- (-.5,1) -- (-.5,3.25) -- (2.55, 3.25) -- (2.55, 2.2) -- (7.75,2.2) -- (7.75, 1) -- (9.25,1);
        
    \end{tikzpicture}
    
    \caption{The landscape of logics for asynchronous hyperproperties.
    Here, \foe is first-order logic with the equal-level predicate~$E$ and order~$<$ evaluated over sets of infinite words, and \hyperfo is a fragment that is equivalent to \hyltl~\cite{FZ17}.
    Arrows denote known inclusions (where the inclusion of \hyltl into \ahyltl requires an additional proposition) and the dashed line denotes the decidability border for model-checking. For non-inclusions, we refer the reader to work on the expressiveness of asynchronous hyperlogics~\cite{expressiveness,DBLP:conf/fsttcs/BombardelliB0T24}.
    }
    \label{fig_asynchlogics}
\end{figure}

However, all these logics have an undecidable model-checking problem, thereby losing one of the key features of the first generation logics. 
Thus, much research focus has been put on fragments of these logics, e.g., simple \ghltl and simple \hltls, which both have a decidable model-checking problem. The same is true for fragments of \ahyltl~\cite{DBLP:conf/cav/BaumeisterCBFS21}, \hmu~\cite{hmu}, and \hyaut~\cite{hypernode}.
Furthermore, for almost all of the logics, the satisfiability problem has never been studied. 
Thus, the landscape of complexity results for the second generation is still incomplete, while the complexity of satisfiability, finite-state satisfiability, and model-checking for the first generation has been settled (see Table~\ref{table_firstgenresults}).

\begin{table}[b]
    \setlength{\tabcolsep}{8pt}
    \centering
    \caption{List of complexity results for synchronous hyperlogics. \myquot{T2A-equiv.} (\myquot{T3A-equiv.}) stands for \myquot{equivalent to truth in second-order (third-order) arithmetic}. The result for \hypdl satisfiability can be shown using techniques developed  by Fortin et al.\ for \hyltl satisfiability~\cite{hyperltlsat}, the result for \hypdl finite-state satisfiability follows from the lower bound for its fragment~\hyltl and \hypdl model-checking being decidable.}
    
    \renewcommand{\arraystretch}{1.1}
    \footnotesize
    \begin{tabular}{llll}
    
       Logic &  Satisfiability  & Finite-state Satisfiability  &  Model-checking \\
         \midrule
    
        \hyltl & $\Sigma_1^1$-compl.~\cite{hyperltlsat}  & $\Sigma_1^0$-compl.~\cite{FinkbeinerH16} & \tower-compl.~\cite{Rabe16diss,MZ20} \\
        
        \rowcolor{lightgray!40} \hypdl & $\Sigma_1^1$-compl.  & $\Sigma_1^0$-compl. & $\tower$-compl.~\cite{hyperpdl}\\
        
        \hyqptl &  $\Sigma_1^2$-complete.~\cite{hyq} & $\Sigma_1^0$-compl.~\cite{Rabe16diss} & $\tower$-compl.~\cite{Rabe16diss}\\
        
        \rowcolor{lightgray!40}\hyqptlplus & T3A-equiv.~\cite{hyq} & T3A-equiv.~\cite{hyq} &   T3A-equiv.~\cite{hyq}\\
        
        \sohyltl & T3A-equiv.~\cite{frz26}  & T3A-equiv.~\cite{frz26}  &  T3A-equiv.~\cite{frz26}\\
        
        \rowcolor{lightgray!40}\hyctlstar & $\Sigma_1^2$-compl.~\cite{hyperltlsat}  & $\Sigma_1^0$-compl.~\cite{FinkbeinerH16} & \tower-compl.~\cite{Rabe16diss,MZ20} \\
    
    \end{tabular}
    \label{table_firstgenresults}
\end{table}

In these preceding works, and here, one uses the complexity of arithmetic, predicate logic over the signature~$(+, \cdot, <, \in)$, as a yardstick.
In first-order arithmetic, quantification ranges over natural numbers while second-order arithmetic adds quantification over sets of natural numbers and third-order arithmetic adds quantification over sets of sets of natural numbers.
Figure~\ref{fig_hierarchies} gives an overview of the arithmetic, analytic, and \myquot{third} hierarchy, each spanned by the classes of languages definable by restricting the number of alternations of the highest-order quantifiers, i.e., $\Sigma_n^0$ contains languages definable by formulas of first-order arithmetic with $n-1$ quantifier alternations, starting with an existential one.

\begin{figure}[t]
    \centering
    
  \scalebox{.9}{
  \begin{tikzpicture}[xscale=1.0,yscale=.7,thick]

    \fill[fill = gray!25, rounded corners] (-1.05,2) rectangle (0.5,-1.5);
    \fill[fill = gray!25, rounded corners] (.6,2) rectangle (14.5,-1.5);

    \node[align=center] (s00) at (0,0) {$\Sigma^0_0$ \\ $=$ \\ $\Pi^0_0$} ;
    \node (s01) at (1,1) {$\Sigma^0_1$} ;
    \node (p01) at (1,-1) {$\Pi^0_1$} ;
    \node (s02) at (2,1) {$\Sigma^0_2$} ;
    \node (p02) at (2,-1) {$\Pi^0_2$} ;
    \node (s03) at (3,1) {$\Sigma^0_3$} ;
    \node (p03) at (3,-1) {$\Pi^0_3$} ;
    \node (s04) at (4,1) {$\cdots$} ;
    \node (p04) at (4,-1) {$\cdots$} ;
    
    \node[align=center] (s10) at (5,0) {$\Sigma^1_0 $ \\ $=$ \\ $ \Pi^1_0$} ;
    \node (s11) at (6,1) {$\Sigma^1_1$} ;
    \node (p11) at (6,-1) {$\Pi^1_1$} ;
    \node (s12) at (7,1) {$\Sigma^1_2$} ;
    \node (p12) at (7,-1) {$\Pi^1_2$} ;
    \node (s13) at (8,1) {$\Sigma^1_3$} ;
    \node (p13) at (8,-1) {$\Pi^1_3$} ;
    \node (s14) at (9,1) {$\cdots$} ;
    \node (p14) at (9,-1) {$\cdots$} ;
    
    \node[align=center] (s20) at (10,0) {$\Sigma^2_0$ \\ $ =$ \\ $ \Pi^2_0$} ;
    \node (s21) at (11,1) {$\Sigma^2_1$} ;
    \node (p21) at (11,-1) {$\Pi^2_1$} ;
    \node (s22) at (12,1) {$\Sigma^2_2$} ;
    \node (p22) at (12,-1) {$\Pi^2_2$} ;
    \node (s23) at (13,1) {$\Sigma^2_3$} ;
    \node (p23) at (13,-1) {$\Pi^2_3$} ;
    \node (s24) at (14,1) {$\cdots$} ;
    \node (p24) at (14,-1) {$\cdots$} ;
    
    \foreach \i in {0,1,2} {
      \draw (s\i0) -- (s\i1) ;
      \draw (s\i0) -- (p\i1) ;
      \draw (s\i1) -- (s\i2) ;
      \draw (s\i1) -- (p\i2) ;
      \draw (p\i1) -- (s\i2) ;
      \draw (p\i1) -- (p\i2) ;
      \draw (s\i2) -- (s\i3) ;
      \draw (s\i2) -- (p\i3) ;
      \draw (p\i2) -- (s\i3) ;
      \draw (p\i2) -- (p\i3) ;
      \draw (s\i3) -- (s\i4) ;
      \draw (s\i3) -- (p\i4) ;
      \draw (p\i3) -- (s\i4) ;
      \draw (p\i3) -- (p\i4) ;
    }
    \foreach \i [evaluate=\i as \iplus using int(\i+1)] in {0,1} {
      \draw (s\i4) -- (s\iplus0) ;
      \draw (p\i4) -- (s\iplus0) ;
}
      \node[] at (-0.25,1.7) {\scriptsize Decidable} ;
      \node at (13.4,1.7) {\scriptsize Undecidable} ;

      \node[align=left,font=\small,
      rounded corners=2pt] at (3.2,1.7)
      (re) {\scriptsize Recursively enumerable} ;
      \draw[-stealth,rounded corners=2pt] (s01) |- (re) ;

    \path (4.25,-1.65) edge[decorate,decoration={brace,amplitude=3pt}]
    node[below] {\begin{minipage}{4cm}\centering
    	arithmetical hierarchy $ $\\ $\equiv$\\ first-order arithmetic 
    \end{minipage}} (.75,-1.65) ;

    \path (9.25,-1.65) edge[decorate,decoration={brace,amplitude=3pt}]
    node[below] {\begin{minipage}{4.4cm}\centering
    	analytical hierarchy $ $\\$\equiv$\\ second-order arithmetic 
    \end{minipage}} (5,-1.65) ;

    \path (14.25,-1.65) edge[decorate,decoration={brace,amplitude=3pt}]
    node[below] {\begin{minipage}{4cm}\centering
    	\myquot{the third hierarchy} $ $\\ {$\equiv$}\\ third-order arithmetic 
    \end{minipage}} (10,-1.65) ;

  \end{tikzpicture}}
    
    \caption{The arithmetical hierarchy, the analytical hierarchy, and beyond.}
    \label{fig_hierarchies}
\end{figure}
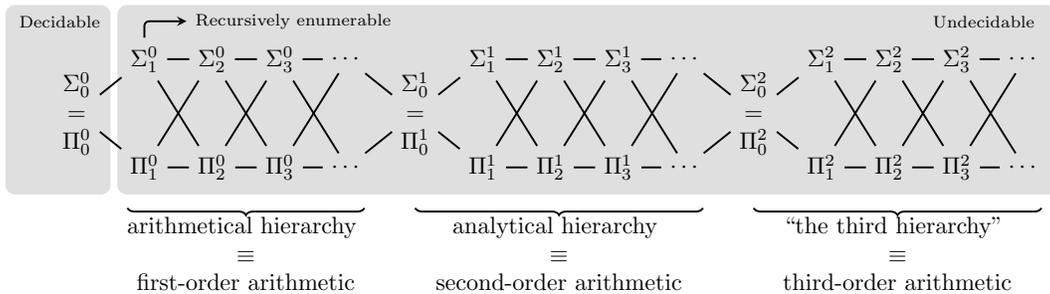

Our goal is to obtain a similarly clear picture for asynchronous logics, both for model-checking (for which, as mentioned above, only some lower bounds are known) and for satisfiability (for which almost nothing is known).
In this work, we focus on \ghltl, as it is one of the most expressive logics and subsumes many of the other logics.

\subsection*{Our Contributions}

First, we study the satisfiability problem. 
It is known that \hyltl satisfiability is $\Sigma_1^1$-complete.
Here, we show that satisfiability for \ghltl is not harder, i.e., also $\Sigma_1^1$-complete.
The lower bound is trivial, as \hyltl is a fragment of \ghltl. 
However, we show that adding stuttering, contexts, and quantification under the scope of temporal operators all do not increase the complexity of satisfiability.
Intuitively, the underlying reason is that \ghltl is a first-order linear-time logic, i.e., it is evaluated over a set of traces and Skolem functions for the existentially quantified variables map tuples of traces to traces.
We exploit this property to show that every satisfiable formula has a countable model. 
The existence of such a \myquot{small} model can be captured in $\Sigma_1^1$.
This should be contrasted with \hyctlstar, which only adds quantification under the scope of temporal operators to \hyltl, but with a branching-time semantics. 
In \hyctlstar, one can write formulas that have only uncountable models, which in turn allows one to encode existential third-order quantification~\cite{hyperltlsat}.
Consequently, \hyctlstar satisfiability is $\Sigma_1^2$-hard (and in fact $\Sigma_1^2$-complete) and thus much harder than that of \ghltl.
Let us also mention that these results settle the complexity of \foe satisfiability: it is $\Sigma_1^1$-complete as well. Here, the lower bound is inherited from \hyltl and the upper bound follows from the fact that \foe can be translated into \ghltl~\cite{DBLP:conf/fsttcs/BombardelliB0T24}.
Similarly, satisfiability for all fragments of \ghltl that include \hyltl is also $\Sigma_1^1$-complete (see Figure~\ref{fig_asynchlogics}).

Then, we turn our attention to the model-checking problem, which we show to be equivalent to truth in second-order arithmetic and therefore much harder than satisfiability. 
Here, we show that, surprisingly, the lower bounds already hold for the fragments~\hltls and \hltlc, i.e., adding one feature is sufficient, and adding the second one does not increase the complexity further. 
This result also has to be contrasted with \hyltl model-checking: adding stuttering or contexts takes the model-checking problem from \tower-complete~\cite{FinkbeinerRS15} (and thus decidable) to truth in second-order arithmetic. 

The intuitive reason for model-checking being much harder than satisfiability is that every satisfiable formula of \ghltl has a countable model while in the model-checking problem, one has to deal with possibly uncountable models, as (finite) transition systems may have uncountably many traces.
This allows us to encode second-order arithmetic in \hltls  and in \hltlc. 
A similar situation occurs for a fragment of second-order \hyltl~\cite{frz26}, but satisfiability is typically harder than or as hard as model-checking (see Table~\ref{table_firstgenresults}).

Finally, we study the finite-state satisfiability problem. Here, we show that it is also equivalent to truth in second-order arithmetic and therefore much harder than (plain) satisfiability. 
Again, the lower bounds already hold for the fragments~\hltls and \hltlc. 
The reason for finite-state satisfiability being much harder than satisfiability is again found in the fact that every satisfiable \ghltl sentence has a countable model (i.e., a countable set of traces), but we can write a \hyltl formula that has only finite-state models that have uncountably many traces. This in turn allows one to encode second-order quantification in \hltls and \hltlc.

Our results are collected in Table~\ref{table_ourresults}.
Note in particular that the simple fragments of \ghltl and \hltls are only simple when it comes to model-checking: Their satisfiability problems are as hard as for full \ghltl.

\begin{table}[t]
    \setlength{\tabcolsep}{3pt}
    \centering
    \caption{List of our complexity results for \ghltl and its fragments. \myquot{T2A-equiv.} stands for \myquot{equivalent to truth in second-order arithmetic}.}
    
    \renewcommand{\arraystretch}{1.1}
    \footnotesize
    \begin{tabular}{llll}
    
       Logic &  Satisfiability  & Finite-state Satisfiability  &  Model-checking \\
         \midrule
    
\ghltl & $\Sigma_1^1$-compl.~(Thm.~\ref{thm:sat}) & T2A-equiv.~(Thm.~\ref{thm:fss}) & T2A-equiv.~(Thm.~\ref{thm:mc}) \\
\rowcolor{lightgray!40}\hltls & $\Sigma_1^1$-compl.~(Cor.~\ref{cor:fragsat}) & T2A-equiv.~(Thm.~\ref{thm:fss}) & T2A-equiv.~(Thm.~\ref{thm:mc}) \\
\hltlc & $\Sigma_1^1$-compl.~(Cor.~\ref{cor:fragsat}) & T2A-equiv.~(Thm.~\ref{thm:fss}) & T2A-equiv.~(Thm.~\ref{thm:mc}) \\
\rowcolor{lightgray!40}simple \ghltl & $\Sigma_1^1$-compl.~(Rem~\ref{rem:simplesat}) & $\Sigma_1^0$-compl.~(Rem.~\ref{rem:simplefss}) & \tower-compl.~\cite{DBLP:conf/fsttcs/BombardelliB0T24,Rabe16diss} \\
simple \hltls & $\Sigma_1^1$-compl.~(Rem.~\ref{rem:simplesat}) & $\Sigma_1^0$-compl.~(Rem.~\ref{rem:simplefss}) & \tower-compl.~\cite{DBLP:conf/lics/BozzelliPS21,Rabe16diss} \\
    
    \end{tabular}
    \label{table_ourresults}
\end{table}

\section{Preliminaries}

The set of nonnegative integers is denoted by $\nats$. 
An alphabet is a nonempty finite set~$\Sigma$.
The set of infinite words over $\Sigma$ is $\Sigma^\omega$. Given $w \in \Sigma^\omega$ and $i \in\nats$, $w(i)$ denotes the $i$-th letter of $w$ (starting with $i =0$).
Let $\ap$ be a fixed finite set of propositions. A trace~$\tra$ is an element of $(\pow\ap)^\omega$, and a pointed trace is a pair $(\tra,i)$ consisting of a trace and a pointer~$i \in\nats$ pointing to a position of $\tra$.
It is initial if $i=0$.

A transition system is a tuple~$\TS=(V,E,I,\labfunc)$ where $V$ is a nonempty finite set of vertices, $E\subseteq V\times V$ is a set of directed edges, $I\subseteq V$ is a nonempty set of initial vertices, and $\labfunc\colon V\to \pow\ap$ is a labeling function that maps each vertex to a set of propositions. We require that each vertex has at least one outgoing edge.
A run of a transition system~$\TS$ is an infinite word~$v_0v_1\cdots\in V^\omega$ such that $v_0\in I$ and $(v_i,v_{i+1})\in E$ for all $i\in\nats$. The set~$\traces(\TS)=\{\labfunc(v_0)\labfunc(v_1)\cdots\mid v_0v_1\cdots\text{is a run of $\TS$}\}$ is the set of traces induced by $\TS$, which is always nonempty.

\subsection{LTL with Past}
\label{subsec_pltl}
The logic~\pltl~\cite{Pnueli77} extends classical \ltl~\cite{Pnueli77} by adding temporal operators to describe past events. The syntax of \pltl is defined as 
\[\theta \cceq \p \mid \lnot \theta \mid \theta \lor \theta \mid \X \theta \mid \theta \U \theta \mid \Y \theta \mid \theta \Since \theta,\]
where $\p\in\ap$. Here, $\Y$ (yesterday) and $\Since$ (since) are the past-variants of $\X$ (next) and $\U$ (until).
We use the usual syntactic sugar, e.g., $\land$, $\rightarrow$, $\leftrightarrow$, $\F$ (eventually), $\G$ (always), $\Once$ (once, the past-variant of eventually), and $\Hist$ (historically, the past-variant of always).

The semantics of \pltl is defined over pointed traces $(\tra,i)$ as 
\begin{itemize}
    \item $(\tra,i)\models \p $ if $ \p\in\tra(i)$,
    \item $(\tra,i)\models \lnot\theta $ if $ (\tra,i)\not\models\theta$,
    \item $(\tra,i)\models \theta_1\lor\theta_2 $ if $ (\tra,i)\models\theta_1 $ or $ (\tra,i)\models\theta_2$,
    \item $(\tra,i)\models \X\theta $ if $ (\tra,i+1)\models\theta$,
    \item $(\tra,i)\models \theta_1\U\theta_2 $ if there exists an $ i'\geq i $ such that $(\tra,i')\models\theta_2 $ and $(\tra,j)\models\theta_1$ for all $ i\le j < i'$,
    \item $(\tra,i)\models \Y\theta $ if $ i>0 $ and $(\tra,i-1)\models\theta$, and
    \item $(\tra,i)\models \theta_1\Since\theta_2 $ if there exists an $0 \le i'\leq i $ such that $(\tra,i')\models\theta_2 $ and $(\tra,j)\models\theta_1$ for all $ i'< j\leq i$.
\end{itemize}

\subsection{Stuttering}
Let $\setS$ be a finite set of \pltl formulas and $\tra$ a trace. 
We say that $i \in \nats$ is a proper $\setS$-changepoint of $\tra$ if either $i = 0$ or  $i >0$ and there is a $\theta \in \setS$ such that $(\tra,i) \models \theta$ if and only if $(\tra,i-1)\not\models\theta$, i.e., the truth value of $\theta$ at positions $i$ and $i-1$ differs.    
If $\tra$ has only finitely many proper $\setS$-changepoints (say $i$ is the largest one), then $i+1, i+2, \ldots$ are $\setS$-changepoints of $\tra$ by convention.
Thus, every trace has infinitely many $\setS$-changepoints.

The $\setS$-successor of a pointed trace~$(\tra,i)$ is the pointed trace~$\succ_\setS(\tra, i) = (\tra,i')$ where $i'$ is the minimal $\setS$-changepoint of $\tra$ that is strictly greater than $i$.
Dually, the $\setS$-predecessor of $(\tra,i)$ for $i >0$ is the pointed trace~$\pred_\setS(\tra,i) = (\tra,i')$ where $i'$ is the maximal $\setS$-changepoint of $\tra$ that is strictly smaller than $i$; $\pred_\setS(\tra,0)$ is undefined.

\begin{rem}
\label{rem_stuttering_specialcases}
Let $\tra$ be a trace over some set~$\ap'$ of propositions and let $\setS$ only contain \pltl formulas using propositions in $\ap''$ such that $\ap'\cap \ap'' = \emptyset$ (note that this is in particular satisfied, if $\setS=\emptyset$).
Then, $0$ is the only proper $\setS$-changepoint of $\tra$.
Hence, by our convention, every position of $\tra$ is a $\setS$-changepoint which implies $\succ_\setS(\tra, i) = (\tra,i+1)$ for all $i$ and $\pred_\setS(\tra, i) = (\tra,i-1)$ for all $i>0$.
\end{rem}

\subsection{Generalized HyperLTL with Stuttering and Contexts}
Recall that \hyltl extends \ltl with trace quantification in prenex normal form, i.e., first some traces are quantified and then an \ltl formula is evaluated (synchronously) over these traces.
\ghltl extends \hyltl by two new constructs to express asynchronous hyperproperties:
\begin{itemize}
    \item Contexts allow one to restrict the set of quantified traces over which time passes when evaluating a formula, e.g., $\C \ctx \psi$ for a nonempty finite set~$\ctx$ of trace variables expresses that $\psi$ holds when time passes synchronously on the traces bound to variables in $\ctx$, but time does not pass on variables bound to variables that are not in $\ctx$.
    \item Furthermore, temporal operators are labeled by sets~$\setS$ of \pltl formulas and time stutters w.r.t.\ $\setS$, e.g., $\X_\setS$ stutters to the $\setS$-successor on each trace in the current context.
\end{itemize}
Also, unlike \hyltl, \ghltl allows trace quantification under the scope of temporal operators and adds past operators. 

Fix a finite set~$\vars$ of trace variables. The syntax of \ghltl is given by the grammar
\[\phi\cceq \p_\tr \mid \lnot \phi\mid\phi\lor\phi\mid\C\ctx\phi\mid\X_\setS\phi\mid\phi\U_\setS\phi\mid\Y_\setS\phi\mid\phi\Since_\setS\phi\mid\exists\tr.\phi\mid\forall\tr.\phi,\]
where $\p\in\ap$, $\tr\in\vars$, $\ctx \in\pow\vars\setminus\set{\emptyset}$, and $\setS$ ranges over finite sets of \pltl formulas.
A sentence is a formula without free trace variables, which are defined as expected.
To declutter our notation, we write $\C{\tr_1, \ldots, \tr_n}$ for contexts instead of $\C{\set{\tr_1, \ldots, \tr_n}}$.
We measure the size of a formula in the number of syntactically distinct subformulas it has (including the \pltl formulas labeling temporal operators).

To define the semantics of \ghltl we need to introduce some notation.
A (pointed) trace assignment~$\asg\colon\vars\to (\pow\ap)^\omega\times\nats$ is a partial function that maps trace variables to pointed traces.
The domain of a trace assignment~$\asg$, written as $\dom{\asg}$, is the set of variables for which $\asg$ is defined.
For $\tr\in\vars$, $\tra\in(\pow\ap)^\omega$, and $i\in\nats$, the assignment~$\asg[\tr\mapsto(\tra,i)]$ maps $\tr$ to $(\tra,i)$ and each other~$\tr' \in\dom{\asg} \setminus\set{\tr}$ to $\asg(\tr')$.

Fix a set~$\setS$ of \pltl formulas and a context~$\ctx \subseteq \vars$. 
The $(\setS, \ctx)$-successor and the $(\setS, \ctx)$-pre\-de\-ces\-sor of a trace assignment~$\asg$ are the
trace assignments~$\succ_{(\setS, \ctx)}(\asg)$ and $\pred_{(\setS, \ctx)}(\asg(\tr))$ defined as
\[
\succ_{(\setS, \ctx)}(\asg)(\tr) = \begin{cases}
    \succ_\setS(\asg(\tr)) & \text{if $\tr \in \ctx$},\\
    \asg(\tr) &\text{otherwise,}
\end{cases}
\]
and
\[
\pred_{(\setS, \ctx)}(\asg)(\tr) = \begin{cases}
    \pred_\setS(\asg(\tr)) & \text{if $\tr \in \ctx$},\\
    \asg(\tr) &\text{otherwise.}
\end{cases}
\]
The $(\setS, \ctx)$-predecessor of $\asg$ is only defined when $\pred_\setS(\asg(\tr))$ is defined for all $\tr \in \ctx$.\footnote{Note that this definition differs from the one in the original paper introducing \ghltl~\cite{DBLP:conf/fsttcs/BombardelliB0T24}, which required that $\pred_\setS(\asg(\tr))$ is defined for every $\tr \in \dom{\asg}$. However, this is too restrictive, as the predecessor operation is only applied to traces~$\asg(\tr)$ with $\tr \in \ctx$. Furthermore, it leads to undesirable side effects, e.g., $\setL \models \phi$ may hold, but $\setL \models \forall \tr. \phi$ does not hold, where $\tr$ is a variable not occurring in $\phi$.}

Iterated $(\setS, \ctx)$-successors and $(\setS, \ctx)$-predecessors are defined as
\begin{itemize}
    \item $\succ^0_{(\setS, \ctx)}(\asg) = \asg$ and $\succ^{j+1}_{(\setS, \ctx)}(\asg) = \succ_{(\setS, \ctx)}(\succ^j_{(\setS, \ctx)}(\asg) )$, and
    \item $\pred^0_{(\setS, \ctx)}(\asg) = \asg$ and $\pred^{j+1}_{(\setS, \ctx)}(\asg) = \pred_{(\setS, \ctx)}(\pred^j_{(\setS, \ctx)}(\asg) )$, which may again be undefined.
\end{itemize}

Now, the semantics of \ghltl is defined with respect to a set~$\setL$  of traces, an assignment~$\asg$, and a context~$\ctx \subseteq \vars$ as
\begin{itemize}

    \item $(\setL, \asg,\ctx)\models\p_\tr$ if $\asg(\tr)=(\tra,i)$ and $ \p\in\tra(i)$,
    
    \item $(\setL, \asg,\ctx)\models\lnot\phi$ if $(\setL, \asg,\ctx)\not\models\phi$,
    
    \item $(\setL, \asg,\ctx)\models\phi_1\lor\phi_2$ if $(\setL, \asg,\ctx)\models\phi_1$ or $(\setL, \asg,\ctx)\models\phi_2$,
    
    \item $(\setL, \asg,\ctx)\models\C{\ctx'}\phi$ if $(\setL, \asg,\ctx')\models\phi$,
    
    \item $(\setL, \asg,\ctx)\models\X_\setS\phi$ if $(\setL, \succ_{(\setS,\ctx)}(\asg),\ctx)\models\phi$,
    
    \item $(\setL, \asg,\ctx)\models\phi_1\U_\setS\phi_2$  if there exists an $i\geq 0$ such that $(\setL, \succ^i_{(\setS,\ctx)}(\asg),\ctx)\models \phi_2$ and
    
    $(\setL, \succ^j_{(\setS,\ctx)}(\asg),\ctx)\models\phi_1$ for all $0\leq j<i$,

    \item $(\setL, \asg,\ctx)\models\Y_\setS\phi$ if $\pred_{(\setS,\ctx)}(\asg)$ is defined and $(\setL, \pred_{(\setS,\ctx)}(\asg),\ctx)\models\phi$,
    
    \item $(\setL, \asg,\ctx)\models\phi_1\Since_\setS\phi_2 $ if there exists an $i\geq 0$ such that $\pred^i_{(\setS,\ctx)}(\asg)$ is defined,\newline $(\setL, \pred^i_{(\setS,\ctx)}(\asg),\ctx)\models \phi_2$, and $(\setL, \pred^j_{(\setS,\ctx)}(\asg),\ctx)\models \phi_1$ for all $0\leq j<i$,

    \item $(\setL, \asg,\ctx)\models\exists\tr.\phi$ if there exists a trace~$\tra\in\setL$ such that $(\setL, \asg[\tr\mapsto(\tra,0)],\ctx)\models\phi$, and 
    
    \item $(\setL, \asg,\ctx)\models\forall\tr.\phi$ if for all traces~$\tra\in\setL$ we have $(\setL, \asg[\tr\mapsto(\tra,0)],\ctx)\models\phi$.

\end{itemize}
Note that quantification ranges over \emph{initial} pointed traces, even under the scope of a temporal~operator.

We say that a set~$\setL$ of traces satisfies a sentence~$\phi$, written~$\setL \models \phi$, if $(\setL, \emptyset,\vars)\models\phi$, where $\emptyset$ represents the variable assignment with empty domain.
Furthermore, a transition system~$\TS$ satisfies $\phi$, written $\TS \models \phi$, if $\traces(\TS) \models \phi$.

\begin{rem}
\label{remark_quantfreesemantics}
Let $\asg$ be an assignment, $\ctx$ a context, and $\phi$ a quantifier-free \ghltl formula. Then, we have $(\setL, \asg, \ctx) \models \phi$ if and only if $(\setL', \asg, \ctx) \models \phi$ for all sets~$\setL,\setL'$ of traces, i.e., satisfaction of quantifier-free formulas is independent of the set of traces, only the assignment~$\asg$ and the context~$\ctx$ matter.
Hence, we will often write $(\asg,\ctx) \models\phi$ for quantifier-free $\phi$.
\end{rem}

The logics~\hyltl~\cite{ClarksonFKMRS14}, \hltls~\cite{DBLP:conf/lics/BozzelliPS21} (HyperLTL with stuttering), and \hltlc~\cite{DBLP:conf/lics/BozzelliPS21} (HyperLTL with contexts) are syntactic fragments of \ghltl.
Let us say that a formula is past-free, if it does not use the temporal operators~$\Y$ and $\Since$, Then,
\begin{itemize}
    \item \hyltl is the fragment obtained by considering only past-free \ghltl formulas in prenex normal form, by disallowing the context operator~$\C\cdot$, and by indexing all temporal operators by the empty set,
    \item \hltls is the fragment obtained by considering only past-free \ghltl formulas in prenex normal form, by disallowing the context operator~$\C\cdot$, and by indexing all temporal operators by sets of past-free \pltl formulas, and 
    \item \hltlc is the fragment obtained by considering only past-free \ghltl formulas in prenex normal form and by indexing all temporal operators by the empty set.
\end{itemize}

\subsection{Arithmetic and Complexity Classes for Undecidable Problems} 
To capture the complexity of undecidable problems, we consider formulas of arithmetic, i.e., predicate logic with signature~$(+, \cdot, <, \in)$, evaluated over the structure~$\natsstruct$. 
A type~$0$ object is a natural number in $\nats$, and a type~$1$ object is a subset of $\nats$.
In the following, we use lower-case roman letters (possibly with decorations) for first-order variables, and upper-case roman letters (possibly with decorations) for second-order variables.
Every fixed natural number is definable in first-order arithmetic, so we freely use them as syntactic sugar. For more detailed definitions, we refer to \cite{Rogers87}.

Our benchmark is second-order arithmetic, i.e., predicate logic with quantification over type~$0$ and type~$1$ objects. 
Arithmetic formulas with a single free first-order variable define sets of natural numbers. In particular, $\Sigma_1^1$ contains the sets of the form 
\[\set{x \in\nats \mid \exists X_1 \subseteq \nats.\ \cdots \exists X_k\subseteq \nats.\ \psi(x, X_1, \ldots,X_k )},\] where $\psi$ is a formula of arithmetic with arbitrary quantification over type~$0$ objects (but no second-order quantifiers). 
Furthermore, truth in second-order arithmetic is the following problem: Given a sentence~$\phi$ of second-order arithmetic, do we have $\natsstruct\models\phi$?

\section{``Small'' Models for \texorpdfstring{\ghltl}{Generalized HyperLTL with Stuttering and Contexts}}

In this section, we prove that every satisfiable \ghltl sentence has a countable model, which is an important stepping stone for determining the complexity of the satisfiability problem in Section~\ref{sec_sat}. 
To do so, we first prove that for every \ghltl sentence~$\phi$ there is a $\ghltl$ sentence $\phi_p$ in prenex normal form that is \myquot{almost} equivalent in the following sense:
A set~$\setL$ of traces is a model of $\phi$ if and only if $\setL \cup \setL_\intdeco$ is a model of $\phi_p$, where $\setL_\intdeco$ is a countable set of traces that is independent of $\phi$.

Before we formally state our result, let us illustrate the obstacle we have to overcome, which traces are in $\setL_\intdeco$, and how they help to overcome the obstacle.
For the sake of simplicity, we use an always formula as example, even though it is syntactic sugar: The same obstacle occurs for the until operator, but there we would have to deal with the two subformulas of $\psi_1 \U_\setS \psi_2$ instead of the single one of $\G_\setS \psi$.

In a formula of the form~$\exists \tr. \G_\setS \exists \tr'.\ \psi$, the always operator acts like a quantifier too, i.e., the formula expresses that there is a trace~$\tra$ such that for \emph{every} position~$i$ on $\tra$, there is another trace~$\tra'$ (that may depend on $i$) so that $([\tr\mapsto (\tra,i), \tr'\mapsto(\tra',0)],\ctx)$ satisfies $\psi$, where $\ctx$ is the current context.
Obviously, moving the quantification of $\tr'$ before the always operator does not yield an equivalent formula, as $\tr'$ then no longer depends on $i$.
Instead, we simulate the implicit quantification over positions~$i$ by explicit quantification over natural numbers encoded by traces of the form~$\emptyset^i\set{\intprop}\emptyset^\omega$, where $\intprop \notin \ap$ is a fresh proposition.

Recall that $\setS$ is a set of \pltl formulas over $\ap$, i.e., Remark~\ref{rem_stuttering_specialcases} applies. Thus, the $i$-th $\setS$-successor of $(\emptyset^i\set{\intprop}\emptyset^\omega,0)$ is the unique pointed trace~$(\emptyset^i\set{\intprop}\emptyset^\omega, j)$ satisfying the formula~$\intprop$, which is the case for $j = i$. Thus, we can simulate the evaluation of the formula~$\psi$ at the $i$-th $(\setS,\ctx)$-successor by the formula~$\C{(\ctx \cup \{\tr_i\})\setminus\set{\tr'}}\F_\setS(\intprop_{\tr_i} \wedge \C \ctx \psi)$, where $\ctx$ is still the current context, i.e., we add $\tr_i$ to the current context to reach the $i$-th $\setS$-successor (over the extended context~$\ctx \cup \{\tr_i\}$) and then evaluate $\psi$ over the context~$\ctx$ that our original formula is evaluated over. But, to simulate the quantification of $\tr'$ correctly, we have to take it out of the scope for the eventually operator in order to ensure that the evaluation of $\psi$ takes place on the initial pointed trace, as we have moved the quantifier for $\tr'$ before the eventually.

To implement the same approach for the past operators, we also need to be able to let time proceed backwards from the position of $\emptyset^i\set{\intprop}\emptyset^\omega$ marked by $\intprop$ back to the initial position. 
To identify that position by a formula, we rely on the fact that $\neg \Y \true_{\tr}$ holds exactly at position~$0$ of the trace bound to $\tr$, where $\true_{\tr}$ is a shorthand for $p_{\tr} \vee \neg p_{\tr}$ for some $p \in\ap$.
Then, the $i$-th $\setS$-predecessor of the unique position marked by $\intprop$ is the unique position where $\neg \Y \true_{\tr}$ holds.
So, we define $\setL_\intdeco = \set{\emptyset^{i}\set{\intprop}\emptyset^\omega \mid i\in\nats}$. 

The proof of the next lemma shows that one can, in a similar way, move quantifiers over \emph{all} operators. 
\begin{lem}\label{lem:pnn}
Let $\ap$ be a finite set of propositions and $\intprop\not\in\ap$.
    For every \ghltl sentence~$\phi$ over $\ap$, there exists a polynomial-time computable \ghltl sentence~$\phi_p$ in prenex normal form over $\ap\cup \{\intprop\}$ such that for all nonempty $\setL  \subseteq (\pow\ap)^\omega$:
    $\setL \models\phi$ if and only if $\setL \cup \setL_\intdeco\models\phi_p$.
\end{lem}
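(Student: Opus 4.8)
The natural approach is structural induction on $\phi$, pulling quantifiers outward one operator at a time, while simultaneously translating each temporal operator into a formula that performs the "implicit quantification over positions" explicitly via the marker traces in $\setL_\intdeco$. The cleanest way to organise this is to define, by induction on the structure of $\phi$, a prenex sentence $\phi_p$ together with a record of which auxiliary position-encoding variables have been introduced, and prove the equivalence $\setL \models \phi \iff \setL \cup \setL_\intdeco \models \phi_p$ by a more general invariant about subformulas evaluated under arbitrary assignments and contexts. Concretely, I would prove a strengthened statement: for every subformula $\psi$ of $\phi$, every assignment $\asg$ into $\setL$, and every context $\ctx$, the value of $(\setL,\asg,\ctx)\models\psi$ is matched by $(\setL\cup\setL_\intdeco, \asg', \ctx)\models\psi'$, where $\psi'$ is the translation and $\asg'$ additionally binds the auxiliary variables $\tr_i$ to the correct marker traces $\emptyset^{k}\{\intprop\}\emptyset^\omega$ recording the "current offset" on each originally-quantified trace.

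**Key steps.** First I would set up notation for the translation: each temporal operator $\X_\setS$, $\U_\setS$, $\Y_\setS$, $\Since_\setS$ gets replaced, following the recipe sketched before the lemma, by a context-switching eventually/once over a fresh position variable that is quantified at the prenex. The crucial gadget is that, because $\setS$ consists only of \pltl formulas over $\ap$ (so Remark~\ref{rem_stuttering_specialcases} applies to the marker traces), stuttering successors on a marker trace $\emptyset^i\{\intprop\}\emptyset^\omega$ behave like ordinary successors, the position $i$ is uniquely pinpointed by $\intprop_{\tr_i}$, and position $0$ is uniquely pinpointed by $\neg\Y\true_{\tr_i}$. Second, I would handle the Boolean and context cases, which are immediate since neither introduces new auxiliary variables nor changes offsets. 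Third, for each temporal operator I would verify the semantic equivalence: moving to the $k$-th $(\setS,\ctx)$-successor of $\asg$ corresponds exactly to moving $\tr_i$ to the $k$-th position and reading off $\intprop$, under the extended context $(\ctx\cup\{\tr_i\})\setminus\{\text{inner quantified vars}\}$ as in the worked example; the until/since cases additionally need the "for all $0\le j<k$" conjunct, which is expressed by a nested historically/always over the marker's positions. Fourth, for the quantifier cases I would push $\exists\tr.$ and $\forall\tr.$ to the front, which is legitimate precisely because the translated body never re-binds $\tr$ and because quantification always ranges over initial pointed traces regardless of the surrounding temporal context (as the paper emphasises). Finally, I would check that $\phi_p$ is computable in polynomial time and that it does not accidentally constrain $\setL$: since $\intprop\notin\ap$, the marker traces are invisible to atomic formulas over $\ap$, so adding $\setL_\intdeco$ cannot create or destroy satisfaction of $\phi$-subformulas read on traces of $\setL$, and conversely the translation only ever "uses" a trace from $\setL_\intdeco$ through the fresh proposition $\intprop$.

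**Main obstacle.** The delicate point is bookkeeping the contexts correctly through nested temporal operators and nested quantifiers at once. When a temporal operator is translated, the auxiliary variable $\tr_i$ must be added to the context to let time pass on the marker, but it must then be removed again (and the original context restored via a $\C\ctx\cdot$) before evaluating the body, and if the body itself contains further temporal operators or quantifiers, their translations introduce their own auxiliary variables that must be threaded through without clobbering the offsets already recorded. Getting the invariant on $\asg'$ precisely right — which auxiliary variables are bound, to which marker traces, and under which context each piece of the translated formula is evaluated — and then verifying that the iterated-successor semantics of $\U_\setS$/$\Since_\setS$ matches the iterated-next behaviour on a marker trace under the enlarged context, is where all the real work lies; the individual operator cases are then routine unwindings of the definitions in Section~\ref{subsec_pltl} and the \ghltl semantics. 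A secondary subtlety is the past direction: one must ensure $\pred^i_{(\setS,\ctx)}$ being undefined (running off the front of a trace) corresponds exactly to there being no position on the marker trace satisfying $\neg\Y\true_{\tr_i}$ that is reachable by going backward the required number of steps, which is handled uniformly by the $\neg\Y\true$ anchor at position $0$.
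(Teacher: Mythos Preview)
Your approach shares the paper's key devices---encoding step counts via marker traces in $\setL_\intdeco$, exploiting Remark~\ref{rem_stuttering_specialcases} so that stuttering on markers degenerates to ordinary successors, and manipulating contexts so that a freshly quantified trace is not advanced by an outer temporal operator. There is, however, a genuine gap: you never restrict the range of the quantifiers. In $\phi_p$ evaluated over $\setL\cup\setL_\intdeco$, every trace quantifier ranges over the enlarged set, so you must (i) relativise every quantifier inherited from $\phi$ to traces \emph{not} containing $\intprop$, and (ii) relativise every auxiliary position-marker quantifier to traces that \emph{do} contain $\intprop$. Your claim that ``adding $\setL_\intdeco$ cannot create or destroy satisfaction of $\phi$-subformulas read on traces of $\setL$'' speaks only to atomic subformulas, not to quantifiers: already for $\phi=\forall\tr.\,\G_\emptyset p_\tr$ with $p\in\ap$, the sentence may hold on $\setL$ yet necessarily fails on $\setL\cup\setL_\intdeco$ since no trace in $\setL_\intdeco$ contains $p$. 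The paper handles both restrictions via decorated quantifiers $\existsnonint/\forallnonint$ (guarded by $\G_\emptyset\neg\intprop_\tr$) and $\existsint/\forallint$ (guarded by $\F_\emptyset\intprop_\tr$), together with a conjunct~$\alpha_\intdeco$ forcing the $\intprop$-carrying traces in the model to be exactly~$\setL_\intdeco$.

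There is also an organisational difference. The paper does not perform one structural pass translating every temporal operator; instead it gives \emph{local rewriting rules} that push a single quantifier~$Q\tr$ over a single operator at a time. For $\U_\setS$ and $\Since_\setS$ the rule introduces two explicit position markers $\tr_i$ (for the existential step) and $\tr_j$ (for the universal ``all earlier steps''), with prefix order $\existsint\tr_i.\,\quantnonint\tr.\,\forallint\tr_j$ or $\existsint\tr_i.\,\forallint\tr_j.\,\quantnonint\tr$ according to whether $Q\tr$ sits in the right or the left argument---this is what gets the dependencies right, and is not the ``nested historically/always over the marker's positions'' you sketch. Relatedly, your invariant that $\asg'$ records ``the current offset on each originally-quantified trace'' is not the right picture: a marker records the number of $(\setS,\ctx)$-\emph{steps} taken, a single count shared by all traces in the context even though their absolute pointer advances differ. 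The paper's per-rule formulation sidesteps having to articulate any such global invariant.
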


\begin{proof}
First, let us show how to express the set~$\setL_\intdeco$ of traces encoding positions on traces (i.e.,~natural numbers).
Let $\alpha_{\intdeco}$ be the conjunction of the following formulas (where $\tr$ and $\tr'$ are fresh variables not appearing in $\phi$):
\begin{itemize}
    
    \item $\forall \tr. (\F_\emptyset \intprop_\tr) \rightarrow [ ((\neg \intprop_\tr )\U_\emptyset (\intprop_\tr \wedge \X_\emptyset\G_\emptyset\neg\intprop_\tr)) \wedge \G_\emptyset\bigwedge_{p \in \ap} \neg p_\tr ]$: if a trace contains a $\intprop$, then it contains exactly one $\intprop$ and no other proposition holds anywhere in the trace.

    \item $\exists\tr. \intprop_\tr $: the trace~$\set{\intprop}\emptyset^\omega$ is in the model (assuming the previous formula holds).

    \item $\forall \tr. \exists \tr'. (\F_\emptyset \intprop_\tr) \rightarrow (\F_\emptyset (\intprop_\tr \wedge \X_\emptyset\intprop_{\tr'}))$: if $\emptyset^i\set{\intprop}\emptyset^\omega$ is in the model, then also $\emptyset^{i+1}\set{\intprop}\emptyset^\omega$ (again assuming the first formula holds).

\end{itemize}
Hence, every model of the conjunction~$\alpha_\intdeco$ must contain the traces in $\setL_\intdeco$, but no other traces containing $\intprop$.

Next, let us introduce the shorthands~$\existsint \tr.\psi$ and $\forallint \tr.\psi$ for $\exists \tr. (\F_\emptyset \intprop_\tr) \land\psi$ and $\forall \tr. (\F_\emptyset \intprop_\tr) \rightarrow\psi$, i.e., when ranging over $\setL \cup \setL_\intdeco$ as above, then the quantifiers~$\existsint$ and $\forallint$ only range over traces in $\setL_\intdeco$ encoding positions.
Conversely, to ensure that the quantifiers coming from $\phi$ only range over traces not in $\setL_\intdeco$, we use the shorthands~$\existsnonint \tr.\psi$ and $\forallnonint \tr.\psi$ for $\exists \tr. (\G_\emptyset \neg\intprop_\tr) \land\psi$ and $\forall \tr. (\G_\emptyset \neg\intprop_\tr) \rightarrow\psi$.
Now, let $\phi'$ denote the \ghltl sentence obtained from $\phi$ by inductively replacing each subformula~$\exists \tr. \psi$ by $\existsnonint \tr. \psi$ and each subformula~$\forall \tr. \psi$ by $\forallnonint \tr. \psi$. 
Then, we have $\setL \models \phi$ if and only if $\setL \cup \setL_\intdeco \models \phi'$.

In the following, we present equivalences for all constructs in the syntax of \ghltl that allow one to move quantifiers to the front of a formula. These equivalences hold for all sets~$\setL \cup \setL_\intdeco$ with nonempty~$\setL$, for all assignments~$\asg$, and for all contexts~$\ctx$.
But note that we will use the shorthands~$\existsint$, $\forallint$, $\existsnonint$, and $\forallnonint$ in the equivalences, i.e., when applying these equivalences to $\phi'$ from above, the resulting sentence is not yet in prenex normal form with respect to the quantifiers~$\exists$ and $\forall$.
However, the resulting sentence does not have any quantifier~$\exists$ and $\forall$ under the scope of a temporal operator or a context. 
Hence, it can be brought into prenex normal form by moving the quantifiers $\exists$ and $\forall$ over the Boolean operators. 

In the following, $Q$ stands for a quantifier of the form~$\exists$ or $\forall$.

\begin{itemize}

    \item $ (\setL \cup \setL_\intdeco , \asg, \ctx) \models \lnot \quant \tr.\psi$ if and only if $(\setL \cup \setL_\intdeco, \asg, \ctx) \models \olquantnonint \tr.\lnot \psi$, where $\olexistsnonint = \forallnonint$ and $\olforallnonint = \existsnonint$.

    \item $(\setL \cup \setL_\intdeco, \asg, \ctx) \models (\quant \tr.\psi_1) \lor \psi_2 $ if and only if $(\setL \cup \setL_\intdeco, \asg, \ctx) \models\quantnonint \tr. (\psi_1 \lor \psi_2)$ where we assume w.l.o.g.\ that $\tr$ is not a free variable in $\psi_2$ (which can always be achieved by renaming $\tr$ in $\psi_2$).
        
    \item $(\setL \cup \setL_\intdeco, \asg, \ctx) \models \C C'\quant \tr.\psi $ if and only if $(\setL \cup \setL_\intdeco, \asg, \ctx) \models\quantnonint \tr.\C C'\psi$ for every $C' \subseteq \vars$.
    
    \item $(\setL \cup \setL_\intdeco, \asg, \ctx) \models \X_\setS\quant \tr.\psi $ if and only if $(\setL \cup \setL_\intdeco, \asg, \ctx) \models \quantnonint \tr. \C {\ctx\setminus\set{\tr}}\X_\setS\C \ctx\psi$. Here, we use contexts in order to capture the fact that quantification of $\tr$ ranges over initial pointed traces:
    When moving the quantification of $\tr$ in front of the next operator we use the context~$\ctx\setminus\set{\tr}$ to ensure that time does not pass on the trace bound to $\tr$ when evaluating the next operator. After the next operator, we restore the context~$\ctx$ again.
   
\end{itemize}
In the following, we will use similar constructions for the until operator to correctly move the quantification in front of the temporal operator.
\begin{itemize}
        
    \item $(\setL \cup \setL_\intdeco, \asg, \ctx) \models 
    \psi_1\U_\setS (\quant \tr.\psi_2)$ if and only if 
    \begin{align*}
    (\setL \cup \setL_\intdeco, \asg, \ctx) \models{}&{}
    \existsint \tr_i. \quantnonint \tr. \forallint \tr_j. \\
    {}&{}\big[\,\C{(\ctx\cup\set{\tr_i}) \setminus\set{\tr}}\F_\setS (\intprop_{\tr_i} \wedge \C{\ctx} \psi_2) \big] \wedge\\ 
    {}&{}\big[\, (\C{{\tr_i,\tr_j}}\F_\emptyset(\intprop_{\tr_j}\land\X_\emptyset\F_\emptyset \intprop_{\tr_i})) \rightarrow\\
    {}&{}\qquad\C{(\ctx\cup\set{\tr_j}) \setminus\set{\tr}}\F_\setS (\intprop_{\tr_j} \wedge \C{\ctx}\psi_1) \big]   , 
    \end{align*}
    where we assume w.l.o.g.\ that $\tr$ is not a free variable in $\psi_1$ and where $\tr_i$ and $\tr_j$ are fresh variables.    
    
    \item $(\setL \cup \setL_\intdeco, \asg, \ctx) \models  (\quant \tr.\psi_1)\U_\setS \psi_2 $ if and only if 
    \begin{align*}
        (\setL \cup \setL_\intdeco, \asg, \ctx) \models{}&{}
       \existsint \tr_i.\forallint \tr_j. \quantnonint \tr.\\
       {}&{}\big[\,\C{(\ctx\cup\set{\tr_i}) \setminus\set{\tr}}\F_\setS (\intprop_{\tr_i} \wedge \C{\ctx} \psi_2) \big] \wedge\\
       {}&{}\big[\, (\C{{\tr_i,\tr_j}}\F_\emptyset(\intprop_{\tr_j}\land\X_\emptyset\F_\emptyset \intprop_{\tr_i})) \rightarrow\\
       {}&{}\qquad\C{(\ctx\cup\set{\tr_j}) \setminus\set{\tr}}\F_\setS (\intprop_{\tr_j} \wedge \C{\ctx}\psi_1)\big]   , 
    \end{align*}
    where we now assume w.l.o.g.\ that $\tr$ is not a free variable in $\psi_2$ and where $\tr_i$ and $\tr_j$ are fresh variables. 
    \end{itemize}
    
    In both cases where we move a quantifier out of the until, we make the quantification over the positions implicit in the semantics of the until explicit and use $\F_{\setS}$ to reach the position where $\intprop_{\tr_i}$ ($\intprop_{\tr_j}$, respectively) holds, which must then reach the $i$-th ($j$-th) $\setS$-successor on the traces in the context~$\ctx$. A usual, we must take $\tr$ out of these contexts but add $\tr_i$ ($\tr_j$). Finally, the formula~$\C{{\tr_i,\tr_j}}\F_\emptyset(\intprop_{\tr_j}\land\X_\emptyset\F_\emptyset \intprop_{\tr_i})$ holds if and only if the position assigned to $\tr_i$ is strictly greater than that assigned to $\tr_j$.
    Note that, in the first case, we quantify $\tr_i$ before $\tr$ and in the second case, we quantify both $\tr_i$ and $\tr_j$ before $\tr$ to capture the correct dependencies of the variables.


 %
 The past modalities are translated as their corresponding future variants, we just have to let time pass \myquot{backwards} on the traces in $\setL_\intdeco$.
\begin{itemize}
    
    \item $(\setL \cup \setL_\intdeco, \asg, \ctx) \models \Y_\setS\quant \tr.\psi $ if and only if $(\setL \cup \setL_\intdeco, \asg, \ctx) \models  \quantnonint \tr.\C {\ctx\setminus\set{\tr}}\Y_\setS\C \ctx\psi$. 
       
        \item $(\setL \cup \setL_\intdeco, \asg, \ctx) \models \psi_1\Since_\setS (\quant \tr.\psi_2) $ if and only if 
    \begin{align*}
(\setL \cup \setL_\intdeco, \asg, \ctx) \models{}&{}
    \existsint \tr_i. \quantnonint \tr. \forallint \tr_j. \\
    {}&{}\C{\tr_i}\F_\emptyset\Big[\intprop_{\tr_i} \wedge
    \C{(\ctx\cup\set{\tr_i}) \setminus\set{\tr}}\Once_\setS ( \neg\Y\true_{\tr_i} \wedge \C{\ctx} \psi_2) \Big] \wedge\\ 
    {}&{}\Big[ (\C{{\tr_i,\tr_j}}\F_\emptyset(\intprop_{\tr_j}\land\X_\emptyset\F_\emptyset \intprop_{\tr_i})) \rightarrow \\
    {}&{}\qquad \hfill
    \C{\tr_i}\F_\emptyset\big[\intprop_{\tr_i} \wedge
    \C{(\ctx\cup\set{\tr_j}) \setminus\set{\tr}}\Once_\setS ( \neg\Y\true_{\tr_j} \wedge \C{\ctx}\psi_1) \big]\Big]   , 
    \end{align*}
    where we assume w.l.o.g.\ that $\tr$ is not a free variable in $\psi_1$ and where $\tr_i$ and $\tr_j$ are fresh variables.    
    
    \item $(\setL \cup \setL_\intdeco, \asg, \ctx) \models (\quant \tr.\psi_1)\Since_\setS \psi_2$ if and only if 
    \begin{align*}
       (\setL \cup \setL_\intdeco, \asg, \ctx) \models {}&{}
       \existsint \tr_i.\forallint \tr_j. \quantnonint \tr. \\
          {}&{} \C{\tr_i}\F_\emptyset\Big[\intprop_{\tr_i} \wedge
    \C{(\ctx\cup\set{\tr_i}) \setminus\set{\tr}}\Once_\setS ( \neg\Y\true_{\tr_i} \wedge \C{\ctx} \psi_2) \Big] \wedge\\
    {}&{}\Big[ (\C{{\tr_i,\tr_j}}\F_\emptyset(\intprop_{\tr_j}\land\X_\emptyset\F_\emptyset \intprop_{\tr_i})) \rightarrow \\
    {}&{}\qquad
    \C{\tr_i}\F_\emptyset\big[\intprop_{\tr_i} \wedge
    \C{(\ctx\cup\set{\tr_j}) \setminus\set{\tr}}\Once_\setS ( \neg\Y\true_{\tr_j} \wedge \C{\ctx}\psi_1) \big]\Big]   , 
    \end{align*}
    where we now assume w.l.o.g.\ that $\tr$ is not a free variable in $\psi_2$ and where $\tr_i$ and $\tr_j$ are fresh variables. 

\end{itemize}

Note that the current context~$\ctx$ that the formula is evaluated on is used to construct the equivalent formulas.
However, this is purely syntactic information, i.e., given a sentence~$\phi$ and a subformula~$\psi$ one can determine the context that $\psi$ is evaluated on: It is the last context that appears on the path from the root to the subformula~$\psi$ in the syntax tree of $\phi'$ (and is $\vars$ if there is no context operator on the path).

Thus, we can use these equivalences as rewriting rules to turn the \ghltl sentence~$\phi'$ obtained from $\phi$ above in polynomial-time (recall that we measure the size of a formula in the number of distinct subformulas) into a sentence~$\phi''$, which has no quantifier under the scope of a temporal operator.
Thus, bringing $\phi'' \wedge \alpha_\intdeco$ into prenex normal form (using the standard rules for Boolean operators) yields a sentence~$\phi_p$ with the desired properties.
\end{proof}

The previous lemma shows that for every \ghltl sentence~$\phi$, there exists a \ghltl sentence~$\phi_p$ in prenex normal form that is almost equivalent, i.e., equivalent modulo adding the countable set~$\setL_\intdeco$ of traces to the model. 
Hence, showing that every satisfiable sentence in prenex normal form has a countable model implies that every satisfiable sentence has a countable model.
Thus, we focus on prenex normal form sentences to study the cardinality of models of \ghltl.

The proof of the next lemma shows that every model of a sentence~$\varphi$ contains a countable subset that is closed under the application of Skolem functions, generalizing a similar construction for \hyltl~\cite{FZ17}.
Such a set is also a model of $\varphi$.

\begin{lem}\label{lem:countpnn}
    Every satisfiable \ghltl sentence~$\phi$ in prenex normal form has a countable model.
\end{lem}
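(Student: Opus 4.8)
The plan is to take a model $\setL$ of $\phi = Q_1 \tr_1 \cdots Q_n \tr_n.\, \psi$ (with $\psi$ quantifier-free) and carve out a countable submodel by closing under Skolem functions for the existential quantifiers. First I would fix, for every existential quantifier $Q_k = \exists$, a Skolem function $\skolem_k$ that takes the traces already chosen for the universally quantified variables preceding $\tr_k$ and returns a witnessing trace in $\setL$; for an arity-$m$ Skolem function this is a map $\setL^m \to \setL$. I would then build a countable set $\setL_0 \subseteq \setL$ in stages: start with one arbitrary trace of $\setL$ (which is nonempty), and at each stage add, for every finite tuple of traces already present and every Skolem function, the corresponding Skolem value. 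Since there are finitely many Skolem functions (one per existential quantifier) each of finite arity, and we start countable, the union $\setL^* = \bigcup_j \setL_j$ over all stages $j \in \nats$ is still countable and is closed under all $\skolem_k$.

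The core claim is then that $\setL^* \models \phi$. This I would prove by showing that the same Skolem functions (restricted to $\setL^*$) witness the truth of $\phi$ over $\setL^*$. Concretely, consider the game-theoretic reading: the universal player picks traces from $\setL^*$ and the existential player responds via the $\skolem_k$; closure of $\setL^*$ guarantees these responses land back in $\setL^*$, so we obtain a complete assignment $\asg$ with all traces in $\setL^* \subseteq \setL$. It remains to argue that the quantifier-free kernel~$\psi$ has the same truth value under $\asg$ whether evaluated with respect to $\setL^*$ or with respect to $\setL$. This is exactly Remark~\ref{remark_quantfreesemantics}: satisfaction of a quantifier-free \ghltl formula depends only on the assignment and the context, not on the underlying set of traces. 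Hence $(\setL^*, \asg, \vars) \models \psi$ iff $(\setL, \asg, \vars) \models \psi$, and since the latter holds by choice of the Skolem functions for the original model, we get $(\setL^*, \emptyset, \vars) \models \phi$.

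The main obstacle is getting the closure bookkeeping exactly right in the presence of alternating quantifiers: a Skolem function for $\exists \tr_k$ must be fed precisely the traces chosen for the \emph{universally} quantified variables among $\tr_1, \ldots, \tr_{k-1}$ (the values of earlier existentials are themselves determined by those universal choices, so they need not be passed separately), and one has to check that closing $\setL^*$ under these finitely many finite-arity functions genuinely suffices — i.e., that every play of the universal player inside $\setL^*$ is answered within $\setL^*$. This is a routine transfinite-free induction once the dependency structure is fixed, but it is where the argument must be stated carefully; the rest follows from Remark~\ref{remark_quantfreesemantics} and the observation that a countable union of countable sets, built from finitely many finite-arity operations, stays countable. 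Note that stuttering and contexts play no role here: they only affect the quantifier-free kernel, whose semantics is set-independent.
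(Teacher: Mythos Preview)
Your proposal is correct and follows essentially the same approach as the paper: fix Skolem functions for the existential quantifiers depending only on the preceding universal choices, close an arbitrary singleton under these finitely many finite-arity functions in $\omega$ stages to obtain a countable $\setL^* \subseteq \setL$, and invoke Remark~\ref{remark_quantfreesemantics} to transfer satisfaction of the quantifier-free kernel from $\setL$ to $\setL^*$. The paper's proof is organized identically, with the only cosmetic difference that it notes each stage is in fact finite (not merely countable).
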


\begin{proof}
    Let $\phi=Q_1\tr_1.\cdots Q_k\tr_k.\psi$ be a $\ghltl$ formula in prenex normal form with quantifier-free~$\psi$ and $\set{Q_1, \ldots, Q_k} \subseteq \set{\exists,\forall}$.
    Let $\setL $ be a model of $\phi$.
    If $\setL $ is already countable, then there is nothing to show. 
    So, let us assume that $\setL $ is uncountable.

    For each existentially quantified~$\tr$ in $\phi$, let $U_\tr$ be the set of variables quantified universally before $\tr$.
    For every such $\tr$, say with $U_\tr = \set{ y_1,\ldots, y_{\size{U_\tr}}}$, there exists a Skolem function~$f_\tr\colon \setL ^{\size{U_\tr}}\to \setL $ such that for each assignment~$\asg$ mapping each existentially quantified variable~$\tr$ to $f_\tr(\asg(y_1),\ldots, \asg(y_{\size{U_\tr}}))$, we have $( \asg,\vars)\models\psi$.

    We fix such Skolem functions~$f_\tr$ for the rest of the proof. For $\setL'\subseteq \setL $ and an existentially quantified variable~$\tr$, we define 
    \[f_\tr(\setL ')=\{f_\tr(\tra_1,\ldots, \tra_{\size{U_\tr}})\mid \tra_1,\ldots , \tra_{\size{U_\tr}} \in \setL '\}.
    \]
    Note that if $\setL '$ is finite, then $f_\tr(\setL ')$ is also finite.
    Now, define $\setL _0=\{\tra\}$ for some $\tra\in \setL $ and $\setL_{i+1}=\setL_i\cup\bigcup_{\tr} f_\tr(\setL _i)$ for all $i\in\nats$, where $\tr$ ranges over all existentially quantified variables.
    We show that $\setL _\omega=\bigcup_{i\in\nats} \setL _i$ is a countable model of $\phi$.
    In fact, $\setL _\omega$ is trivially countable, as it is a countable union of finite sets.
    
    By definition of Skolem functions, every assignment~$\asg$ that maps universally quantified variables to traces in $\setL _\omega$ and uses the Skolem functions for existentially quantified variables satisfies $(\setL,\asg,\vars)\models\psi$. Thus, we also have $(\setL_\omega,\asg,\vars)\models\psi$, as there are no quantifiers in $\psi$.
    Now, using an induction over the quantifier prefix of $\phi$ (from the inside out), one can prove that $(\setL_\omega,\asg,\vars)\models Q_j\tr_j\cdots Q_k\tr_k\psi$, i.e., $\setL _\omega$ is a model of $\phi$.
    Here, we use the fact that $\setL _\omega$ is closed under applications of the Skolem functions, i.e., whenever~$\tra_1,\ldots, \tra_{\size{U{\tr}}}$ are in $\setL _\omega$, then $f_\tr(\tra_1,\ldots, \tra_{\size{U{\tr}}})$ is also in $\setL _\omega$.
\end{proof}

By combining Lemma~\ref{lem:pnn} and Lemma~\ref{lem:countpnn} we obtain our main result of this section.

\begin{thm}\label{thm:countable}
    Every satisfiable \ghltl formula has a countable model.
\end{thm}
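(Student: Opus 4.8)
The plan is to combine the two lemmas proved earlier in this section in the obvious way. Given an arbitrary satisfiable \ghltl formula $\phi$, I would first close it into a sentence if necessary (a formula with free trace variables is satisfiable iff its existential closure is, and closing adds only a constant number of quantifiers), so w.l.o.g.\ $\phi$ is a \ghltl sentence with a model $\setL$; since models are by definition nonempty sets of traces, $\setL$ is a nonempty subset of $(\pow\ap)^\omega$.

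Next, I would invoke Lemma~\ref{lem:pnn}: there is a \ghltl sentence $\phi_p$ in prenex normal form over $\ap \cup \set{\intprop}$ such that $\setL \models \phi$ iff $\setL \cup \setL_\intdeco \models \phi_p$. Since $\setL \models \phi$, we get that $\setL \cup \setL_\intdeco$ is a model of $\phi_p$, so $\phi_p$ is satisfiable. Now I would apply Lemma~\ref{lem:countpnn} to $\phi_p$: being a satisfiable prenex-normal-form \ghltl sentence, it has a countable model $\setL'$.

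Finally, I would pull this countable model back through Lemma~\ref{lem:pnn} to obtain a countable model of $\phi$. Concretely, since $\setL'$ is a model of $\phi_p$, and $\setL'$ is nonempty, I would take $\setL'' = \setL' \setminus \setL_\intdeco$ — i.e.\ the traces of $\setL'$ not containing the fresh proposition $\intprop$ — and argue via the statement of Lemma~\ref{lem:pnn} (applied to the set $\setL''$, using that $\setL' = \setL'' \cup \setL_\intdeco'$ for some $\setL_\intdeco' \subseteq \setL_\intdeco$, and that the first conjunct $\alpha_\intdeco$ forces the $\intprop$-traces to be exactly $\setL_\intdeco$) that $\setL''$ is a model of $\phi$. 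If one prefers to avoid this slight subtlety about which $\intprop$-traces appear, one can instead observe directly that $\setL' \cup \setL_\intdeco$ is countable and is a model of $\phi_p$ (the first conjunct of $\phi_p$ is still satisfied, and the rest is monotone in the relevant way), hence $(\setL' \cup \setL_\intdeco) \setminus \setL_\intdeco$ is a model of $\phi$; in either case $\setL''$ is countable as a subset of the countable set $\setL'$ (or $\setL' \cup \setL_\intdeco$), and nonempty whenever $\phi$ is satisfiable, which is all we need.

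The only genuinely delicate point — and hence the step I would spend the most care on — is making sure the direction of Lemma~\ref{lem:pnn} I use is the right one and that the set I extract is nonempty: Lemma~\ref{lem:pnn} is an equivalence quantified over \emph{all nonempty} $\setL \subseteq (\pow\ap)^\omega$, so to conclude $\setL'' \models \phi$ I must know $\setL''$ is nonempty, which follows because $\phi$ is satisfiable and therefore has \emph{some} model over $\ap$, and $\alpha_\intdeco$ being part of $\phi_p$ guarantees the $\intprop$-free part of any model of $\phi_p$ is exactly such a model. Everything else is a purely mechanical chaining of the two lemmas, so the proof is short.
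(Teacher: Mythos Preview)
Your approach is exactly the paper's: combine Lemma~\ref{lem:pnn} and Lemma~\ref{lem:countpnn}. The paper itself says no more than that, so your level of detail already exceeds it.

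That said, your handling of the ``delicate point'' does not actually close the gap. You need $\setL'' = \setL' \setminus \setL_\intdeco$ to be nonempty in order to invoke Lemma~\ref{lem:pnn} in the reverse direction, and your justification is circular: you argue that the $\intprop$-free part of any model of $\phi_p$ is a model of $\phi$, but that conclusion itself requires nonemptiness. In fact, $\setL_\intdeco$ alone \emph{can} be a model of $\phi_p$ --- take $\phi = \forall \tr.\ p_\tr$: then $\phi'$ becomes $\forall \tr.\ (\G_\emptyset \neg\intprop_\tr) \rightarrow p_\tr$, which is vacuously satisfied over $\setL_\intdeco$, and $\alpha_\intdeco$ is also satisfied. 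So if Lemma~\ref{lem:countpnn} handed you $\setL' = \setL_\intdeco$, your extraction would yield $\setL'' = \emptyset$.

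The fix is to use the \emph{proof} of Lemma~\ref{lem:countpnn}, not just its statement: the countable model~$\setL_\omega$ constructed there is a subset of the given model and is seeded with an arbitrary trace~$\tra$ from it. Apply this to the model~$\setL \cup \setL_\intdeco$ of $\phi_p$ and choose the seed $\tra \in \setL$ (which is nonempty by assumption). Then $\tra \in \setL_\omega \setminus \setL_\intdeco$, so $\setL''$ is nonempty and Lemma~\ref{lem:pnn} applies. Alternatively, prepend a dummy existential quantifier $\exists \tr_0$ to $\phi$ before starting; this is semantically harmless over nonempty models and, after the translation, forces every model of $\phi_p$ to contain a $\intprop$-free trace.
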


\section{\texorpdfstring{\ghltl}{Generalized HyperLTL with Stuttering and Contexts} Satisfiability}
\label{sec_sat}

In this section, we study the satisfiability problem for \ghltl and its fragments: 
Given a sentence~$\phi$, is there a nonempty set~$\setL$ of traces such that $\setL \models \phi$?
Due to Theorem~\ref{thm:countable}, we can restrict ourselves to countable models.
The proof of the following theorem shows that the existence of such a model can be captured in arithmetic, generalizing constructions developed for \hyltl~\cite{hyperltlsat} to handle stuttering and contexts.

\begin{thm}\label{thm:sat}
    The \ghltl satisfiability problem is $\Sigma^1_1$-complete with respect to polynomial-time reductions.
\end{thm}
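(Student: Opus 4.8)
The theorem has two parts: hardness and membership. The hardness part is trivial, as already noted in the introduction: \hyltl is a syntactic fragment of \ghltl, and \hyltl satisfiability is $\Sigma_1^1$-hard~\cite{hyperltlsat}, so the same lower bound transfers verbatim. Hence the work is entirely in showing that \ghltl satisfiability is in $\Sigma_1^1$. By Theorem~\ref{thm:countable}, a \ghltl sentence~$\phi$ is satisfiable if and only if it has a \emph{countable} model. By Lemma~\ref{lem:pnn}, we may furthermore assume $\phi$ is in prenex normal form (passing to $\phi_p$ over $\ap \cup \{\intprop\}$ changes neither satisfiability nor the membership class, since $\phi$ is satisfiable iff $\phi_p$ is, and the reduction is polynomial-time). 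So it suffices to express in $\Sigma_1^1$: ``there is a nonempty countable set~$\setL$ of traces such that $\setL \models \phi$'' for $\phi = Q_1 \tr_1 \cdots Q_k \tr_k.\ \psi$ with quantifier-free~$\psi$.

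**Encoding the model.** The key idea, following \cite{hyperltlsat}, is to encode a countable set of traces by a single type~$1$ object. A trace~$\tra \in (\pow\ap)^\omega$ is a function $\nats \to \pow\ap$, which (since $\ap$ is finite) is encoded by finitely many subsets of $\nats$, hence by one subset of $\nats$ via a pairing function. A countable set of traces is then encoded by a subset $L \subseteq \nats$ read as $\{\text{the $n$-th trace} : n \in \nats\}$, again via pairing. I would fix arithmetical formulas $\istrace(X)$ (``$X$ codes a trace'') and an abbreviation letting us speak, inside arithmetic, of ``the $n$-th trace of $L$'' and of its value at position~$i$. A nonempty countable model is then witnessed by an existentially quantified second-order variable~$L$ together with a first-order assertion that $L$ is nonempty and every trace it codes is a genuine trace.

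**Encoding satisfaction.** The main technical step is to write an arithmetical formula~$\mathrm{sat}_\psi$ that, given $L$ and a coding of a trace assignment~$\asg$ (each $\asg(\tr) = (\tra,i)$ is a pair: an index into $L$ plus a position) and a coding of the current context~$\ctx \subseteq \vars$ (a fixed finite object, so just finitely many bits — one per element of $\vars$), expresses $(\setL,\asg,\ctx) \models \psi$. This goes by induction on the quantifier-free formula~$\psi$. Atoms~$\prop_\tr$, negation, and disjunction are immediate. For the context operator $\C{\ctx'}\psi'$, we simply recurse with the updated finite context object. The interesting cases are the temporal operators indexed by a finite set~$\setS$ of \pltl formulas: here I must arithmetize the $\setS$-changepoint relation. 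Given a trace~$\tra$ and a position~$i$, ``$i$ is a proper $\setS$-changepoint of $\tra$'' unfolds into a Boolean combination of \pltl-satisfaction facts $(\tra,i) \models \theta$ for $\theta \in \setS$; and \pltl-satisfaction over a single trace is itself arithmetically definable by a routine induction on $\theta$ (the $\U$ and $\Since$ cases introduce only first-order quantifiers over positions). From the changepoint relation one defines $\succ_\setS$ and $\pred_\setS$, then the $(\setS,\ctx)$-successor/predecessor on assignments (applying $\succ_\setS$/$\pred_\setS$ coordinatewise on variables in $\ctx$, identity otherwise), then their iterates $\succ^j_{(\setS,\ctx)}$ — all with first-order quantification over the iteration count~$j$ and over positions. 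Thus $\X_\setS$, $\U_\setS$, $\Y_\setS$, and $\Since_\setS$ all translate into first-order arithmetic relative to $L$. The formula $\mathrm{sat}_\psi$ therefore uses only first-order quantifiers (no further second-order quantification). Finally, the quantifier prefix $Q_1 \tr_1 \cdots Q_k \tr_k$ translates into first-order quantifiers over indices into~$L$ (a trace quantifier becomes a quantifier over elements of the countable set~$L$), prepended to $\mathrm{sat}_\psi$ with the initial assignment and context~$\vars$. The whole statement is
\[
\exists L.\ \bigl(\text{$L$ codes a nonempty set of traces}\bigr) \wedge Q_1 n_1 \cdots Q_k n_k.\ \mathrm{sat}_\psi,
\]
a single existential second-order quantifier followed by an arithmetical (first-order) matrix, i.e.\ a $\Sigma_1^1$ formula. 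Its truth in $\natsstruct$ is equivalent to $\phi$ having a countable model, hence to $\phi$ being satisfiable.

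**Main obstacle.** The conceptual content is already delivered by Theorem~\ref{thm:countable}; the remaining difficulty is purely bookkeeping: carefully arithmetizing \pltl-satisfaction on a single trace, the $\setS$-changepoint relation, and the iterated $(\setS,\ctx)$-successor/predecessor operators, while tracking that each of these stays first-order (no unbounded set quantification creeps in) and that contexts — being finite objects over the fixed finite set~$\vars$ — contribute no quantifier complexity. The one point requiring a little care is the possible \emph{partiality} of $\pred_\setS$ and of iterated predecessors: the arithmetical encoding must carry a definedness predicate alongside the value, and the translations of $\Y_\setS$ and $\Since_\setS$ must guard on it, exactly mirroring the semantics. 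None of this raises the complexity above $\Sigma_1^1$, so together with the inherited hardness we conclude $\Sigma_1^1$-completeness, and since all reductions used are polynomial-time, with respect to polynomial-time reductions.
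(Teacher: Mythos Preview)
Your proposal is correct and follows the same overall strategy as the paper: reduce to prenex normal form via Lemma~\ref{lem:pnn}, invoke the countable-model theorem, and then express ``there is a countable model'' by a $\Sigma_1^1$ sentence of arithmetic. The paper, however, organises the arithmetical encoding differently in two respects. First, rather than translating the trace-quantifier prefix~$Q_1\tr_1\cdots Q_k\tr_k$ directly into first-order quantifiers over indices into~$L$ as you do, the paper existentially quantifies a type~$1$ object~$S$ encoding Skolem functions for the existential trace variables and then universally (first-order) quantifies over assignments consistent with~$S$. Second, rather than expressing \pltl satisfaction and the iterated $(\setS,\ctx)$-successor relation directly as first-order formulas over~$L$ (which, as you note, is possible but requires coding finite sequences via the $\beta$-function to handle the iteration count), the paper existentially quantifies a further type~$1$ object~$E$ encoding the full \emph{expansion}, i.e.\ the truth table of all subformulas at all assignments, and then merely checks local consistency of~$E$ with first-order conditions. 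Your route uses a single second-order existential and pushes more work into the first-order matrix; the paper uses three second-order existentials ($L$, $S$, $E$) but keeps the first-order part to local, non-iterative consistency checks. Both land in $\Sigma_1^1$ and both yield polynomial-time reductions.
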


\begin{proof}
As explained above, the $\Sigma_1^1$-lower bound already holds for the fragment~\hyltl of \ghltl.
Hence, let us consider the upper bound, which we prove by extending the proof technique showing that \hyltl satisfiability is in $\Sigma_1^1$:
We express the existence of a countable model, of Skolem functions for the existential variables, and witnesses of their correctness using type~$1$ objects. 
To this end, we need to capture the semantics of \ghltl in arithmetic.
To do so, we will make heavy use of the following fact: a function of the form~$\nats^k \rightarrow \nats^\ell$ for $k,\ell \in \nats$ can be encoded by a subset of $\nats$ via its graph. Furthermore, this encoding is implementable in first-order arithmetic. 
Hence, we will in the following freely use such functions as type~$1$ objects.

Our goal is to write a $\Sigma_1^1$-sentence~$\arithmetize$ with a single free (first-order) variable~$x$ such that $\natsstruct \models \arithmetize(n)$ if and only if $n$ encodes a satisfiable \ghltl sentence~$\phi$. Here, and in the following, we fix some encoding of \ghltl formulas by natural numbers, which can be chosen such that it is implementable in first-order arithmetic.
Due to Lemma~\ref{lem:pnn}, we can assume w.l.o.g.\ that $\phi$ has the form~$=Q_1\tr_1.\cdots Q_n\tr_n.\psi$ with quantifier-free $\psi$, as every \ghltl sentence is equi-satisfiable to one in prenex normal form. 
Furthermore, due to Lemma~\ref{lem:countpnn}, we know that $\phi$ is satisfiable if and only if it has a countable model.

First, let us remark that we can encode a nonempty countable set~$\setL$ of traces as a function~$L \colon \nats \times \nats \rightarrow \nats$ mapping trace names and positions to subsets of $\ap$ encoded by natural numbers. 
We assume w.l.o.g.\ that the variables~$\tr_1, \ldots, \tr_n$ are equal to $1, \ldots, n$, which can always be achieved by renaming variables.
Then, an assignment~$\asg$ with domain~$\set{\tr_1, \ldots, \tr_n}$ and codomain~$\setL$ can be encoded by the list~$(\ell_1, i_1, \ldots, \ell_n, i_n) \in \nats^{2n}$, where, for $\asg(\tr_j) = (\tra_j, i_j)$, $\ell_j$ is the name of the trace~$\tra_j$ w.r.t.\ the encoding~$L$.
As there is a bijection between $\nats^*$ and $\nats$ that is implementable in first-order arithmetic, such an assignment can also be encoded by a single natural number.
Hence, a countable model can be encoded by a type~$1$ object and a variable assignment by a type~$0$ object.
Similarly, a context over $\set{\tr_1, \ldots, \tr_n}$ can be encoded by a type~$0$ object, as it is a list of natural numbers.

We start by capturing the semantics of the \pltl formulas we use to define the stuttering.
Let $\theta$ be such a \pltl formula, let $\Theta$ be the set of subformulas of $\theta$, and let $\tra$ be a trace.
Then, we define the $\theta$-expansion of $\tra$ as the function~$e_\tra \colon \Theta \times \nats \rightarrow \set{0,1}$ defined as
\[
e_\tra(\theta',i) = \begin{cases}
    1 &\text{if }(\tra, i) \models \theta',\\
    0 &\text{if }(\tra, i) \not\models \theta'.
\end{cases}
\]
The $\theta$-expansion of $\tra$ is uniquely identified by the following consistency requirements, i.e., it is the only function from $\Theta \times \nats$ to $\set{0,1}$ satisfying these requirements:
\begin{itemize}
    
    \item $e_\tra(p,i) = 1$ if and only if $p \in \tra(i)$,

    \item $e_\tra(\neg\theta',i) =  1 $ if and only if $e_\tra(\theta,i)=0$,

    \item $e_\tra(\theta_1 \vee \theta_2,i) = 1$ if and only if $e_\tra(\theta_1,i) = 1$ or $e_\tra(\theta_2,i) = 1$,

    \item $e_\tra(\X\theta',i) =1 $ if and only if $ e_\tra(\theta',i+1)=1$,

    \item $e_\tra(\theta_1 \U \theta_2,i) = 1$ if and only if there exists an $i' \ge i$ such that $e_\tra(\theta_2,i') = 1$ and $e_\tra(\theta_1,j) = 1$ for all $i \le j < i'$,

    \item $e_\tra(\Y\theta',i) =1 $ if and only if $i > 0$ and  $e_\tra(\theta',i-1)=1 $, and

    \item $e_\tra(\theta_1 \Since \theta_2,i) = 1$ if and only if there exists an $0 \le i' \le i$ such that $e_\tra(\theta_2,i') = 1$ and $e_\tra(\theta_1,j) = 1$ for all $i' < j \le i$.

\end{itemize}

Note that all requirements have a first-order flavor.
Slightly more formally: One can write a formula~$\alpha_e(L, x, t, E)$ of first-order arithmetic with free variables $L$ (encoding a countable set of traces as explained above), $x$ (representing a trace name), $t$ (encoding a \pltl formula~$\theta$ with set~$\Theta$ of subformulas), and $E$ (encoding a function from $\Theta \times \nats$ to $\set{0,1}$) that is satisfied in $\natsstruct$ if and only if the function encoded by $E$ is the $\theta$-expansion of the trace named~$x$ in the set of traces encoded by $L$.

Using the formula~$\alpha_e$, we can write a formula~$\alpha_{cp}(L, x, g, i, E)$ of first-order arithmetic with free variables~$L$ (encoding again a countable set of traces), $x$ (representing again a trace name), $g$ (encoding a finite set~$\setS$ of \pltl formulas), $i$ (representing a position), and $E$ (which encodes the $\theta$-expansion of the trace named~$x$ in $\setL$ for every~$\theta \in \setS$) that is satisfied in $\natsstruct$ if and only if $i$ is a $\setS$-changepoint (proper or not) of the trace named $x$ in the set of traces encoded by $L$.

Furthermore, using the formula~$\alpha_{cp}$, we can write formulas~$\alpha_{s}(L, g, c, a, a', i, E)$ and $\alpha_{p}(L, g, c, a, a', i, E)$ of first-order arithmetic with free variables $L$ (encoding a set~$\setL$ of traces as above), $g$ (encoding a set~$\setS$ of \pltl formulas as above), $c$ (encoding a context~$\ctx$), $a$ and $a'$ (encoding assignments), and $E$ (encoding the $\theta$-expansion of the traces in the domain of the assignment encoded by $a$ in $\setL$ for every~$\theta \in \setS$) such that $\alpha_{s}(L, g, c, a, a', i)$  (respectively $\alpha_{p}(L, g, c, a, a', i)$) is satisfied in $\natsstruct$ if and only if $a'$ encodes the $i$-th $(\setS,\ctx)$-successor (predecessor) of the assignment encoded by $a$.
In particular, the $i$-th $(\setS,\ctx)$-predecessor of the assignment encoded by $a$ is undefined if and only if the formula~$\neg \exists a'. \alpha_{p}(L, g, c, a, a', i)$ holds in $\natsstruct$.

Now, we extend the definition of expansions to quantifier-free \ghltl. As the semantics update the assignment on which we evaluate the formula (for temporal operators) and the context (for the context operator), assignments and contexts need to be inputs to the expansion.
Recall that Remark~\ref{remark_quantfreesemantics} states that satisfaction of quantifier-free \ghltl formulas only depends on an assignment and a context, but not on a set of traces.
Formally, the $\psi$-expansion~$e$ of a quantifier-free \ghltl formula~$\psi$ is the function mapping an assignment~$\asg$, a context~$\ctx$ (both over the set~$\set{\tr_1, \ldots, \tr_n}$ of variables occurring in $\psi$), and a subformula~$\psi'$ of $\psi$ to
    \[ e(\asg,\ctx,\psi') = \begin{cases}
        1 &\text{if }(\asg,\ctx)\models\psi',\\
        0 &\text{if }(\asg,\ctx)\not\models\psi'.
    \end{cases} \] 
The expansion is uniquely identified by the following consistency requirements, i.e., it is the only such function satisfying these requirements:
    \begin{itemize}
        
        \item $e(\asg,\ctx,\p_\tr)=1$ if and only if $\asg(\tr)=(\tra,i)$ and $\p\in\tra(i)$,
        
        \item $e(\asg,\ctx,\lnot\psi)= 1$ if and only if  $e(\asg,\ctx,\psi)=0$,
        
        \item $e(\asg,\ctx,\psi_1\lor\psi_2)=1$ if and only if $e(\asg,\ctx,\psi_1)=1$ or $e(\asg,\ctx,\psi_2)=1$,
        
        \item $e(\asg,\ctx,\C{C'}\psi)=1$ if and only if $e(\asg,C',\psi)=1$,

        \item $e(\asg,\ctx,\X_\setS\psi)=1$ if and only if $e(\succ_{(\setS,\ctx)}(\asg)), \ctx,\psi)=1$,

        \item $e(\asg,\ctx,\psi_1\U_\setS\psi_2)=1$ if and only if there exists an $i\geq 0$ such that $e(\succ^i_{(\setS,\ctx)}(\asg), \ctx,\psi_2)=1$ and $e(\succ^j_{(\setS,\ctx)}(\asg), \ctx,\psi_1)=1$ for all $0\leq j<i$,

        \item $e(\asg,\ctx,\Y_\setS\psi)=1$ if and only if $\pred_{(\setS,\ctx)}(\asg)$ is defined and $e(\pred_{(\setS,\ctx)}(\asg),\ctx,\psi)=1$, and 
        
        \item $e(\asg,\ctx,\psi_1\Since_\setS\psi_2)=1$ if and only if there exists an $i\geq 0$ such that $\pred^i_{(\setS,\ctx)}(\asg)$ is defined and $e(\pred^i_{(\setS,\ctx)}(\asg), \ctx, \psi_2)=1$ and $e(\pred^j_{(\setS,\ctx)}(\asg), \ctx, \psi_1)=1$ for all $0\leq j<i$.
    \end{itemize}
Using the previously defined $\alpha$-formulas, we can again write a formula~$\alpha_{e}'(L, p, E)$ with free variables $L$ (encoding a set of traces), $p$ (encoding a quantifier-free formula~$\psi$), and $E$ (encoding a function~$e$ mapping assignments over the set encoded by $L$, contexts, and subformulas of $\psi$ to $\set{0,1}$) such that $\alpha_{e}'(L, p, E)$ is satisfied in $\natsstruct$ if and only of $e$ is the $\psi$-expansion.

Now, we are in a position to construct the $\Sigma_1^1$-sentence~$\arithmetize(x)$ with a free variable~$x$ such that $\natsstruct \models \arithmetize(n)$ if and only if $n$ encodes a \ghltl sentence~$\phi$ that is satisfiable.
Intuitively, we express the existence of the following type~$1$ objects:
\begin{itemize}
    \item A countable set of traces over the propositions in $\phi$ encoded by a function~$L \colon \nats\times \nats \rightarrow \nats$ as described above.
    
    \item A function~$S \colon \nats \times \nats \to \nats$ interpreted as Skolem functions, i.e., $S$ maps a variable of $\phi$ that is existentially quantified and the encoding of a trace assignment for the variables preceding it (encoded by a natural number as described above)  to a trace name.
        
    \item A function~$E \colon \nats \times \nats \times \nats \rightarrow \nats$ interpreted as the $\psi$-expansion, i.e., it maps encodings of assignments, contexts over the variables occurring in $\psi$, and quantifier-free subformulas of $\phi$ (including the \pltl formulas labeling the temporal operators of $\phi$) to $\set{0,1}$.
\end{itemize}

Then, we use only first-order quantification to express the following properties of these objects:
\begin{itemize}
    \item The function encoded by $E$ satisfies the consistency requirements for the $\psi$-expansion (using $\alpha_{e}'$).
    \item For every encoding~$a$ of a variable assignment that is consistent with the Skolem function~$S$, we have $E(a, c, p) = 1$, where $c$ encodes the context containing all variables occurring in $\psi$ and where $p$ encodes the maximal quantifier-free subformula of $\phi$.
\end{itemize}

Altogether, the resulting formula~$\arithmetize(x)$ has the form
\[
\exists L.\ \exists S.\ \exists E.\ \exists p.\ \exists c.\ \alpha_{qf}(x,p) \wedge \alpha_{fc}(x,c) \wedge \alpha_{e}'(L,p,E) \wedge \forall a.\ \alpha_{asgn}(x,a)\rightarrow E(a,c,p) =1.
\]
Here, we use the following auxiliary formulas:
\begin{itemize}
    \item $\alpha_{qf}(x,p)$ holds if and only if $p$ encodes the maximal quantifier-free subformula of the sentence encoded by $x$.

    \item $\alpha_{fc}(x,c)$ holds if and only if $c$ encodes the context containing all variables appearing in the sentence encoded by $x$.

    \item $\alpha_{asgn}(x,a)$ holds if and only if $a$ encodes an assignment for the sentence encoded by $x$.
\end{itemize}
We conclude by noting that $\arithmetize(x)$ can be constructed in polynomial time.
\end{proof}

As \hyltl is a fragment of \hltlc and of \hltls, which in turn are fragments of \ghltl, we have also settled the complexity of their satisfiability problem as well. 

\begin{cor}
\label{cor:fragsat}
  The \hltlc and \hltls satisfiability problems are both $\Sigma_1^1$-complete with respect to polynomial-time reductions.
\end{cor}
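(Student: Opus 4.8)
The plan is to derive both bounds purely from results already in hand, exploiting the fact that \hltlc and \hltls sit between \hyltl and \ghltl in the landscape of Figure~\ref{fig_asynchlogics}. For the upper bound, I would observe that, by the definitions in the preliminaries, every \hltlc sentence and every \hltls sentence is already a \ghltl sentence, and that being a member of one of these fragments is a purely syntactic condition (past-freeness, prenex normal form, the shape of the indices on temporal operators, absence of the context operator), which is decidable in polynomial time. Hence the identity map witnesses a polynomial-time reduction from \hltlc-satisfiability (respectively \hltls-satisfiability) to \ghltl-satisfiability, so Theorem~\ref{thm:sat} immediately yields membership in $\Sigma_1^1$.

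For the lower bound, I would use that \hyltl is a syntactic fragment of \emph{both} logics: it is obtained from \hltlc by further forbidding the context operator, and from \hltls by restricting the allowed index sets to the singleton $\set{\emptyset}$ (and $\emptyset$ is indeed a set of past-free \pltl formulas, so this restriction is a special case). Since \hyltl satisfiability is $\Sigma_1^1$-hard~\cite{hyperltlsat} and, again, the identity embedding is a polynomial-time reduction from \hyltl-satisfiability into each of the two fragments, both \hltlc- and \hltls-satisfiability inherit $\Sigma_1^1$-hardness. Combining the two directions gives $\Sigma_1^1$-completeness with respect to polynomial-time reductions.

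I do not expect a genuine obstacle here: the argument is entirely a matter of bookkeeping over the grammars. The only points to check are that the claimed syntactic inclusions $\hyltl \subseteq \hltlc,\hltls \subseteq \ghltl$ hold literally at the level of the syntax — in particular that indexing every temporal operator by $\emptyset$ is legal in \hltls and that \hltlc imposes no additional constraint on the operators themselves — and that the trivial reductions run in polynomial time, which is immediate since they are the identity. Indeed, the cleanest way to phrase the corollary is: for \emph{any} syntactic fragment of \ghltl that contains \hyltl, the satisfiability problem is $\Sigma_1^1$-complete under polynomial-time reductions, and \hltlc and \hltls are two such fragments.
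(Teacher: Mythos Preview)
Your proposal is correct and matches the paper's own argument: the paper derives the corollary in one sentence by observing that \hyltl is a fragment of both \hltlc and \hltls, which are in turn fragments of \ghltl, so the lower bound is inherited from \hyltl and the upper bound from Theorem~\ref{thm:sat}. Your additional remark that the same reasoning applies to any syntactic fragment of \ghltl containing \hyltl is also made explicitly in the paper (Remark~\ref{rem:simplesat}).
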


Thus, maybe slightly surprisingly, all four satisfiability problems have the same complexity, even though \ghltl adds stuttering, contexts, and quantification under the scope of temporal operators to \hyltl.
This result should also be compared with the \hyctlstar satisfiability problem, which is $\Sigma^2_1$-complete~\cite{hyperltlsat}, i.e., much harder.
\hyctlstar is obtained by extending \hyltl with just the ability to quantify under the scope of temporal operators. However, it has a branching-time semantics and trace quantification ranges over trace suffixes starting at the current position of the most recently quantified trace.
This allows one to write a formula that has only uncountable models, the crucial step towards obtaining the $\Sigma_1^2$-lower bound.
In comparison, \ghltl has a linear-time semantics and trace quantification ranges over initial traces, which is not sufficient to enforce uncountable models.

As model-checking has been known to be undecidable for \hltls and \ghltl~\cite{DBLP:conf/lics/BozzelliPS21,DBLP:conf/fsttcs/BombardelliB0T24}, fragments with decidable model-checking have been investigated:
\begin{itemize}
    \item Simple \hltls requires that the maximal quantifier-free formula is a Boolean combination of \hltls formulas where all temporal operators are labeled by the same set~$\Gamma$ of \pltl formulas and of \hltls formulas with a single free variable~\cite{DBLP:conf/lics/BozzelliPS21}.
    This fragment includes \hyltl.

    \item Simple \ghltl extends simple \hltls with carefully restricted usage of contexts and past operators. We refer to \cite{DBLP:conf/fsttcs/BombardelliB0T24} for a formal definition.
\end{itemize}

\begin{rem}
\label{rem:simplesat}
As the simple fragments of \ghltl and \hltls both contain \hyltl, their satisfiability problems are also $\Sigma_1^1$-hard, and thus, with the upper bound proven here, in fact $\Sigma_1^1$-complete with respect to polynomial-time reductions.
\end{rem}

\section{Model-Checking \texorpdfstring{\ghltl}{Generalized HyperLTL with Stuttering and Contexts}}

In this section, we settle the complexity of the model-checking problems for \ghltl and its fragments: 
Given a sentence~$\phi$ and a transition system~$\TS$, do we have $\TS \models \phi$?
Recall that for \hyltl, model-checking is decidable. 
We show here that \ghltl model-checking is equivalent to truth in second-order arithmetic, with the lower bounds already holding for \hltlc and \hltls, i.e., adding only contexts and adding only stuttering makes \hyltl model-checking much harder.

The proof is split into three lemmata. 
We begin with the upper bound for full \ghltl.
Here, in comparison to the upper bound for satisfiability, we have to work with possibly uncountable models, as the transition system may have uncountably many traces.
The proof of the upper bound encodes the semantics of \ghltl over (possibly) uncountable sets of traces in arithmetic. 

\begin{lem}\label{lem:mcG}
\ghltl model-checking is polynomial-time reducible to truth in second-order arithmetic.
\end{lem}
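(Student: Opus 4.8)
The plan is to encode an accepting computation of the model-checking problem $\TS \models \phi$ as a second-order arithmetic sentence, generalizing the $\Sigma_1^1$-construction from the proof of Theorem~\ref{thm:sat} to deal with the possibly uncountable trace set $\traces(\TS)$. The crucial difference is that $\traces(\TS)$ need not be countable, so we cannot simply enumerate its traces by natural numbers as we did with the function $L \colon \nats \times \nats \to \nats$ in the satisfiability proof. Instead I would exploit that $\TS = (V, E, I, \labfunc)$ is finite: a trace of $\TS$ is determined by a run, i.e.\ an element of $V^\omega$, which is exactly a function $\nats \to V$, hence a type~$1$ object (after fixing a first-order-definable encoding of the finitely many vertices as naturals). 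So a single trace together with a pointer is a type~$1$ object, and a trace assignment $\asg$ over the finitely many variables $\tr_1, \ldots, \tr_n$ occurring in $\phi$ is an $n$-tuple of such objects plus an $n$-tuple of naturals for the pointers — still encodable as a single type~$1$ object. A context over $\set{\tr_1, \ldots, \tr_n}$ remains a type~$0$ object.

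The key steps, in order, are: (1) By Lemma~\ref{lem:pnn} reduce to $\phi = Q_1 \tr_1.\cdots Q_n \tr_n.\psi$ in prenex normal form with quantifier-free $\psi$ (here we must also add $\setL_\intdeco$ to the model, which is harmless since we can simply add to $\TS$ a fixed finite gadget whose traces are exactly $\setL_\intdeco$, or equivalently reason directly about $\traces(\TS) \cup \setL_\intdeco$). (2) Write a first-order arithmetic formula $\ista(R)$ expressing that a type~$1$ object $R$ encodes a run of $\TS$: it says $R(0) \in I$ and $(R(i), R(i+1)) \in E$ for all $i$, which is first-order since $V, E, I$ are fixed finite data definable by bounded formulas. (3) Re-run the expansion machinery from the proof of Theorem~\ref{thm:sat}: the $\theta$-expansion of a trace, the changepoint predicate $\alpha_{cp}$, the $(\setS,\ctx)$-successor/predecessor predicates $\alpha_s, \alpha_p$, and finally the $\psi$-expansion, all carry over verbatim, except that wherever the satisfiability proof passes a trace \emph{name} (a natural number indexing into $L$) we now pass the actual trace as a type~$1$ object. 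In particular $e_\tra$, $e$, and the witnessing expansion for $\psi$ become type~$2$ objects (functions taking type~$1$ inputs), which is why we land in second-order arithmetic rather than $\Sigma_1^1$. (4) Finally write the top-level sentence: for $Q_j = \forall$ use a universal type~$1$ quantifier guarded by $\ista$, for $Q_j = \exists$ use an existential type~$1$ quantifier guarded by $\ista$, asserting at the innermost level that the $\psi$-expansion (existentially quantified as a type~$2$ object, with a first-order-expressible consistency check against the $\alpha$-formulas) assigns value $1$ to the pair (current assignment, full context $\set{\tr_1,\ldots,\tr_n}$, maximal subformula of $\psi$). Since all this uses only type~$0$, type~$1$, and type~$2$ quantification, and type~$2$ objects over $\nats$ are the same as type~$1$ objects (subsets of $\nats$) up to a first-order-definable pairing, the whole sentence lies within second-order arithmetic; it is constructible from $\TS$ and $\phi$ in polynomial time.

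I expect the main obstacle to be handling the quantifier-free expansion $e$ uniformly in the presence of contexts and stuttering when the traces themselves are type~$1$ objects: one has to make sure that the $(\setS,\ctx)$-successor operation, which on a trace in the context advances to the next $\setS$-changepoint, is expressed correctly when the changepoint predicate depends on the full future of a type~$1$-encoded trace, and that the iterated successors/predecessors $\succ^i_{(\setS,\ctx)}$, $\pred^i_{(\setS,\ctx)}$ needed for the until/since clauses are themselves witnessed by type~$1$ functions $\nats \to (\text{assignments})$ whose correctness is a first-order condition relative to $\alpha_s, \alpha_p$. This is exactly the step that is ``tedious but standard'' in the satisfiability proof; here it must be redone with type~$1$ traces rather than trace names, but no genuinely new idea is required — the finiteness of $\TS$ is what keeps $\ista$ first-order, and the linear-time, initial-trace semantics of \ghltl is what keeps everything bounded by second-order arithmetic.
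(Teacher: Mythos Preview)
Your plan contains a genuine error that, as stated, breaks the argument. The claim that ``type~$2$ objects over $\nats$ are the same as type~$1$ objects (subsets of $\nats$) up to a first-order-definable pairing'' is false: a type~$2$ object is (the graph of) a function with type~$1$ domain, hence essentially a subset of $\pow{\nats}$, and there are $2^{2^{\aleph_0}}$ of those versus only $2^{\aleph_0}$ subsets of $\nats$. Quantifying over a global expansion~$e$ that takes an arbitrary trace assignment (now a type~$1$ object) as input is third-order quantification, not second-order, so step~(4) as written does not land in second-order arithmetic.

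The approach is salvageable, but the fix is precisely the observation you are missing: once $\phi$ is in prenex normal form and the traces $\tra_1,\ldots,\tra_n$ assigned to $\tr_1,\ldots,\tr_n$ have been fixed by the outermost block of (now second-order) quantifiers, the quantifier-free core~$\psi$ is evaluated only at assignments that share these traces and differ solely in their \emph{pointers}. Hence the relevant slice of the expansion is a function $\nats^n \times (\text{contexts}) \times (\text{subformulas}) \to \set{0,1}$, which is a genuine type~$1$ object, and one may existentially quantify it \emph{inside} the block of trace quantifiers. The same remark applies to each $e_\tra$: for a fixed trace it is already type~$1$, and that is all you need.

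The paper takes a different and cleaner route: it does \emph{not} reduce to prenex normal form (so the detour through $\setL_\intdeco$ is unnecessary) and does \emph{not} quantify over any expansion for $\psi$. Instead it gives a direct structural translation~$\arithmetize_D(\psi)$, indexed by the current context~$D$, with a single free second-order variable encoding the current assignment (assignments are encoded as subsets of~$\nats$ via pairing). Trace quantifiers become second-order quantifiers guarded by a formula~$\alpha_\TS(X)$ expressing that $X$ encodes a trace of~$\TS$; each temporal operator is translated by second-order quantification over the updated assignment together with first-order quantification over the step count, using auxiliary formulas~$\alpha^s_{(\setS,D)}$ and $\alpha^p_{(\setS,D)}$ for iterated $(\setS,D)$-successors and predecessors; the context operator simply switches the index~$D$. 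This handles quantifiers under temporal operators directly and never needs to name the expansion as an object, so the type-level bookkeeping that trips up your version does not arise.
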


\begin{proof}
We present a polynomial-time translation from pairs~$(\TS, \phi)$ of finite transition systems and \ghltl sentences~$\phi$ to sentences~$\phi'$ of second-order arithmetic such that $\TS \models \phi$ if and only if $\natsstruct \models \phi'$.
Intuitively, we will capture the semantics of \ghltl in arithmetic as we have done in the proof of Theorem~\ref{thm:sat}. 
However, dealing with possibly uncountable sets of traces requires a different encoding of traces.

Let us fix a \ghltl sentence~$\varphi$ and a transition system~$\TS$ and let $\ap$ be the propositions occurring in $\phi$ or $\TS$.
To encode traces over $\ap$, we fix a bijection~$\encprop \colon \ap \rightarrow\set{0,1,\ldots,\size{\ap}-1}$ and use Cantor's pairing function~$\pair \colon \nats\times\nats \rightarrow\nats$ defined as $\pair(i,j) = \frac{1}{2}(i+j)(i+j+1) +j$, which is a bijection that is implementable in first-order arithmetic. 
Thus, we can encode a trace~$\tra \in (\pow{\ap})^\omega$ by the set~\[S_\tra =\set{\pair(j,\encprop(p)) \mid j \in \nats \text{ and } p \in \tra(j)} \subseteq \nats.\]
Note that $\tra \neq \tra'$ implies $S_\tra \neq S_{\tra'}$.
Furthermore, there is a formula~$\alpha_\TS(X)$ of second-order arithmetic with a single free second-order variable~$X$ such that 
$\natsstruct\models \alpha_\TS(X)$ if and only if $X$ encodes a trace of $\TS$~\cite[Proof of Theorem~4.4]{frz26}.

Next, we encode trace assignments. 
To this end, let $\vars'\subseteq \vars$ be the set of variables appearing in $\phi$.
We fix some bijection~$\encvar \colon \vars' \rightarrow \set{0,1,\ldots, \size{\vars'}-1}$.
In the following, we restrict ourselves to assignments whose domains are subsets of $\vars'$.
Let $\asg$ be such an assignment, i.e., $\asg(\tr_j)$ is either undefined or of the form~$(\tra,i)$, i.e., a pair containing a trace and a position.
We encode $\asg$ by the set
\begin{align*}
S_\asg = \bigcup_{\tr \in \dom{\asg}} 
{}&{}\set{\pair(\encvar(\tr),n) \mid \asg(\tr) = (\tra, i) \text{ and } n \in S_\tra} \cup \\[-5pt]
{}&{}\set{\pair(\size{\vars'} + \encvar(\tr),i) \mid \asg(\tr) = (\tra, i)} \subseteq \nats.   
\end{align*}
Note that $\asg \neq \asg'$ implies $S_\asg \neq S_{\asg'}$.
Using this encoding, one can write the following formulas of second-order arithmetic:
\begin{itemize}

    \item $\alpha_{lo}(A, p, i)$ with free variables~$A$ (encoding an assignment~$\asg$), and $p$ and $i$ such that $\natsstruct \models \alpha_{lo}(A, p, i)$ if and only if $\encprop^{-1}(p) \in \tra(j)$, where $\asg(\encvar^{-1}(i)) = (\tra, j)$, i.e., $\alpha_{lo}$ looks up whether the proposition encoded by $p$ holds at the pointed trace assigned to the variable~$i$ by $\asg$.
    
    \item $\alpha_{up}(A,A',X,i)$ with free variables~$A$ and $A'$ (encoding two assignments $\asg$ and $\asg'$), $X$ (encoding a trace~$\tra$), and $i$ such that $\natsstruct \models \alpha_{up}(A, A', X, i)$ if and only if $\asg' = \asg[\tr \mapsto (\tra,0)]$, where $\tr$ is the variable with $\encvar(\tr) = i$, i.e., $\alpha_{up}$ updates assignments.

    \item $\alpha^s_{(\setS,\ctx)}(A,A',i)$, for each finite set~$\setS$ of \pltl formulas and each context~$\ctx\subseteq\vars'$, with free variables~$A$ and $A'$ (encoding two assignments $\asg$ and $\asg'$) and $i$ such that $\natsstruct \models \alpha^s_{(\setS,\ctx)}(A,A',i)$ if and only if $\asg'$ is the $i$-th $(\setS, \ctx)$-successor of $\asg$.

    \item $\alpha^p_{(\setS,\ctx)}(A,A',i)$, for each finite set~$\setS$ of \pltl formulas and each context~$\ctx\subseteq\vars'$, with free variables~$A$ and $A'$ (encoding two assignments $\asg$ and $\asg'$) and $i$ such that $\natsstruct \models \alpha^p_{(\setS,\ctx)}(A,A',i)$ if and only if $\asg'$ is the $i$-th $(\setS, \ctx)$-predecessor of $\asg$ (which in particular implies that $\asg'$ is defined).
    
\end{itemize}
The latter two formulas rely on the concept of expansions, as introduced in the proof of Theorem~\ref{thm:sat}, to capture $\setS$-changepoints.

Now, we present the inductive translation of \ghltl into second-order arithmetic. 
To not clutter our notation even further, we do not add $\TS$ as an input to our translation function, but use it in its definition.
For every context~$D \subseteq \vars'$, we define a translation function~$\arithmetize_D$ mapping \ghltl formulas~$\psi$ to formulas~$\arithmetize_D(\psi)$ of second-order arithmetic such that each $\arithmetize_D(\psi)$ has a unique free second-order variable encoding an assignment:
\begin{itemize}
    \item $\arithmetize_D(p_\tr) = \alpha_{lo}(A, \encprop(p), \encvar(\tr))$, i.e., $A$ is the (only) free variable of $\arithmetize_D(p_\tr)$, as $\encprop(p)$ and $\encvar(\tr)$ are constants, and thus definable in first-order arithmetic.

    \item $\arithmetize_D(\neg \psi) = \neg \arithmetize_D(\psi)$, i.e., the free variable of $\arithmetize_D(\neg \psi)$ is the free variable of $\arithmetize_D(\psi)$.
    
    \item $\arithmetize_D(\psi_1 \vee \psi_2) = \arithmetize_D(\psi_1) \vee \arithmetize_D(\psi_2)$ where we assume w.lo.g.\ that the free variables of $\arithmetize_D(\psi_1) $ and $ \arithmetize_D(\psi_2)$ are the same, which is then also the free variable of $\arithmetize_D(\psi_1 \vee \psi_2)$.
    
    \item $\arithmetize_D(\C\ctx\psi ) = \arithmetize_\ctx(\psi)$, i.e., the free variable of $\arithmetize_D(\C\ctx\psi )$ is the free variable of $\arithmetize_\ctx(\psi)$.
    
    \item $\arithmetize_D(\X_\setS\psi) = \exists A' (\alpha^{s}_{(\setS, D)}(A, A',1) \wedge \arithmetize_D(\psi))$ where $A'$ is the free variable of $\arithmetize_D(\psi)$ and $A$ is the free variable of $\arithmetize_D(\X_\setS\psi)$.
    
    \item $\arithmetize_D(\psi_1 \U_\setS \psi_2) = \exists i\, \exists A_2 (\alpha^s_{(\setS,D)}(A, A_2, i) \wedge \arithmetize_D(\psi_2) \wedge \forall j (j < i \rightarrow \exists A_1 (\alpha^s_{(\setS,D)}(A, A_1, j) \wedge \arithmetize_D(\psi_1))))$ where $A_1$ is the free variable of $\arithmetize_D(\psi_1)$, $A_2$ is the free variable of $\arithmetize_D(\psi_2)$, and $A$ is the free variable of $\arithmetize_D(\psi_1 \U_\setS \psi_2)$.

    \item $\arithmetize_D(\Y_\setS\psi) = \exists A' (\alpha^{p}_{(\setS, D)}(A, A',1) \wedge \arithmetize_D(\psi))$ where $A'$ is the free variable of $\arithmetize_D(\psi)$ and $A$ is the free variable of $\arithmetize_D(\Y_\setS\psi)$.
    
    \item $\arithmetize_D(\psi_1 \Since_\setS \psi_2) = \exists i\, \exists A_2 (\alpha^p_{(\setS,D)}(A, A_2, i) \wedge \arithmetize_D(\psi_2) \wedge \forall j (j < i \rightarrow \exists A_1 (\alpha^p_{(\setS,D)}(A, A_1, j) \wedge \arithmetize_D(\psi_1))))$ where $A_1$ is the free variable of $\arithmetize_D(\psi_1)$, $A_2$ is the free variable of $\arithmetize_D(\psi_2)$, and $A$ is the free variable of $\arithmetize_D(\psi_1 \Since_\setS \psi_2)$.
    
    \item $\arithmetize_D(\exists \tr. \psi) = \exists X (\alpha_\TS(X) \wedge \exists A'(\alpha_{up}(A, A', X, \encvar(\tr)) \wedge \arithmetize_D(\psi)))$ where $A'$ is the free variable of $\arithmetize_D(\psi)$ and $A$ is the free variable of $\arithmetize_D(\exists \tr. \psi)$.

    \item $\arithmetize_D(\forall \tr. \psi) = \forall X (\alpha_\TS(X) \rightarrow \exists A'(\alpha_{up}(A, A', X, \encvar(\tr)) \wedge \arithmetize_D(\psi)))$ where $A'$ is the free variable of $\arithmetize_D(\psi)$ and $A$ is the free variable of $\arithmetize_D(\forall \tr. \psi)$.
    
\end{itemize}
Given a \ghltl sentence~$\phi$ with variables~$\vars'$, we define $\arithmetize(\phi) = \arithmetize_{\vars'}(\phi)(\emptyset)$, i.e., we interpret the free variable of $\arithmetize_{\vars'}(\phi)$ with the empty set, which encodes the assignment with empty domain.
Then, an induction shows that we have $\TS \models \phi$ if and only if $\natsstruct \models \arithmetize(\phi)$, where $\TS$ is the fixed transition system that is used in translation of the quantifiers. 
\end{proof}

Next, we prove the matching lower bounds for the two fragments~\hltls and \hltlc of \ghltl.
This shows that already stuttering alone and contexts alone reach the full complexity of \ghltl model-checking. 

We proceed as follows: traces over a single proposition~$\intprop$ encode sets of natural numbers, and thus also natural numbers (via singleton sets).
Hence, trace quantification in a transition system that has all traces over~$\set{\intprop}$ can mimic first- and second-order quantification in arithmetic.
The main missing piece is thus the implementation of addition and multiplication using only stuttering  and using only contexts. 
In the following, we present such implementations, thereby showing that one can embed second-order arithmetic in both \hltls and \hltlc.
To implement multiplication, we need to work with traces of the form~$\tra = \set{\auxprop}^{m_0} \emptyset^{m_1}\set{\auxprop}^{m_2} \emptyset^{m_3}\cdots$ for some auxiliary proposition~$\auxprop$.
We call a maximal infix of the form~$\set{\auxprop}^{m_j}$ or $\emptyset^{m_j}$ a block of $\tra$.
If we have $m_0 = m_1 = m_2 = \cdots$, then we say that $\tra$ is periodic and call $m_0$ the period of $\tra$.

\begin{lem}\label{lem:mcS}
  Truth in second-order arithmetic is polynomial-time reducible to \hltls model checking.
\end{lem}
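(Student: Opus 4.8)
The plan is to reduce truth in second-order arithmetic to \hltls model checking by building a fixed (and small) transition system together with a generic translation of arithmetic formulas into \hltls. Concretely, I would work with a transition system~$\TS_\intprop$ whose set of traces is exactly $(\pow{\set{\intprop}})^\omega$, so that quantifying over traces amounts to quantifying over subsets of $\nats$ (type~$1$ objects), and quantifying over ``singleton-like'' traces of the form $\emptyset^n\set{\intprop}\emptyset^\omega$ amounts to quantifying over natural numbers (type~$0$ objects). The Boolean connectives and both orders of quantification over $\nats$ and over $\pow\nats$ translate directly; the order relation $n < m$ is easy to express using an eventually that reaches the $\intprop$ of the first trace strictly before that of the second. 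The whole reduction then hinges on expressing the graphs of addition and multiplication as \hltls formulas over $\TS_\intprop$, using \emph{only} stuttering (i.e., only sets~$\setS$ of past-free \pltl formulas labeling the temporal operators, and no context operator).

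The key idea for arithmetic-via-stuttering is to use the block traces $\tra = \set{\auxprop}^{m_0}\emptyset^{m_1}\set{\auxprop}^{m_2}\emptyset^{m_3}\cdots$ introduced just before the lemma, together with stuttering indexed by a singleton set like $\setS=\set{\auxprop}$ (or $\setS=\set{\neg\auxprop}$, etc.). Relative to such an $\setS$, the $\setS$-changepoints of $\tra$ are exactly the block boundaries, so a single $\X_\setS$ step jumps from the start of one block to the start of the next; this lets a formula ``count blocks'' while a plain $\X_\emptyset$ step (no stuttering) counts inside a block. Addition is then straightforward: to say $\res = \argone+\argtwo$ using three singleton traces carrying $\intprop$ at positions $r,a,b$ respectively, I would assert the existence of an auxiliary periodic-type witness trace and use a formula that, reading $a$-many then $b$-many steps, lands exactly on $r$ — or, even more simply, express $r=a+b$ via two uses of a ``difference'' gadget: there is a trace with a single block of length $a$ followed by a single block of length $b$ whose total length-to-the-$\intprop$ of $\res$ matches, synchronizing one trace's plain steps against another trace's stuttered steps. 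Multiplication $\res = \argone\cdot\argtwo$ is the real work: I would guess a periodic trace of period~$\argone$ (periodicity is \pltl/stuttering-expressible: all blocks have equal length, which one states by requiring that the $\setS$-stuttered next of every position sees the same ``block length'' — itself encoded by an auxiliary marker), count off $\argtwo$ blocks of it using $\X_\setS$-steps, and then check that the number of \emph{plain} steps traversed equals $\res$ by racing it against the singleton trace carrying $\intprop$ at position $\res$. Care is needed to pin the period down to exactly $\argone$ (not merely ``some constant period''), which I would do by a separate gadget forcing the first block to have length $\argone$, plus the periodicity gadget propagating that length everywhere.

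With addition and multiplication in hand, the translation of a second-order arithmetic sentence~$\phi$ into an \hltls sentence~$\phi'$ over~$\TS_\intprop$ is mechanical: maintain the invariant that each first-order variable is represented by a trace on which $\G_\emptyset$ forces a unique $\intprop$, each second-order variable by an arbitrary trace over $\set\intprop$, atomic membership $n \in X$ becomes ``the $\intprop$-position of the trace for $n$ carries $\intprop$ on the trace for $X$'' (expressible by reaching that position with eventually/next and reading $\intprop_X$ there), and atomic arithmetic statements $n+m=k$, $n\cdot m=k$ are replaced by the gadgets above (existentially closing over the needed auxiliary block traces, which all live in $(\pow{\set{\intprop,\auxprop,\dots}})^\omega$ — so I would actually take the transition system to produce all traces over the finite proposition set the gadgets require, not just over $\set\intprop$). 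One then shows $\natsstruct \models \phi$ iff $\TS \models \phi'$ by a straightforward induction on $\phi$, and checks that $\phi'$ and $\TS$ are computable from $\phi$ in polynomial time (the gadgets have constant size; only the set of propositions and the fixed \pltl labels depend on the fixed signature, not on $\phi$). The main obstacle, and the step I expect to spend the most care on, is the multiplication gadget: getting the period of the guessed periodic trace to be \emph{exactly} one argument, and correctly synchronizing ``count $\argtwo$ blocks by stuttered steps'' against ``count $\res$ positions by plain steps'' so that the equivalence is tight in both directions, all without the context operator that would make freezing/unfreezing traces trivial — here stuttering must do double duty, once to delimit blocks and once (via a disjoint proposition set, invoking Remark~\ref{rem_stuttering_specialcases}) to recover ordinary synchronous stepping on the singleton traces.
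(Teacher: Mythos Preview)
Your proposal is correct and follows essentially the same approach as the paper: encode natural numbers and sets by traces, translate quantifiers and Boolean connectives directly, and implement addition and multiplication via stuttering over auxiliary block-structured traces, exploiting Remark~\ref{rem_stuttering_specialcases} so that stuttering by a proposition absent from a trace degenerates to a plain $+1$ step (this asymmetric-movement trick is indeed the crux, and you identify it correctly).

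One organizational difference worth noting: the paper uses a \emph{distinct} proposition~$\intpropm{y}$ for each first-order variable~$y$, so that e.g.\ $\X_{\intpropm{y_2}}$ jumps on the $y_2$-trace (and on an auxiliary copy) while stepping by one on the $y_1$- and $y_3$-traces; consequently its transition system depends on~$\phi$.  Your plan keeps a single $\intprop$ for all first-order variables and relies instead on freshly quantified auxiliary $\auxprop$-carrying traces for each arithmetic atom, which yields a fixed transition system.  Both work; the paper's choice makes the addition gadget a one-liner, whereas yours needs an extra existentially quantified witness per addition but keeps the proposition set constant.  Where you should expect the most friction is the periodicity gadget inside multiplication: the paper uses \emph{two} auxiliary propositions~$\auxprop$ and $\auxpropp$ on two separate traces and compares them via $\G_{\set{\auxprop,\auxpropp}}\big[\cdots \X_{\auxpropp}(\cdots)\big]$, so that the $\X_{\auxpropp}$ advances one trace by a full block while the other moves by one---your sketch (``the $\setS$-stuttered next of every position sees the same block length'') does not yet pin this down, and a single-proposition attempt will not give you the needed asymmetry between the two periodic witnesses.
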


\begin{proof}
We present a polynomial-time translation mapping sentences~$\phi$ of second-order arithmetic to pairs~$(\TS, \hyperize(\phi))$ of transition systems~$\TS$ and \hltls sentences~$\hyperize(\phi)$ such that $\natsstruct \models \phi$ if and only if $\TS \models \hyperize(\phi)$.
Intuitively, we capture the semantics of arithmetic in \hltls.

We begin by formalizing our encoding of natural numbers and sets of natural numbers using traces. 
Intuitively, a trace~$\tra$ over a set~$\ap$ of propositions containing the proposition~$\intprop$ encodes the set~$\set{n \in \nats \mid \intprop \in \tra(n)} \subseteq \nats$.
In particular, a trace~$\tra$ encodes a singleton set if it satisfies the formula~$(\neg \intprop) \U (\intprop \wedge \X\G\neg \intprop)$.
In the following, we use the encoding of singleton sets to encode natural numbers as well.
Obviously, every set and every natural number is encoded by a trace in that manner.
Thus, we can mimic first- and second-order quantification by quantification over traces.

However, to implement addition and multiplication using only stuttering, we need to adapt this simple encoding. 
We need a unique proposition for each first-order variable in $\phi$.
So, let us fix a sentence~$\phi$ of second-order arithmetic and let $V_1$ be the set of first-order variables appearing in $\phi$.
We use the set~$\ap = \set{\intprop} \cup \set{\intpropm{y} \mid y \in V_1} \cup \set{\auxprop, \auxpropp}$ of propositions, where $\auxprop$ and $\auxpropp$ are auxiliary propositions used to implement multiplication.

We define the function~$\hyperize$ mapping second-order formulas to \hltls formulas:
\begin{itemize}
    
    \item $\hyperize(\exists Y. \psi) = \exists \tr_Y. \left(\G_\emptyset \bigwedge\limits_{p \neq \intprop} \neg p_{\tr_Y} \right) \wedge \hyperize(\psi)$.
    
    \item $\hyperize(\forall Y. \psi) = \forall \tr_Y. \left(\G_\emptyset \bigwedge\limits_{p \neq \intprop} \neg p_{\tr_Y}\right) \rightarrow \hyperize(\psi)$.
    
    \item $\hyperize(\exists y. \psi) = \exists \tr_y. \big(\G_\emptyset \bigwedge\limits_{p \neq \intpropm{y}} \neg p_{\tr_y}\big) \wedge\\ \left((\neg (\intpropm{y})_{\tr_y}) \U_\emptyset ((\intpropm{y})_{\tr_y} \wedge \X_\emptyset\G_\emptyset\neg (\intpropm{y})_{\tr_y})\right) \wedge \hyperize(\psi)$.
    
    \item $\hyperize(\forall y. \psi) = \forall \tr_y. \Big[\big(\G_\emptyset \bigwedge\limits_{p \neq \intpropm{y}} \neg p_{\tr_y}\big) \wedge \left((\neg (\intpropm{y})_{\tr_y}) \U_\emptyset ((\intpropm{y})_{\tr_y} \wedge \X_\emptyset\G_\emptyset\neg (\intpropm{y})_{\tr_y})\right)\Big] \rightarrow \hyperize(\psi)$.
    
    \item $\hyperize(\neg \psi) = \neg \hyperize(\psi)$.
    
    \item $\hyperize(\psi_1 \vee \psi_2) = \hyperize(\psi_1) \vee \hyperize(\psi_2)$.

    \item $\hyperize(y \in Y) = \F_\emptyset((\intpropm{y})_{\tr_y} \wedge \intprop_{\tr_Y})$.

    \item $\hyperize(y_1 < y_2) = \F_\emptyset( (\intpropm{y_1})_{\tr_{y_1}} \wedge \X_\emptyset\F_\emptyset  (\intpropm{y_2})_{\tr_{y_2}})$.
    
\end{itemize}
At this point, it remains to implement addition and multiplication in \hltls.
Let $\asg$ be an assignment that maps each $\tr_{y_j}$ for $j \in \set{1,2,3}$ to a pointed trace~$(\tra_j, 0)$ for some $\tra_j $ of the form
\[
\emptyset^{n_j}\set{\intpropm{y_j}}\emptyset^\omega \tag{$\dagger$}
\] which is the form of traces that trace variables~$\tr_{y_j}$ encoding first-order variables~$y_j$ of $\phi$ range over.
Our goal is to write formulas~$\hyperize(y_1 + y_2 = y_3)$ and $\hyperize(y_1 \cdot y_2 = y_3)$ with free variables~$\tr_{y_1}, \tr_{y_2}, \tr_{y_3}$ that are satisfied by $\asg$ if and only if $n_1 + n_2 = n_3$ and $n_1 \cdot n_2 = n_3$, respectively.

We begin with addition.
We assume that $n_1$ and $n_2$ are both non-zero and handle the special cases $n_1=0$ and $n_2=0$  later separately.
Consider the formula
\[
\alpha_\adddeco = \exists \traux. \left[\psi \wedge \X_{\intpropm{y_2}} \F_\emptyset ( (\intpropm{y_1})_{\tr_{y_1}} \wedge \X_\emptyset (\intpropm{y_3})_\traux )\right],
\]
where $\traux$ is a fresh variable and where
\[
\psi = 
\G_\emptyset [ (\intpropm{y_2})_{\tr_{y_2}} \leftrightarrow (\intpropm{y_2})_{\traux} ] \wedge 
\G_\emptyset [ (\intpropm{y_3})_{\tr_{y_3}} \leftrightarrow (\intpropm{y_3})_{\traux} ]
\]
holds if the truth values of $\intprop{y_2}$ coincide on $\asg(y_2)$ and $\asg(\traux)$ and those of $\intprop{y_3}$ coincide on $\asg(y_3)$ and $\asg(\traux)$, respectively (see Figure~\ref{fig_stuttering_add}).

   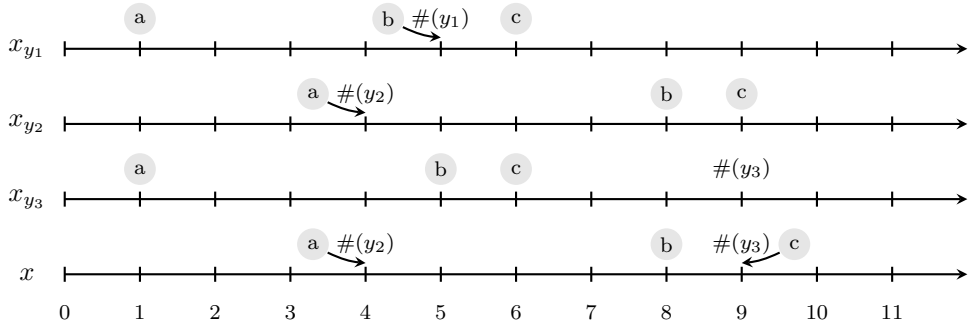
\begin{figure}
   \centering
        \begin{tikzpicture}[thick]

        \node at (-.5,1) {$\tr_{y_1}$};
        \node at (-.5,0) {$\tr_{y_2}$};
        \node at (-.5,-1) {$\tr_{y_3}$};
        \node at (-.5,-2) {$\tr$};

        \draw[->, > = stealth] (0,1) -- (12,1);
        \draw[->, > = stealth] (0,0) -- (12,0);
        \draw[->, > = stealth] (0,-1) -- (12,-1);
        \draw[->, > = stealth] (0,-2) -- (12,-2);

        \foreach \i in {0,1,...,11}{
    \draw (\i,-.1) -- (\i, .1);
    \draw (\i,.9) -- (\i, 1.1);
    \draw (\i,-.9) -- (\i, -1.1);
    \draw (\i,-1.9) -- (\i, -2.1);
  }
\def\y{.4}
\node at (5, 1+\y) {\footnotesize$\intpropm{y_1}$}; 
\node at (4, 0+\y) {\footnotesize$\intpropm{y_2}$};      
\node at (9, -1+\y) {\footnotesize$\intpropm{y_3}$}; 
\node at (4, -2+\y) {\footnotesize$\intpropm{y_2}$};      
\node at (9, -2+\y) {\footnotesize$\intpropm{y_3}$}; 

\node[circle,fill=gray!20, minimum size =12,inner sep = 0] at (1,1+\y) {\footnotesize a};
\node[circle,fill=gray!20, minimum size =12,inner sep = 0] (a2) at (3.3,0+\y) {\footnotesize a};
\draw[->, >=stealth] (a2) edge[bend right=10] (4,0.15);
\node[circle,fill=gray!20, minimum size =12,inner sep = 0] at (1,-1+\y) {\footnotesize a};
\node[circle,fill=gray!20, minimum size =12,inner sep = 0] (a4) at (3.3,-2+\y) {\footnotesize a};
\draw[->, >=stealth] (a4) edge[bend right=10] (4,-1.85);

\node[circle,fill=gray!20, minimum size =12,inner sep = 0] (b1) at (4.3,1+\y) {\footnotesize b};
\draw[->, >=stealth] (b1) edge[bend right=10] (5,1.15);
\node[circle,fill=gray!20, minimum size =12,inner sep = 0] at (8,0+\y) {\footnotesize b};
\node[circle,fill=gray!20, minimum size =12,inner sep = 0] at (5,-1+\y) {\footnotesize b};
\node[circle,fill=gray!20, minimum size =12,inner sep = 0] at (8,-2+\y) {\footnotesize b};

\node[circle,fill=gray!20, minimum size =12,inner sep = 0] at (6,1+\y) {\footnotesize c};
\node[circle,fill=gray!20, minimum size =12,inner sep = 0] at (9,0+\y) {\footnotesize c};
\node[circle,fill=gray!20, minimum size =12,inner sep = 0] at (6,-1+\y) {\footnotesize c};
\node[circle,fill=gray!20, minimum size =12,inner sep = 0] (c4) at (9.7,-2+\y) {\footnotesize c};
\draw[->, >=stealth] (c4) edge[bend left=10] (9,-1.85);

\foreach \i in {0,1,...,11}{
\node at (\i,-2.5) {\footnotesize\i};
}
            
        \end{tikzpicture}
        \caption{The formula~$\alpha_\adddeco$ implements addition, illustrated here for $n_1 = 5$ and $n_2 = 4$. }
        \label{fig_stuttering_add}
        
    \end{figure}

Now, consider an assignment~$\asg$ satisfying $\psi$ and ($\dagger$).
The outer next operator in $\alpha_\adddeco$ (labeled by $\intpropm{y_2}$) updates the pointers in $\asg$ to the ones marked by \myquot{$a$}.
For the traces assigned to $\tr_{y_1}$ and $\tr_{y_3}$ this is due to the fact that both traces do not contain~$\intpropm{y_2}$, which implies that every position is a $\intpropm{y_2}$-changepoint in these traces. 
On the other hand, both the traces assigned to $\tr_{y_2}$ and $\traux$ contain a $\intpropm{y_2}$. Hence, the pointers are updated to the first position where $\intpropm{y_2}$ holds (here, we use $n_2 > 0$).

Next, the eventually operator (labeled by $\emptyset$) updates the pointers in $\asg$ to the ones marked by \myquot{$b$}, as $\intpropm{y_1}$ has to hold on $\asg(\tr_{y_1})$. 
The distance between the positions marked \myquot{$a$} and \myquot{$b$} here is exactly $n_1 -1$ (here, we use $n_1 > 0$) and is applied to all pointers, as each position on each trace is a $\emptyset$-changepoint.

Due to the same argument, the inner next operator (labeled by $\emptyset$) updates the pointers in $\asg$ to the ones marked by \myquot{$c$}. 
In particular, on the trace~$\asg(\traux)$, this is position~$n_1 + n_2$.
As we require $\intpropm{y_3}$ to hold there (and thus also at the same position on $\asg(\tr_{y_3})$, due to $\psi$), we have indeed expressed $n_1 + n_2 = n_3$.

Accounting for the special cases $n_1 = 0$ (first line) and $n_2 = 0$ (second line), we define
\begin{align*}
\hyperize(y_1 + y_2 = y_3) = 
{}&{} \left[(\intpropm{y_1})_{\tr_{y_1}} \wedge \F_\emptyset( (\intpropm{y_2})_{\tr_{y_2}} \wedge (\intpropm{y_3})_{\tr_{y_3}} ) \right] \vee \\
{}&{}\left[ (\intpropm{y_2})_{\tr_{y_2}} \wedge \F_\emptyset( (\intpropm{y_1})_{\tr_{y_1}} \wedge (\intpropm{y_3})_{\tr_{y_3}} )\right] \vee \\
{}&{} \left[ \neg (\intpropm{y_1})_{\tr_{y_1}} \wedge \neg (\intpropm{y_2})_{\tr_{y_2}} \wedge \alpha_\adddeco \right].
\end{align*}

So, it remains to implement multiplication. 
Consider an assignment~$\asg$ satisfying $(\dagger)$. 
We again assume $n_1$ and $n_2$ to be non-zero and handle the special cases $n_1 =0$ and $n_2=0$ later. 
The formula
\begin{align*}
\alpha_1 = 
{}&{} \auxprop_\traux \wedge \G_\emptyset \F_\emptyset \auxprop_{\traux} \wedge \G_\emptyset \F_\emptyset \neg \auxprop_{\traux} \wedge \G_\emptyset \bigwedge_{p \neq \auxprop,\intpropm{y_3}} \neg p_\traux \wedge \\
{}&{} \auxpropp_{\trauxp} \wedge \G_\emptyset \F_\emptyset \auxpropp_{\trauxp} \wedge \G_\emptyset \F_\emptyset \neg \auxpropp_{\trauxp} \wedge \G_\emptyset \bigwedge_{p \neq \auxpropp} \neg p_{\trauxp} 
\end{align*}
with two (fresh) free variables~$\traux$ and $\trauxp$ expresses that 
    if $\asg(\traux) = (\tra, i)$ satisfies $i=0$, then $\tra$ is of the form~$\tra_{y_3} \cup \set{\auxprop}^{m_0}\emptyset^{m_1}\set{\auxprop}^{m_2}\emptyset^{m_3}\cdots$, for $\tra_{y_3} \in (\pow{\set{\intpropm{y_3}}})^\omega$ and non-zero~$m_j$, and
    if $\asg(\trauxp) = (\tra', i')$ satisfies $i'=0$, then $\tra'$ is of the form~$\set{\auxpropp}^{m_0'}\emptyset^{m_1'}\set{\auxpropp}^{m_2'}\emptyset^{m_3'}\cdots$, for non-zero~$m_j'$.
Here, $\cup$ denotes the pointwise union of two traces. 
The part~$\tr_{y_3}$ will only become relevant later, so we ignore it for the time being.

Then, the formula~$\alpha_2 = \G_\emptyset (\auxprop_{\traux} \leftrightarrow \auxpropp_{\trauxp}) $ is satisfied by $\asg$ if and only if $m_j = m_j'$ for all $j$.
If $\alpha_1\wedge\alpha_2$ is satisfied, then the $\set{\auxprop, \auxpropp}$-changepoints on $\tra$ \emph{and} $\tra'$ are $0, m_0, m_0 + m_1, m_0 + m_1 + m_2, \ldots$.
Now, consider 
\begin{align*}
\alpha_3 = \G_{\set{\auxprop, \auxpropp}}
\Big[ 
{}&{} \big[ 
\auxprop_\traux \rightarrow \X_{\auxpropp} \big( (\auxprop_\traux \wedge \neg\auxpropp_{\trauxp})\U_\emptyset (\neg \auxprop_\traux \wedge \neg\auxpropp_{\trauxp} \wedge \X_\emptyset \auxpropp_{\trauxp} ) \big) 
\big] \wedge\\
 {}&{}
\big[
\neg\auxprop_\traux \rightarrow \X_{\auxpropp} \big( (\neg\auxprop_\traux \wedge \auxpropp_{\trauxp})\U_\emptyset (\auxprop_\traux \wedge \auxpropp_{\trauxp} \wedge \X_\emptyset \neg\auxpropp_{\trauxp} ) \big) 
\big] 
\Big]   
\end{align*}
and a trace assignment~$\asg$ satisfying $\alpha_1 \wedge \alpha_2 \wedge \alpha_3$ and such that the pointers of $\asg(\traux)$ and $\asg(\trauxp)$ are both $0$.
Then, the always operator of $\alpha_3$ updates both pointers of $\asg(\traux)$ and $\asg(\trauxp)$ to some  $\set{\auxprop,\auxpropp}$-changepoint as argued above (see the positions marked \myquot{$a$} in Figure~\ref{fig_stuttering_per}). This is the beginning of an infix of the form~$\set{\auxprop}^{m_j}$ in $\traux$ and of the form~$\set{\auxpropp}^{m_j}$ in $\trauxp$ or the beginning of an infix of the form~$\emptyset^{m_j}$ in $\traux$ and in $\trauxp$.

   \begin{figure}
   \centering
        \begin{tikzpicture}[thick]

        \node at (-.5,1) {$\traux$};
        \node at (-.5,0) {$\trauxp$};

        \draw[] (0,1) -- (2.5,1);
        \draw[] (0,0) -- (2.5,0);
        \draw[dotted] (2.5,1) -- (4,1);
        \draw[dotted] (2.5,0) -- (4,0);
        \draw[->, > = stealth] (4,1) -- (12,1);
        \draw[->, > = stealth] (4,0) -- (12,0);

        \foreach \i in {0,.5,...,2.5,4,4.5,...,11.5}{
    \draw (\i,-.1) -- (\i, .1);
    \draw (\i,.9) -- (\i, 1.1);
  }

        \foreach \i in {0,.5,...,2,5,5.5,...,7,10,10.5,...,11.5}{
   \draw[fill,red] (\i,1.1) circle (.03);
\draw[fill,red] (\i,1-.1) circle (.03);

   \draw[fill,red] (\i,.1) circle (.03);
\draw[fill,red] (\i,-.1) circle (.03);
  }

\def\y{.4}
\node[circle, fill=gray!20, minimum size =12,inner sep = 0] at (5,1+\y) {\footnotesize a};
\node[circle, fill=gray!20, minimum size =12,inner sep = 0] at (5,0-\y) {\footnotesize a};
\node[circle, fill=gray!20, minimum size =12,inner sep = 0] at (5.5,1+\y) {\footnotesize b};
\node[circle, fill=gray!20, minimum size =12,inner sep = 0] at (7.5,0-\y) {\footnotesize b};
\node[circle, fill=gray!20, minimum size =12,inner sep = 0] at (7.5,1+\y) {\footnotesize c};
\node[circle, fill=gray!20, minimum size =12,inner sep = 0] at (9.5,0-\y) {\footnotesize c};

\foreach \i in {5.5,6,...,7.5}{
\draw[<->,>=stealth] (\i,.85) -- (\i+2,.15);
}
  




            
        \end{tikzpicture}
        \caption{The formula~$\alpha_3$ ensures that the traces assigned to $\traux$ and $\trauxp$ are periodic. Here, \myquot{\,\raisebox{-0.25ex}{\begin{tikzpicture}[thick]
\protect\draw(0,-.1) -- (0, .1);
\protect\draw[fill,red] (0,.1) circle (.03);
\protect\draw[fill,red] (0,-.1) circle (.03);
        \end{tikzpicture}}\,} denotes a position where $\auxprop$ ($\auxpropp$) holds and \myquot{\,\raisebox{-0.25ex}{\begin{tikzpicture}[thick]
\protect\draw(0,-.1) -- (0, .1);
        \end{tikzpicture}}\,} a position where $\auxprop$ ($\auxpropp$) does not hold.}
        \label{fig_stuttering_per}
        
    \end{figure}

Let us assume we are in the former case, the latter is dual.
Then, the premise of the upper implication of $\alpha_3$ holds, i.e., $
\X_{\auxpropp} ( (\auxprop_\traux \wedge \neg\auxpropp_{\trauxp})\U_\emptyset (\neg \auxprop_\traux \wedge \neg\auxpropp_{\trauxp} \wedge \X_\emptyset \auxpropp_{\trauxp} ) )$ must be satisfied as well.
The next operator (labeled by $\set{\auxpropp}$ only) increments the pointer of $\asg(\traux)$ by one (as $\tra$ does not contain any $\auxpropp$) and updates the one of $\asg(\trauxp)$ to the position after the infix~$\set{\auxpropp}^{m_j}$ in $\tra'$ 
(see the pointers marked with \myquot{$b$}).
Now, the until formula only holds if we have $m_j = m_{j+1}$, as it compares the positions marked by the diagonal lines until it \myquot{reaches} the positions marked with \myquot{$c$}.
As the reasoning holds for every $j$ we conclude that we have $m_0 = m_1 = m_2 = \cdots$, i.e., we have constructed a periodic trace with period~$m_0$ that will allow us to implement multiplication by $m_0$.
From here on, we ignore $\asg(\trauxp)$, as it is only needed to construct~$\asg(\traux)$.

Next, we relate the trace~$\asg(\traux)$ to the traces~$\tr_{y_j}$ encoding the numbers we want to multiply using
\begin{align*}
\alpha_\multdeco = \exists \traux. \exists \trauxp.{}&{} \alpha_1 \wedge \alpha_2 \wedge \alpha_3 \wedge ( \auxprop_\traux \U_\emptyset (\neg \auxprop_\traux \wedge (\intpropm{y_1})_{\tr_{y_1}}) ) \wedge\\
{}&{}\G_\emptyset((\intpropm{y_3})_{\tr_{y_3}} \leftrightarrow (\intpropm{y_3})_{\traux}) \wedge \F_{\auxprop}[(\intpropm{y_2})_{\tr_{y_2}} \wedge (\intpropm{y_3})_{\traux} ],    
\end{align*}
which additionally expresses that $m_0$ is equal to $n_1$, that $\intpropm{y_3}$ holds on $\asg(\traux)$ at position~$n_3$ as well (and nowhere else), and finally that $n_3$ is equal to $n_1 \cdot n_2$:
This follows from the fact that we stutter $n_2$ times on $\tr_{y_2}$ to reach the position where $\intpropm{y_2}$ holds (as every position is a $\auxprop$-changepoint on $\tra_2$). Thus, we reach position~$m_0 \cdot n_2 = n_1 \cdot n_2$ on $\asg(\traux)$, at which $\intpropm{y_3}$ must hold. 
As $\intpropm{y_3}$ holds at the same position on $\tr_{y_3}$, we have indeed captured $n_1 \cdot n_2 = n_3$.

So, taking the special cases $n_1 = 0$ and $n_2 = 0$ (in the first two disjuncts) into account, we define
\begin{align*}
\hyperize(y_1 \cdot y_2 = y_3) = 
{}&{} [(\intpropm{y_1})_{\tr_{y_1}}\!\! \wedge (\intpropm{y_3})_{\tr_{y_3}}  ]\vee \\
{}&{}[(\intpropm{y_2})_{\tr_{y_2}}\!\! \wedge (\intpropm{y_3})_{\tr_{y_3}} ] \vee \\
{}&{}  [\neg (\intpropm{y_1})_{\tr_{y_1}}\!\! \wedge \neg (\intpropm{y_2})_{\tr_{y_2}} \wedge \alpha_\multdeco].
\end{align*}

Now, let $\TS$ be a transition system\footnote{Note that this fixes an error in the conference version~\cite{confversion} where we used a transition system with too few traces.} with 
\[
\traces(\TS) = (\pow{\set{\intprop}})^\omega \cup \bigcup_{y,y' \in V_1}  (\pow{\set{\intpropm{y},\intpropm{y'},\auxprop}})^\omega\cup (\pow{\set{\auxpropp}})^\omega,
\]
where $V_1$ still denotes the set of first-order variables in the sentence~$\phi$ of second-order arithmetic. Here, $(\pow{\set{\intprop}})^\omega$ contains the traces to mimic set quantification, 
$\bigcup_{y,y' \in V_1}  (\pow{\set{\intpropm{y},\intpropm{y'},\auxprop}})^\omega$ contains the traces to mimic first-order quantification and for the variable~$\traux$ used in the definition of multiplication, and $(\pow{\set{\auxpropp}})^\omega$ contains the traces for $\trauxp$, also used in the definition of multiplication.
Such a $\TS$ can, given $\phi$, be constructed in polynomial time.
It has several connected components, all fully connected (including self-loops at all vertices), and all vertices are initial.
\begin{itemize}
    \item The first component has two vertices, one labeled with the empty set, the other with $\set{\intprop}$. The traces of this component are the traces in $(\pow{\set{\intprop}})^\omega$.
    
    \item For every $y,y' \in V_1$, there is a component with eight vertices, one for each subset~$S$ of $\pow{\set{\intpropm{y},\intpropm{y'},\auxprop}}$ that is labeled by $S$. The traces of this component are the traces in $(\pow{\set{\intpropm{y},\intpropm{y'},\auxprop}})^\omega$.

    \item The last component has two vertices, one labeled with the empty set, the other with $\set{\auxpropp}$. The traces of this component are the traces in $(\pow{\set{\auxpropp}})^\omega$.
\end{itemize}
    
An induction shows that we have $\natsstruct \models \varphi$ if and only if $\TS \models \hyperize(\phi)$. Note that while $\hyperize(\varphi)$ is not necessarily in prenex normal form, it can be brought into that as no quantifier is under the scope of a temporal operator, i.e., into a \hltls formula.
\end{proof}

Next, we consider the lower-bound for the second fragment, i.e., \hyltl with contexts.

\begin{lem}\label{lem:mcC}
    Truth in second-order arithmetic is polynomial-time reducible to \hltlc model checking.
\end{lem}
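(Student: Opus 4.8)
The plan is to mimic the structure of the proof of Lemma~\ref{lem:mcS}: we give a polynomial-time translation $\hyperize$ from sentences $\phi$ of second-order arithmetic to pairs $(\TS, \hyperize(\phi))$ of transition systems and \hltlc sentences with $\natsstruct \models \phi$ iff $\TS \models \hyperize(\phi)$. The quantifier, Boolean, $\in$, and $<$ cases are handled exactly as in Lemma~\ref{lem:mcS} (using traces over $\set{\intprop}$ for second-order variables and traces of the form $\emptyset^{n}\set{\intpropm{y}}\emptyset^\omega$ for first-order variables), so the only real work is implementing $\hyperize(y_1 + y_2 = y_3)$ and $\hyperize(y_1 \cdot y_2 = y_3)$ using contexts instead of stuttering. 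The transition system $\TS$ will again have all traces over the relevant single- and few-proposition alphabets, including auxiliary propositions for multiplication; it is constructible in polynomial time from $\phi$.

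For addition, the idea is that a context lets us freeze time on some traces while letting it pass on others. To check $n_1 + n_2 = n_3$, I would quantify a fresh auxiliary trace $\traux$, force it (with an always-subformula) to agree with $\tr_{y_3}$ on where $\intprop$ (encoding $n_3$) holds, and then walk the pointers: first advance to the $\intpropm{y_1}$-position on $\tr_{y_1}$ while carrying $\traux$ along in the context (so $\traux$'s pointer moves $n_1$ steps), then, using a context that freezes $\tr_{y_1}$ but keeps $\tr_{y_2}$ and $\traux$, advance $n_2$ further steps to the $\intpropm{y_2}$-position on $\tr_{y_2}$, and finally check that $\intpropm{y_3}$ holds on $\traux$ at that position (i.e.\ at position $n_1+n_2$). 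The special cases $n_1=0$ or $n_2=0$ are handled by explicit disjuncts as in Lemma~\ref{lem:mcS}. For multiplication, the analogue of the periodic trace $\asg(\traux)$ from the stuttering proof is again needed: I would quantify a trace of the shape $\set{\auxprop}^{m_0}\emptyset^{m_1}\set{\auxprop}^{m_2}\cdots$, enforce periodicity $m_0 = m_1 = \cdots$ using a context that alternately freezes and unfreezes the comparison traces (replacing the stuttering-based $\alpha_3$ with a context-based block-length comparison), tie $m_0$ to $n_1$, and then stutter $n_2$ blocks — here realized by repeatedly advancing past $\auxprop$-blocks while staying in an appropriate context — to land on position $m_0 \cdot n_2 = n_1 \cdot n_2$, where we require $\intpropm{y_3}$ (carried from $\tr_{y_3}$) to hold. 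Again the cases $n_1 = 0$ or $n_2 = 0$ are separate disjuncts.

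The main obstacle is the periodicity enforcement: with stuttering, "jump to the next changepoint" directly gives block-to-block movement and the comparison of consecutive block lengths; with contexts, every temporal step is a single position step, so I must simulate "jump over a whole $\auxprop$-block" by a bounded-depth formula (an until that consumes a maximal $\auxprop$-run) and then compare the lengths of block $j$ and block $j+1$ by advancing one pointer through block $j$ while the other is frozen, releasing it, and checking synchrony — all under an outer always that ranges over block boundaries. Getting the context bookkeeping right so that exactly the intended traces move at each step, and verifying that the quantifier $\exists \traux$ (ranging over \emph{initial} pointed traces, so $\traux$'s pointer starts at $0$) interacts correctly with the later context manipulations, is where the care is needed. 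Once the periodic trace is built, relating it to $\tr_{y_1}, \tr_{y_2}, \tr_{y_3}$ is routine, and the final correctness proof is a straightforward structural induction on $\phi$. Note that $\hyperize(\phi)$ is in prenex normal form up to moving quantifiers over Boolean operators (no quantifier sits under a temporal operator or a context), so it is a genuine \hltlc sentence.
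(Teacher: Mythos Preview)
Your addition construction is essentially right, and in fact you can simplify it further: there is no need for an auxiliary trace~$\traux$ or for the variable-specific propositions~$\intpropm{y}$ carried over from the stuttering proof. The paper uses a single proposition~$\intprop$ and simply writes $\C{\tr_{y_1},\tr_{y_3}}\F[\intprop_{\tr_{y_1}}\land\C{\tr_{y_2},\tr_{y_3}}\F(\intprop_{\tr_{y_2}}\land\intprop_{\tr_{y_3}})]$, letting $\tr_{y_3}$ itself accumulate the $n_1+n_2$ offset.

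The genuine gap is in your multiplication construction. You propose to build a periodic trace with period~$n_1$ and then ``stutter $n_2$ blocks'' to land on position~$n_1\cdot n_2$. In the stuttering proof this works because a single $\F_{\auxprop}$ lets $\tr_{y_2}$ and $\traux$ advance at \emph{different rates}: one position per step on $\tr_{y_2}$ (which contains no~$\auxprop$) versus one full block per step on~$\traux$. That differential rate is precisely what encodes the product. Contexts provide no such differential: every trace in the active context moves by exactly the same number of positions. You can advance past \emph{one} $\auxprop$-block with an until, but there is no formula construct that iterates this a variable~($n_2$) number of times while simultaneously counting the iterations against~$\tr_{y_2}$. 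Your phrase ``an outer always that ranges over block boundaries'' already hints at the problem---in \hltlc an always ranges over positions, not blocks.

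The paper's route is entirely different and does not attempt to count blocks. It builds \emph{two} periodic traces, with periods~$n_2$ and $n_2-1$, advances the period-$n_2$ trace by~$n_1$ using a context, and then uses a single until to find the first position where both are simultaneously at a block end. A short arithmetic argument (the minimal $z$ with $z(n_2-1)=z'n_2-n_1$ is $z=n_1$) shows this synchronization happens after exactly $n_1(n_2-1)$ steps, so the pointer on~$\tr_{y_3}$, carried along in the context, lands at $n_1+n_1(n_2-1)=n_1 n_2$. The multiplication is thus extracted from a single synchronization event between two periodic traces of coprime periods, not from block counting. Periodicity itself is enforced by a one-shot offset (advance one copy past its first block in a singleton context, then globally compare with an unshifted copy), avoiding any iteration over~$j$.
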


\begin{proof}
We present a polynomial-time translation mapping sentences~$\phi$ of second-order arithmetic to pairs~$(\TS, \hyperize(\phi))$ of transition systems~$\TS$ and \hltlc sentences~$\hyperize(\phi)$ such that $\natsstruct \models \phi$ if and only if $\TS \models \hyperize(\phi)$.
Intuitively, we will capture the semantics of arithmetic in \hltlc.
As the temporal operators in \hltlc are all labeled by the empty set, we simplify our notation by dropping them in this proof, i.e., we just write $\X$, $\F$, $\G$, and $\U$.

As in the proof of Lemma~\ref{lem:mcS}, we encode natural numbers and sets of natural numbers by traces.
Here, it suffices to consider a single proposition~$\intprop$ to encode these and an additional proposition~$\auxprop$ that we use to implement multiplication, i.e., our \hltlc formulas are built over $\ap = \set{\intprop, \auxprop}$.

We again define a function~$\hyperize$ mapping second-order formulas to \hltlc formulas:
\begin{itemize}
    
    \item $\hyperize(\exists Y. \psi) = \exists \tr_Y. \left(\G \neg \auxprop_{\tr_Y} \right) \wedge \hyperize(\psi)$.
    
    \item $\hyperize(\forall Y. \psi) = \forall \tr_Y. \left(\G \neg \auxprop_{\tr_Y}\right) \rightarrow \hyperize(\psi)$.
    
    \item $\hyperize(\exists y. \psi) = \exists \tr_y. \left(\G \neg \auxprop_{\tr_y}\right) \wedge \left((\neg \intprop_{\tr_y}) \U (\intprop_{\tr_y} \wedge \X\G\neg \intprop_{\tr_y})\right) \wedge \hyperize(\psi)$.
    
    \item $\hyperize(\forall y. \psi) = \forall \tr_y. \left[\left(\G \neg \auxprop_{\tr_y}\right) \wedge \left((\neg \intprop_{\tr_y}) \U (\intprop_{\tr_y} \wedge \X\G\neg \intprop_{\tr_y})\right)\right] \rightarrow \hyperize(\psi)$.
    
    \item $\hyperize(\neg \psi) = \neg \hyperize(\psi)$.
    
    \item $\hyperize(\psi_1 \vee \psi_2) = \hyperize(\psi_1) \vee \hyperize(\psi_2)$.

    \item $\hyperize(y \in Y) = \F(\intprop_{\tr_y} \wedge \intprop_{\tr_Y})$.

    \item $\hyperize(y_1 < y_2) = \F( \intprop_{\tr_{y_1}} \wedge \X\F \intprop_{\tr_{y_2}})$.

\end{itemize}

At this point, it remains to consider addition and multiplication.
Let $\asg$ be an assignment that maps each $\tr_{y_j}$ for $j \in \set{1,2,3}$ to a pointed trace~$(\tra_j, 0)$ for some $\tra_j $ of the form~$ \emptyset^{n_j}\set{\intprop}\emptyset^\omega$, which is the form of traces that the $\tr_{y_j}$ encoding first-order variables~$y_j$ of $\phi$ range over.
Our goal is to write formulas~$\hyperize(y_1 + y_2 = y_3)$ and $\hyperize(y_1 \cdot y_2 = y_3)$ with free variables~$\tr_{y_1}, \tr_{y_2}, \tr_{y_3}$ that are satisfied by $(\asg,\vars)$ if and only if $n_1 + n_2 = n_3$ and $n_1 \cdot n_2 = n_3$, respectively.

The case of addition is readily implementable in \hltlc by defining
\[\hyperize(y_1 + y_2 = y_3) = \C{{\tr_{y_1},\tr_{y_3}}}\F\left[\intprop_{\tr_{y_1}}\land\C{{\tr_{y_2},\tr_{y_3}}}\F (\intprop_{\tr_{y_2}}\land \intprop_{\tr_{y_3}})\right].\]
The first eventually updates the pointers of $\asg(\tr_{y_1})$ and $\asg(\tr_{y_3})$ by adding $n_1$ and the second eventually updates the pointers of $\asg(\tr_{y_2})$ and $\asg(\tr_{y_3})$ by adding $n_2$. At that position, $\intprop$ must hold on $\tr_{y_3}$, which implies that we have $n_1 + n_2 = n_3$.

At this point, it remains to implement multiplication, which is more involved than addition.
In fact, we need to consider four different cases.
If $n_1 = 0$ or $n_2=0$, then we must have $n_3 = 0$ as well. This is captured by the formula~$\psi_1 = (\intprop_{\tr_{y_1}} \vee \intprop_{\tr_{y_2}}) \wedge \intprop_{\tr_{y_3}}$.
    Further, if $n_1 = n_2 = 1$, then we must have $n_3 = 1$ as well. This is captured by the formula~$
    \psi_2 = \X (\intprop_{\tr_{y_1}} \wedge \intprop_{\tr_{y_2}} \wedge \intprop_{\tr_{y_3}})$.

Next, let us consider the case~$0 < n_1 \le n_2$ with $n_2 \ge 2$.
    Let $z \in \nats\setminus\set{0}$ be minimal with 
    \begin{equation}
    \label{eq}
        z \cdot (n_2 - 1) = z' \cdot n_2 - n_1
    \end{equation}
    for some $z' \in \nats\setminus\set{0}$.
    It is easy to check that $z = n_1$ is a solution of Equation~(\ref{eq}) for $z' = n_1$.
    Now, consider some $0 < z < n_1$ to prove that $n_1$ is the minimal solution: 
    Rearranging Equation~\ref{eq} yields
    $    - z + n_1 = (z' - z)\cdot n_2
    $, 
    i.e., $-z + n_1$ must be a multiple of $n_2$ (possibly $0$).
    But $0 < z < n_1$ implies $0 < n_1 - z < n_1 \le n_2$, i.e., $-z + n$ is not a multiple of $n_2$. Hence, $z = n_1$ is indeed the smallest solution of Equation~(\ref{eq}).  
    So, for the minimal such $z$ we have $z \cdot (n_2 -1) + n_1 = n_1 \cdot n_2$, i.e., we have expressed multiplication of $n_1$ and $n_2$.

    Let us show how to implement this argument in \hltlc. 
    We begin by constructing two periodic traces with periods $n_2$ and $n_2-1$ using contexts.
    Consider the formula
    \[
    \alpha_1 = 
    (\auxprop_{\traux}  \wedge \G\F \auxprop_{\traux}  \wedge \G\F \neg \auxprop_{\traux} \wedge \G\neg\intprop_\traux ) \wedge 
    (\auxprop_{\trauxp} \wedge \G\F \auxprop_{\trauxp} \wedge \G\F \neg \auxprop_{\trauxp} \wedge \G\neg\intprop_{\trauxp} )
    \]
    with free variables~$\traux$ and $\trauxp$.
    If $\asg$ is an assignment satisfying $\alpha_1$ such that the pointers of $\asg(\traux)$ and $\asg(\trauxp)$ are both $0$, then $\asg(\traux) = (\set{\auxprop}^{m_0}\emptyset^{m_1}\set{\auxprop}^{m_2}\emptyset^{m_3}\cdots,0)$ and $\asg(\trauxp) =(\set{\auxprop}^{m_0'}\emptyset^{m_1'}\set{\auxprop}^{m_2'}\emptyset^{m_3'}\cdots,0)$ for some $m_j, m_j' \ge 1$.
    Further, the formula
    $
    \alpha_2 = \G (\auxprop_{\traux} \leftrightarrow \auxprop_{\trauxp})
    $
    then expresses that $m_j = m_j'$ for all $j$, and the formula
    \[
    \alpha_3 = \C{{\traux}}\left( \auxprop_{\traux}\U\left( \neg \auxprop_{\traux} \wedge \C{{\traux,\trauxp}}\G (\auxprop_{\traux} \leftrightarrow \neg\auxprop_{\trauxp}) \right)  \right)
    \]
    expresses that $m_j' = m_{j+1}$ for all $j$: The until operator updates the pointers of $\asg(\traux)$ and $\asg(\trauxp)$ to the positions marked \myquot{$a$} in Figure~\ref{fig_contexts_periodic} and the always operator then compares all following positions as depicted by the diagonal arrows.
    Thus, we have $m_j = m_0$ for all $j$ if $\asg(\traux)$ and $\asg(\trauxp)$ satisfy $\alpha_\per = \alpha_1 \wedge \alpha_2 \wedge \alpha_3$.

    \begin{figure}
    \centering
        \begin{tikzpicture}[thick]

        \node at (-.5,1) {$\traux$};
        \node at (-.5,0) {$\trauxp$};
        \def\y{.4}
        \node[fill=gray!20, circle, minimum size =12,inner sep = 0] at (3,1+\y) {a};
        \node[fill=gray!20, circle, minimum size =12,inner sep = 0] at (0,-\y) {a};
        
        \draw[->, > = stealth] (0,1) -- (12,1);
        \draw[->, > = stealth] (0,0) -- (12,0);

        \foreach \i in {0,1,...,11}{
    \draw (\i,-.1) -- (\i, .1);
    \draw (\i,.9) -- (\i, 1.1);
  }

        \foreach \i in {0,1,2,6,7,8}{
\draw[fill,red] (\i,.1) circle (.03);
\draw[fill,red] (\i,-.1) circle (.03);

\draw[fill,red] (\i,.9) circle (.03);
\draw[fill,red] (\i,1.1) circle (.03);
}

\begin{scope}
    \clip(0,0) rectangle (12,1);
  \foreach \i in {0,1,...,11}{
    \draw[<->,> = stealth,thin] (\i, .15) -- (\i+3,.85);
  }
\end{scope}

        \end{tikzpicture}
        \caption{The formula~$\alpha_3$ ensures that the traces assigned to $\traux$ (and $\trauxp$) are periodic. Here, \myquot{\,\raisebox{-0.25ex}{\begin{tikzpicture}[thick]
\protect\draw(0,-.1) -- (0, .1);
            \protect\draw[fill,red] (0,.1) circle (.03);
\protect\draw[fill,red] (0,-.1) circle (.03);
        \end{tikzpicture}}\,} (\myquot{\,\raisebox{-0.25ex}{\begin{tikzpicture}[thick]
\protect\draw(0,-.1) -- (0, .1);
        \end{tikzpicture}}\,}) denotes a position where $\auxprop$ holds (does not hold).}
        \label{fig_contexts_periodic}
        
    \end{figure}

    To conclude, consider the formula
    \begin{align*}
    \psi_3 = {}&{}\Big[\X\F( \intprop_{\tr_{y_1}} \wedge \F \intprop_{\tr_{y_2}}) \wedge \X\X\F\intprop_{\tr_{y_2}}\Big] \wedge\\
    {}&{}\quad\Big[
    \exists \traux_0. \exists \trauxp_0.\exists \traux_1. \exists \trauxp_1. \alpha_\per[\traux/\traux_0, \trauxp/\trauxp_0] \wedge \alpha_\per[\traux/\traux_1, \trauxp/\trauxp_1] \wedge \\
    {}&{}\quad\left(\auxprop_{\traux_0}\U (\neg \auxprop_{\traux_0} \wedge \intprop_{\tr_{y_2}}) \right) \wedge \left(\auxprop_{\traux_1}\U ( \neg \auxprop_{\traux_1} \wedge \X \intprop_{\tr_{y_2}}) \right)\wedge\\
    {}&{}\quad\C{\tr_{y_1},\tr_{y_3},\traux_0}\left( \F\left( \intprop_{\tr_{y_1}} \wedge \C{\tr_{y_3},\traux_0, \traux_1} (\neg \alpha_\blockchange)\U (\alpha_\blockchange \wedge \X \intprop_{\tr_{y_3}}) \right) \right)
    \Big]
    \end{align*}
    where $\alpha_\blockchange = (\auxprop_{\traux_0} \leftrightarrow \neg\X\auxprop_{\traux_0}) \wedge (\auxprop_{\traux_1} \leftrightarrow \neg\X\auxprop_{\traux_1})  $ holds at $\asg(\traux_0)$ and $\asg(\traux_1)$ if the pointers both point to the end of a block on the respective trace.
    Furthermore, $\alpha_\per[\traux/\traux_j, \trauxp/\trauxp_j]$ denotes the formula obtained from replacing each occurrence of $\traux$ by $\traux_j$ and every occurrence of $\trauxp$ by $\trauxp_j$.
    Thus, we quantify two traces~$\asg(\traux_0) = (\tra_0,0)$ and $\asg(\traux_1)= (\tra_1,0)$ (we disregard $\asg(\trauxp_0)$ and $\asg(\trauxp_1)$, as we just need them to construct the former two traces) that are periodic. 

        \begin{figure}[b]
        \centering
        \begin{tikzpicture}[thick]

    \def\y{1}
        \node at (-.5,\y) {$\tr_0$};
        \node at (-.5,0) {$\tr_1$};
        \node at (-.5,-\y) {$\tr_{y_3}$};

        \draw[->, > = stealth] (0,\y) -- (13,\y);
        
        \draw[->, > = stealth] (0,0) -- (13,0);

        \draw[->, > = stealth] (0,-\y) -> (13,-\y);
        
        \foreach \i in {0,.5,...,3,7,7.5,...,10}{
        \draw[fill,red] (\i,\y+.1) circle (.03);
        \draw[fill,red] (\i,\y+-.1) circle (.03);
        }

        \foreach \i in {0,.5,...,2.5,6,6.5,...,8.5, 12,12.5}{
        \draw[fill,red] (\i,.1) circle (.03);
        \draw[fill,red] (\i,-.1) circle (.03);
        }

        \foreach \i in {0,.5,...,12.5}{
    \draw (\i,-.1) -- (\i, .1);
    \draw (\i,\y-.1) -- (\i, \y+.1);
    \draw (\i,-\y-.1) -- (\i, -\y+.1);
    \pgfmathtruncatemacro{\result}{2*\i}
        \node at (\i,-1.25) {\scriptsize \result};
  }

      \foreach \i in {0,.5,...,8.5}{
    \draw[<->,> = stealth,thin] (\i, .15) -- (\i+1.5,\y-.15);
  }

        \node[fill=gray!20, circle, minimum size =12,inner sep = 0] at (1.5,\y+.4) {\footnotesize a};
        \node[fill=gray!20, circle, minimum size =12,inner sep = 0] at (0,.4) {\footnotesize a};
        \node[fill=gray!20, circle, minimum size =12,inner sep = 0] at (1.5,-\y+0.4) {\footnotesize a};
        
        \node[fill=gray!20, circle, minimum size =12,inner sep = 0] at (8.5,.4) {\footnotesize b};
        \node[fill=gray!20, circle, minimum size =12,inner sep = 0] at (10,\y+.4) {\footnotesize b};
        \node[fill=gray!20, circle, minimum size =12,inner sep = 0] at (10,-\y+.4) {\footnotesize b};
        
        \node[fill=gray!20, circle, minimum size =12,inner sep = 0] at (10.5,-\y+.4) {\footnotesize c};

        \end{tikzpicture}
        \caption{The formula~$\psi_3$ implementing multiplication, for $n_1 = 3$ and $n_2 = 7$, i.e., $\tr_0$ has period~$7$ and $\tr_1$ has period~$6$. Here, \myquot{\,\raisebox{-0.25ex}{\begin{tikzpicture}[thick]
\protect\draw(0,-.1) -- (0, .1);
            \protect\draw[fill,red] (0,.1) circle (.03);
\protect\draw[fill,red] (0,-.1) circle (.03);
        \end{tikzpicture}}\,} (\myquot{\,\raisebox{-0.25ex}{\begin{tikzpicture}[thick]
\protect\draw(0,-.1) -- (0, .1);
        \end{tikzpicture}}\,}) denotes a position where $\auxprop$ holds (does not hold).}
        \label{fig_contexts_mult}
        
    \end{figure}
    
    The third line of $\psi_3$ is satisfied if $\tra_0$ has period~$n_2$ and $\tra_1$ has period $n_2-1$.
    Now, consider the last line of $\psi_3$ and see Figure~\ref{fig_contexts_mult}:
    \begin{itemize}
       
        \item 
        The first eventually-operator updates the pointers of $\asg(\tr_{y_1})$, $\asg(\tr_{y_3})$, and $\asg(\traux_0)$ to $n_1$, as this is the unique position on $\asg(\tr_{y_1})$ that satisfies~$\intprop$. Crucially, the pointer of $\asg(\traux_1)$ is not updated, as it is not in the current context. These positions are marked by \myquot{$a$}.

        \item 
        Then, the until-operator compares positions~$i$ on $\tr_1$ and $i+n_1$ on $\tr_0$ (depicted by the diagonal lines) and thus subsequently updates the pointer of $\asg(\traux_1)$ to $z \cdot (n_2 -1) -1$ for the smallest $z \in\nats\setminus\set{0}$ such that $z \cdot (n_2 - 1) = z' \cdot n_2 - n_1$ (recall that the pointer of $\asg(\traux_0)$ with period~$n_2$ is already $n_1$ positions ahead) for some $z' \in \nats\setminus\set{0}$, as $\alpha_\blockchange$ only holds at the ends of the blocks of $\tr_0$ and $\tr_1$. 
        Accordingly, the until-operator updates the pointer of $\asg(\traux_0)$ to $z \cdot (n_2 -1) -1 + n_1$ and the pointer of $\asg(\tr_{y_3})$ to $z \cdot (n_2 -1) -1 + n_1$.
        These positions are marked by \myquot{$b$}.

        \item 
        Then, the next-operator subsequently updates the pointer of $\asg(\tr_{y_3})$ to $z \cdot (n_2 -1) + n_1$, which is marked by \myquot{$c$} in Figure~\ref{fig_contexts_mult}.
    
   \end{itemize}
    As argued above, $z$ must be equal to $n_1$, i.e., the pointer of $\asg(\tr_{y_3})$ is then equal to $n_1 \cdot (n_2 -1) + n_1 = n_1 \cdot n_2$. At that position, $\psi_3$ requires that $\intprop$ holds on $\asg(\tr_{y_3})$. 
    Hence, we have indeed implemented multiplication of $n_1$ and $n_2$, provided we have $0 < n_1 \le n_2$ and $n_2 \ge 2$.

For the final case, i.e., $0 < n_2 < n_1$, we use a similar construction, but swap the roles of $y_1$ and $y_2$ in $\psi_3$, to obtain a formula~$\psi_4$.
Then, let $\hyperize(y_1 \cdot y_2 = y_3) = \psi_1 \vee\psi_2 \vee\psi_3 \vee\psi_4$.

Now, let $\TS$ be a fixed transition system with 
$
\traces(\TS) = (\pow{\set{\intprop}})^\omega \cup (\pow{\set{\auxprop}})^\omega$,
Here, $(\pow{\set{\intprop}})^\omega$ contains the traces to mimic set quantification and $(\pow{\set{\auxprop}})^\omega$ contains the traces for $\traux_j$ and $\trauxp_j$ used to implement multiplication.
An induction shows that we have $\natsstruct \models \varphi$ if and only if $\TS \models \hyperize(\phi)$. 
Again, $\hyperize(\varphi)$ can be turned into a \hltlc formula (i.e., in prenex normal form), as no quantifier is under the scope of a temporal operator.
\end{proof}

Combining the results of this section, we obtain our main result settling the complexity of model-checking for \ghltl and its fragments \hltls and \hltlc.

\begin{thm}
\label{thm:mc}
The model-checking problems for the logics~\ghltl, \hltls, and \hltlc are all polynomial-time equivalent to truth in second-order arithmetic.
\end{thm}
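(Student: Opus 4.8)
The plan is to assemble the statement from the three lemmata proven above, using only the fact that \hltls and \hltlc are syntactic fragments of \ghltl. For the upper bound, I would note that every instance $(\TS,\phi)$ of the model-checking problem for \hltls or for \hltlc is, verbatim, also an instance of the \ghltl model-checking problem; hence Lemma~\ref{lem:mcG} yields a single polynomial-time reduction from each of the three model-checking problems to truth in second-order arithmetic.

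For the matching lower bounds, I would invoke Lemma~\ref{lem:mcS} and Lemma~\ref{lem:mcC}, which provide polynomial-time reductions from truth in second-order arithmetic to \hltls model-checking and to \hltlc model-checking, respectively. Since the outputs $\hyperize(\phi)$ of these reductions can be brought into prenex normal form (no quantifier occurs under the scope of a temporal operator), they are genuine \hltls-, respectively \hltlc-sentences, and in particular \ghltl-sentences; so the very same reductions also witness that truth in second-order arithmetic is polynomial-time reducible to \ghltl model-checking. Combining the two directions, each of the three problems is polynomial-time reducible to and from truth in second-order arithmetic, i.e., polynomial-time equivalent to it.

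The genuinely hard work sits inside the lemmata, not in this assembly: Lemma~\ref{lem:mcG} requires faithfully encoding the semantics of \ghltl — including $\setS$-changepoints, iterated $(\setS,\ctx)$-successors and predecessors, and the context operator — over possibly uncountable sets of traces in second-order arithmetic, which is exactly why finite transition systems (which may carry uncountably many traces) push model-checking above the countable-model regime that keeps satisfiability in $\Sigma_1^1$. The lower-bound lemmata are where the main obstacle lies: one must implement arithmetic addition and, especially, multiplication using only a single asynchronous mechanism — stuttering in Lemma~\ref{lem:mcS} and contexts in Lemma~\ref{lem:mcC} — by forcing auxiliary traces to be periodic and then exploiting the chosen mechanism to advance pointers on different traces in lockstep but at different rates. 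Given those lemmata, the theorem itself follows immediately.
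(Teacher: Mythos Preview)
Your proposal is correct and matches the paper's approach exactly: the paper states the theorem immediately after the three lemmata with the one-line justification ``Combining the results of this section, we obtain our main result,'' and your assembly via Lemma~\ref{lem:mcG} for the upper bounds and Lemmata~\ref{lem:mcS} and~\ref{lem:mcC} (plus the fragment inclusions) for the lower bounds is precisely that combination. Your additional commentary on where the real work lies is accurate and helpful, though the paper itself omits it at this point.
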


\section{\texorpdfstring{\ghltl}{Generalized HyperLTL with Stuttering and Contexts} Finite-state Satisfiability}
\label{section_fssat}

In this section, we settle the complexity of the finite-state satisfiability problems for \ghltl and its fragments: Given a sentence~$\phi$, is there a transition system~$\TS$ such that $\TS \models \phi$ (recall that our transition systems are all, by definition, finite)? 
We will show that the problem is, for \ghltl, \hltls, and \hltlc, equivalent to truth in second-order arithmetic, i.e., as hard as the model-checking problems.

We begin by proving a lower bound for \hltls.

\begin{lem}\label{lem:fssS}
  Truth in second-order arithmetic is polynomial-time reducible to \hltls finite-state satisfiability.
\end{lem}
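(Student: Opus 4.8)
The plan is to bootstrap the reduction from Lemma~\ref{lem:mcS}. That reduction maps a sentence~$\phi$ of second-order arithmetic to a pair~$(\TS, \hyperize(\phi))$, where every trace quantifier of the \hltls sentence~$\hyperize(\phi)$ is guarded by a \hyltl ``shape constraint'' pinning down the traces of the form it needs, and where $\TS$ is a fixed, polynomial-time computable transition system with
\[
\traces(\TS) = (\pow{\set{\intprop}})^\omega \cup \bigcup_{y,y'\in V_1}(\pow{\set{\intpropm{y}, \intpropm{y'}, \auxprop}})^\omega \cup (\pow{\set{\auxpropp}})^\omega .
\]
Since finite-state satisfiability does not let us prescribe the transition system, the idea is instead to \emph{force} a transition system whose trace set is a superset of $\traces(\TS)$: I will construct a \hltls sentence~$\alltraces$ such that a transition system~$\TS'$ satisfies $\alltraces$ if and only if $\traces(\TS') \supseteq \traces(\TS)$, and then reduce $\natsstruct \models \phi$ to finite-state satisfiability of $\alltraces \wedge \hyperize(\phi)$.

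To build $\alltraces$ it suffices to force, for each of the polynomially many proposition sets~$P$ occurring in the union above (each of size at most three), that $(\pow{P})^\omega \subseteq \traces(\TS')$, and to take the conjunction over all these~$P$. For a fixed~$P$, I express (i) the trace that is empty at every position lies in $\traces(\TS')$, and (ii) for every trace~$\pi$ using only propositions in~$P$ and every letter~$b \subseteq P$, the trace obtained from~$\pi$ by prepending~$b$ also lies in $\traces(\TS')$. Both can be written already in \hyltl; in particular, (ii) has the form $\forall \pi.\, (\G_\emptyset \bigwedge_{p\in\ap\setminus P}\neg p_\pi) \rightarrow \bigwedge_{b \subseteq P} \exists \pi'.\, (\G_\emptyset\bigwedge_{p\in\ap\setminus P}\neg p_{\pi'}) \wedge \chi_b(\pi,\pi')$, where $\chi_b(\pi,\pi')$ uses only $\G_\emptyset$, $\X_\emptyset$, and the propositions of~$P$ to state that $\pi'(0) = b$ and $\pi'(i+1) = \pi(i)$ for all~$i$ (no past operators are needed). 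A transition system satisfying (i) and (ii) for~$P$ contains every eventually-empty trace over~$P$; these traces are dense in $(\pow{P})^\omega$, and the trace set of a finite transition system is topologically closed (a standard fact, e.g.\ $\traces(\TS')$ is the continuous image of the compact set of runs of $\TS'$), so $(\pow{P})^\omega \subseteq \traces(\TS')$ follows. The converse implication is immediate. Finally, $\alltraces$ has no quantifier under the scope of a temporal operator, so it can be brought into prenex normal form, i.e.\ into a \hltls sentence, and it is constructible in polynomial time.

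For correctness: if $\natsstruct \models \phi$, then Lemma~\ref{lem:mcS} gives $\TS \models \hyperize(\phi)$, while $\TS \models \alltraces$ holds by the definition of $\traces(\TS)$; hence $\TS$ witnesses finite-state satisfiability of $\alltraces \wedge \hyperize(\phi)$. Conversely, if some finite transition system~$\TS'$ satisfies $\alltraces \wedge \hyperize(\phi)$, then $\traces(\TS') \supseteq \traces(\TS)$. Because each trace quantifier in $\hyperize(\phi)$ is guarded by a shape constraint whose satisfying traces all belong to $\traces(\TS)$ — and, for the first- and second-order quantifiers, are \emph{exactly} the traces of the prescribed form no matter which superset of $\traces(\TS)$ one quantifies over — the induction from the proof of Lemma~\ref{lem:mcS} establishing ``$\TS \models \hyperize(\phi)$ iff $\natsstruct \models \phi$'' goes through verbatim with $\TS$ replaced by~$\TS'$; hence $\natsstruct \models \phi$.

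The step I expect to be the main obstacle is obtaining a \emph{prenex} \hltls sentence that forces a superset of $\traces(\TS)$: the natural ``every finite prefix extends to a trace of the model'' specification requires quantification over positions, which prenex \hltls does not offer directly, so one must route through the prepend-closure formulation above, whose correctness relies essentially on the closedness of trace sets of finite transition systems. A secondary point needing care is the observation that enlarging the trace set beyond $\traces(\TS)$ cannot change the truth value of $\hyperize(\phi)$; this is precisely where it matters that all of its trace quantifiers are relativized.
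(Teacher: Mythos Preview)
Your proof is correct and reaches the same conclusion via the same high-level strategy (bootstrap Lemma~\ref{lem:mcS}, add a sentence forcing the required traces, appeal to closedness of $\traces(\TS')$ for finite~$\TS'$), but the mechanism differs from the paper's in two respects. First, the paper introduces fresh auxiliary propositions---type markers~$i \in I$ and a single-use ``cursor'' proposition~$\sweepprop$---and uses an \emph{append} construction: for every trace whose $\ap_i$-prefix up to the cursor is~$w$, and every letter~$A \subseteq \ap_i$, some trace has prefix~$wA$ with the cursor shifted one step right. You instead use a \emph{prepend} construction with no auxiliary propositions. Second, and more significantly, the paper's type markers together with the constraint~$i_\tr \rightarrow \bigwedge_{p \in \ap\setminus\ap_i}\G\neg p_\tr$ force the $\ap$-projection of the model to equal $\setL_{V_1}$ \emph{exactly}, so the correctness of $\hyperize(\phi)$ transfers without further argument. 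Your~$\alltraces$ only forces a superset, which obliges you to argue that extra traces cannot flip the truth value of~$\hyperize(\phi)$.

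That argument is sound, but your justification is slightly imprecise: it is not true that ``each trace quantifier in $\hyperize(\phi)$ is guarded by a shape constraint whose satisfying traces all belong to $\traces(\TS)$''. The guard~$\psi$ on the auxiliary existential~$\exists\traux$ inside $\alpha_\adddeco$ constrains only $\intpropm{y_2}$ and $\intpropm{y_3}$ on~$\traux$, so a larger model may supply witnesses outside $\traces(\TS)$. The point that actually makes the argument work is that \emph{any} witness satisfying~$\psi$ has its $\set{\intpropm{y_2}}$-changepoints fully determined, so the remainder of $\alpha_\adddeco$ forces $n_1+n_2=n_3$ regardless; and all auxiliary quantifiers in $\hyperize(\phi)$ are existential, so extra traces can only help, never hurt, once one checks that no spurious witness exists. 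You should make this explicit rather than asserting that the induction ``goes through verbatim''.
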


\begin{proof}
Recall that in the proof of Lemma~\ref{lem:mcS}, we have presented a polynomial-time computable function~$\hyperize$ mapping sentences~$\phi$ of second-order arithmetic to \hltls sentences~$\hyperize(\phi)$ such that $\natsstruct \models \varphi$ if and only if $\TS \models \hyperize(\phi)$, where $\TS$ is a transition system with 
\[
\traces(\TS) = (\pow{\set{\intprop}})^\omega \cup \bigcup_{y,y' \in V_1}  (\pow{\set{\intpropm{y},\intpropm{y'},\auxprop}})^\omega\cup (\pow{\set{\auxpropp}})^\omega,
\]
where $V_1$ denotes the set of first-order variables in $\phi$.
Call that set of traces $\setL_{V_1}$.

To simplify the following construction, let us introduce some notation. Let $\ap_1 = \set{\intprop}$, $\ap_{(y,y')} = \set{\intpropm{y},\intpropm{y'},\auxprop}$ for $(y,y') \in V_1 \times V_1$, and $\ap_2 = \set{\auxpropp}$, let $I = \set{1,2} \cup V_1 \times V_1$, and let $\ap = \bigcup_i \ap_i$, where $i$ ranges over $I$. 
Note that $\ap$ is the set of propositions used in $\hyperize(\phi)$.

In the following, we construct a formula~$\alpha_{V_1}$ such that a finite transition system satisfies $\alpha_{V_1}$ if and only if its set of traces (projected to $\ap$) is equal to $\setL_{V_1}$.
Then, $\natsstruct \models \varphi$ if and only if $\alpha_{V_1} \wedge \hyperize(\phi)$ is finite-state satisfiable, which yields the desired lower bound.

We define $\alpha_{V_1}$ to be the conjunction of the following sentences which uses a fresh proposition~$\sweepprop$, the indices in $I$ as fresh propositions, and the propositions in $\ap$:
\begin{itemize}

    \item $\forall \tr.\ \bigvee_{i \in I} (i_\tr \wedge \bigwedge_{i' \in I\setminus\set{i}} \neg i'_\tr)$, which expresses that each trace in a model is, at the first position, uniquely marked by an index~$i \in I$.
    
    \item $\forall \tr.\ \G (\sweepprop_\tr \rightarrow \X\G\neg\sweepprop_\tr)$, which expresses that each trace in a model has at most one position where the proposition~$\sweepprop$ holds.

    \item $\forall \tr.\ \bigwedge_{i \in I} i_\tr \rightarrow \bigwedge_{\p \in \ap \setminus \ap_i} \G\neg \p_\tr$, which expresses that each trace in a model may not contain propositions from set~$\ap \setminus \ap_i$, where $i$ is the unique mark of the trace.
    
    \item $\bigwedge_{i \in I}\exists \tr.\ i_\tr \wedge \sweepprop_{\tr}$, which expresses that each model contains, for each $i$, a trace marked by $i$ where $\sweepprop$ holds at the first position.
    
    \item $\bigwedge_{i\in I}\bigwedge_{A\subseteq \ap_i}  \forall \tr.\ \exists \tr' .\ (i_\tr \wedge \F\sweepprop_\tr) \rightarrow [ i_{\tr'} \wedge  ( \bigwedge_{\p \in \ap_i} \p_\tr \leftrightarrow \p_{\tr'} ) \U (\sweepprop_{\tr} \wedge (\bigwedge_{\p \in A} \p_{\tr'} \wedge \bigwedge_{\p \notin A} \neg \p_{\tr'}) \wedge \X\sweepprop_{\tr'} )]$, which expresses that for every $i \in I$, every subset of $A$ of $\ap_i$, and every trace~$t$ in a model marked by $i$ and containing a $\sweepprop$, there is another trace~$t'$ in the model, also marked by $i$ and containing a $\sweepprop$, such that the $\ap_i$-projection~$w$ of $t$ up to the $\sweepprop$ and the $\ap_i$-projection~$w'$ of $t'$ up to the $\sweepprop$ satisfy $w' = w A$.
    
\end{itemize}
One can easily construct a finite-state transition system that satisfies $\alpha_{V_1}$.

Fix some $i \in I$. 
An induction shows that the set of prefixes of the $\ap_i$-projection of each model of $\alpha_{V_1}$ is equal to $(\pow{\ap_i})^*$.
Furthermore, let $\tsys$ be a finite transition system that is a model of $\alpha_{V_1}$. 
Then, the $\ap_i$-projection of $\traces(\tsys)$ must be equal to $(\pow{\ap_i})^\omega$, which follows from the fact that the set of traces of a finite transition system is closed (see, e.g.,~\cite{BK08} for the necessary definitions).

As this reasoning holds for every $i \in I$, and as $\alpha_{V_1}$ ensures that each trace in a model only contains propositions from a unique $\ap_i$, we conclude that the $\ap$-projection of $\traces(\tsys)$ must be equal to $\setL_{V_1}$.
\end{proof}

Next, we use the same approach to prove a similar lower bound for \hltlc.

\begin{lem}\label{lem:fssC}
  Truth in second-order arithmetic is polynomial-time reducible to \hltlc finite-state satisfiability.
\end{lem}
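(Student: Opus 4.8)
The plan is to combine the model-checking reduction from Lemma~\ref{lem:mcC} with a ``pinning down the transition system'' argument analogous to the one in the proof of Lemma~\ref{lem:fssS}. Recall that Lemma~\ref{lem:mcC} provides a polynomial-time computable function~$\hyperize$ mapping sentences~$\phi$ of second-order arithmetic to \hltlc sentences~$\hyperize(\phi)$ over $\ap = \set{\intprop, \auxprop}$ such that $\natsstruct\models\phi$ if and only if $\TS \models \hyperize(\phi)$, where $\TS$ is a fixed finite transition system whose trace set is $(\pow{\set{\intprop}})^\omega \cup (\pow{\set{\auxprop}})^\omega$. Since $\hyperize(\phi)$ refers only to the propositions in $\ap$, whether a transition system satisfies $\hyperize(\phi)$ depends only on the $\ap$-projection of its trace set. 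Hence it suffices to construct, in polynomial time, a \hltlc sentence~$\alpha$ (using, besides $\ap$, a sweep proposition~$\sweepprop$ and two marker propositions) such that a finite transition system satisfies $\alpha$ if and only if the $\ap$-projection of its trace set equals $(\pow{\set{\intprop}})^\omega \cup (\pow{\set{\auxprop}})^\omega$: then $\natsstruct\models\phi$ if and only if $\alpha \wedge \hyperize(\phi)$ is finite-state satisfiable.

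To build~$\alpha$, I would reuse the marker-plus-sweep construction from the proof of Lemma~\ref{lem:fssS}, specialized to the index set $I = \set{1,2}$ with $\ap_1 = \set{\intprop}$ and $\ap_2 = \set{\auxprop}$. Concretely, $\alpha$ is the conjunction of HyperLTL sentences expressing: every trace carries exactly one of the markers~$1, 2$ at position~$0$; a marker-$i$ trace uses no proposition outside $\ap_i$ (apart from the markers and~$\sweepprop$); every trace has at most one position where $\sweepprop$ holds; for each $i$ there is a marker-$i$ trace with $\sweepprop$ at position~$0$; and for each $i$, each $A \subseteq \ap_i$, and each marker-$i$ trace~$t$ containing a $\sweepprop$, there is another marker-$i$ trace~$t'$ containing a $\sweepprop$ whose $\ap_i$-projection up to the $\sweepprop$ extends that of $t$ by the single letter~$A$. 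Note that $\alpha$ uses only $\emptyset$-labeled temporal operators and trace quantification (no context operator is needed), so it is a \hltlc sentence; since no quantifier occurs under a temporal operator, it can be brought into prenex normal form, and the whole reduction is clearly polynomial-time.

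It then remains to verify the correctness of~$\alpha$. One direction is immediate: a finite transition system realizing the described marker-sweep trace set can be constructed explicitly, and its $\ap$-projection is the target set. For the converse, fix $i \in \set{1,2}$ and a finite transition system~$\tsys \models \alpha$. An induction over prefix length, using the last conjunct of~$\alpha$ with the base case supplied by the $\sweepprop$-at-position-$0$ conjunct, shows that the set of $\ap_i$-projections of the prefixes-before-$\sweepprop$ of the marker-$i$ traces of~$\tsys$ is all of $(\pow{\ap_i})^*$; in particular the $\ap_i$-projection of the marker-$i$ traces of $\tsys$ is prefix-dense. Since the set of traces of a finite transition system is topologically closed, this forces that projection to be exactly $(\pow{\ap_i})^\omega$. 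As the markers partition the traces of any model of~$\alpha$ and constrain which of $\intprop, \auxprop$ may occur in each, the $\ap$-projection of $\traces(\tsys)$ is exactly $(\pow{\set{\intprop}})^\omega \cup (\pow{\set{\auxprop}})^\omega$, so $\tsys \models \hyperize(\phi)$ if and only if $\natsstruct\models\phi$ by Lemma~\ref{lem:mcC}. I expect the only delicate point to be the clean separation of the two trace ``components'' via the markers together with the closedness argument; everything else is a direct transcription of the \hltls case in Lemma~\ref{lem:fssS}, now invoking the \hltlc translation of Lemma~\ref{lem:mcC}.
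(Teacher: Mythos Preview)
Your proposal is correct and follows exactly the paper's approach: invoke the model-checking reduction of Lemma~\ref{lem:mcC}, then reuse the marker-plus-sweep construction of Lemma~\ref{lem:fssS} (specialized to the two-index case~$\ap_1=\set{\intprop}$, $\ap_2=\set{\auxprop}$) to pin down the $\ap$-projection of the trace set, and conclude via the closedness argument. In fact you spell out more detail than the paper, which simply defers to Lemma~\ref{lem:fssS} for the construction of~$\alpha$.
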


\begin{proof}
Recall that in the proof of Lemma~\ref{lem:mcC}, we have presented a polynomial-time computable function~$\hyperize$ mapping sentences~$\phi$ of second-order arithmetic to \hltlc sentences~$\hyperize(\phi)$ such that $\natsstruct \models \varphi$ if and only if $\TS \models \hyperize(\phi)$, where $\TS$ is a fixed transition system with 
$\traces(\TS) = (\pow{\set{\intprop}})^\omega \cup (\pow{\set{\auxprop}})^\omega$.
Call that set of traces $\setL$.

As in the proof of Lemma~\ref{lem:fssS}, we can construct a formula~$\alpha$ that is satisfied by a finite transition system if and only if the $\set{\intprop, \auxprop}$-projection of its set of traces is equal to $\setL$.
Then, $\natsstruct \models \varphi$ if and only if $\alpha \wedge \hyperize(\phi)$ is finite-state satisfiable, which yields the desired lower bound.
\end{proof}

Finally, we prove a matching upper bound.
Recall that we have proved that \ghltl model-checking (of finite-state transition systems) is reducible to truth in second-order arithmetic (see Lemma~\ref{lem:mcG}). 
To prove this, we captured the semantics of \ghltl in second-order arithmetic, which in particular involved, for a finite transition system~$\TS$, a formula~$\alpha_\TS(X)$ with a free second-order variable~$X$ that is satisfied by $\natsstruct$ if and only if the interpretation of $X$ encodes a trace of $\TS$. To reduce finite-state satisfiability of a \ghltl sentence~$\phi$ to truth in second-order arithmetic, we existentially quantify the transition system~$\TS$ in second-order arithmetic and then intuitively model-check whether $\TS$ satisfies $\phi$ in second-order arithmetic as before. 

\begin{lem}\label{lem:fss}
\ghltl finite-state satisfiability is polynomial-time reducible to truth in second-order arithmetic.
\end{lem}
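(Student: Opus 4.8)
The plan is to combine the arithmetic encoding of the semantics of \ghltl developed in the proof of Lemma~\ref{lem:mcG} with one additional leading existential quantifier that guesses the transition system. Fix a \ghltl sentence~$\phi$ and let $\ap$ be the finite set of propositions occurring in~$\phi$. I would produce a sentence~$\phi'$ of second-order arithmetic, computable from~$\phi$ in polynomial time, such that $\phi$ is finite-state satisfiable if and only if $\natsstruct \models \phi'$.

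The first step is to observe that a finite transition system~$\TS = (V, E, I, \labfunc)$ over the fixed finite set~$\ap$ is a \emph{finite} object: we may assume $V = \set{0, 1, \ldots, \size{V}-1}$, and then $E \subseteq V\times V$ and $I \subseteq V$ are finite, and $\labfunc$ is a finite function into the finite set~$\pow\ap$. Hence $\TS$ can be encoded by a single natural number~$t$ (via a first-order-definable pairing of a bound on~$\size{V}$ with the characteristic sets of~$E$ and~$I$ and the table of~$\labfunc$), and one can write a first-order arithmetic formula~$\mathit{validTS}(t)$ that holds in~$\natsstruct$ if and only if~$t$ encodes such a transition system, i.e., $V \neq \emptyset$, $I \neq \emptyset$, and every vertex has at least one outgoing edge.

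Next, I would generalize the formula~$\alpha_\TS(X)$ used in the proof of Lemma~\ref{lem:mcG}, which captured membership of a trace (encoded by a type-$1$ object~$X$ as in that proof) in a \emph{fixed} transition system, to a formula~$\alpha_{\mathrm{tr}}(t, X)$ with an extra free first-order variable~$t$ that holds if and only if~$X$ encodes a trace of the transition system encoded by~$t$. This is direct: $X$ encodes a trace of~$\TS$ if and only if there is a type-$1$ object~$R$ encoding a run~$v_0 v_1 \cdots \in V^\omega$ (i.e., a function~$\nats \to V$) with $v_0 \in I$, $(v_j, v_{j+1}) \in E$ for all~$j$, and for every position~$j$ and every~$p \in \ap$ the proposition~$p$ belongs to the $j$-th letter of the trace encoded by~$X$ if and only if $p \in \labfunc(v_j)$ — all of which are first-order in~$t$, $X$, and~$R$ — so $\alpha_{\mathrm{tr}}(t,X)$ is this matrix prefixed by a single existential second-order quantifier over~$R$. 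I would then rerun the inductive translation~$\arithmetize_D$ of Lemma~\ref{lem:mcG} verbatim, the only change being that the two quantifier clauses use~$\alpha_{\mathrm{tr}}(t, X)$ in place of~$\alpha_\TS(X)$; call the result~$\arithmetize^t_D$, which now carries the additional free variable~$t$. Setting $\arithmetize^t(\phi) = \arithmetize^t_{\vars'}(\phi)(\emptyset)$ as before, the same induction as in Lemma~\ref{lem:mcG} yields: for every~$t$ with $\natsstruct \models \mathit{validTS}(t)$, the transition system~$\TS$ encoded by~$t$ satisfies~$\phi$ if and only if $\natsstruct \models \arithmetize^t(\phi)$. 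The reduction then outputs the second-order arithmetic sentence~$\exists t.\, \big(\mathit{validTS}(t) \wedge \arithmetize^t(\phi)\big)$, whose truth in~$\natsstruct$ is equivalent to the existence of a finite transition system satisfying~$\phi$, i.e., to finite-state satisfiability of~$\phi$. Polynomial-time computability is inherited from Lemma~\ref{lem:mcG}, as the remaining additions have constant size.

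The main obstacle is the technically delicate but conceptually mild step of making precise that a finite transition system is uniformly encodable by a single natural number and of upgrading $\alpha_\TS(X)$ from the fixed system of Lemma~\ref{lem:mcG} to the parametric $\alpha_{\mathrm{tr}}(t,X)$ while keeping the "is a trace of" relation first-order in~$t$, $X$, and the run~$R$. Once this parametric version is available, the proof is a routine replay of the model-checking reduction with one extra leading existential quantifier.
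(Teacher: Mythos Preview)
Your proposal is correct and follows essentially the same approach as the paper: existentially quantify a representation of the finite transition system and then replay the model-checking reduction of Lemma~\ref{lem:mcG} with the fixed~$\alpha_\TS(X)$ replaced by a parametric version. The only cosmetic difference is that you encode the transition system by a single natural number (exploiting its finiteness), whereas the paper uses second-order variables~$n, E, I, \ell$ and cites the formulas~$\alpha_\ists$ and~$\alpha_\traceof$ from~\cite{fz25}; both encodings work and lead to the same argument.
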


\begin{proof}
Recall that in the proof of Lemma~\ref{lem:mcG}, we have encoded traces~$\tra$ over a fixed set $\ap$ by the set 
\[S_\tra =\set{\pair(j,\encprop(p)) \mid j \in \nats \text{ and } p \in \tra(j)} \subseteq \nats,\]
where $\encprop \colon \ap \rightarrow\set{0,1,\ldots,\size{\ap}-1}$ and $\pair \colon \nats \times \nats \rightarrow \nats$ are a bijections.

Using this encoding, there are formulas (see \cite[Proof of Theorem~4.4]{frz26})
\begin{itemize}
    \item $\alpha_\ists(n, E, I, \ell)$ with free first-order variable~$n$ and free second-order variables~$E$, $I$, and $\ell$ that is satisfied by $\natsstruct$ if and only if the interpretations of the free variables encode a finite transition system, and 
    \item $\alpha_\traceof(X, n, E, I, \ell)$ that is satisfied by $\natsstruct$ if and only the interpretation of $X$ encodes a trace (using the above encoding) of the transition system encoded by the interpretations of $n$, $E$, $I$, and $\ell$.\footnote{Using the notation introduced in \cite{frz26}, $\alpha_\traceof(X, n, E, I, \ell)$ is the formula~$\exists P.\ \phi_\ispath(P, n, E, I) \wedge \phi_\traceof(X,P,\ell)$.}
\end{itemize}

Then, the finite-state satisfiability of a \ghltl sentence~$\phi$ is equivalent to
\[\natsstruct\models
\exists n.\ \exists E.\ \exists I.\ \exists \ell.\ \alpha_\ists(n, E, I, \ell) \wedge \arithmetize'_{\vars'}(\phi)(\emptyset),
\]
where $\vars'$ is the set of variables occurring in $\phi$ and where $\arithmetize'_{\vars'}(\phi)(\emptyset)$ is the formula obtained from the formula~$\arithmetize_{\vars'}(\phi)(\emptyset)$ constructed in the proof of Lemma~\ref{lem:mcG} by replacing each occurrence of $\alpha_\TS(X)$ for a second-order variable~$X$ by $\alpha_\traceof(X, n, E, I, \ell)$. 
\end{proof}

Combining the results of this section, we obtain our main result settling the complexity of finite-state satisfiability for \ghltl and its fragments \hltls and \hltlc.

\begin{thm}
\label{thm:fss}
The finite-state satisfiability problems for the logics~\ghltl, \hltls, and \hltlc are all polynomial-time equivalent to truth in second-order arithmetic.
\end{thm}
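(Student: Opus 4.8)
The plan is to obtain this theorem by combining the three preceding lemmata, together with the standing observation that \hltls and \hltlc are syntactic fragments of \ghltl.

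First I would handle the upper bound. Every \hltls sentence and every \hltlc sentence is, by definition, also a \ghltl sentence, so the identity map is a polynomial-time reduction from \hltls (respectively \hltlc) finite-state satisfiability to \ghltl finite-state satisfiability. Composing this with Lemma~\ref{lem:fss}, which reduces \ghltl finite-state satisfiability to truth in second-order arithmetic, shows that all three problems are polynomial-time reducible to truth in second-order arithmetic.

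Next I would handle the lower bound. Lemma~\ref{lem:fssS} and Lemma~\ref{lem:fssC} already give polynomial-time reductions from truth in second-order arithmetic to \hltls finite-state satisfiability and to \hltlc finite-state satisfiability, respectively; composing either of these with the inclusion of the respective fragment into \ghltl yields a polynomial-time reduction from truth in second-order arithmetic to \ghltl finite-state satisfiability as well. The one point that needs checking is that the formula $\alpha_{V_1} \wedge \hyperize(\phi)$ produced in the proof of Lemma~\ref{lem:fssS} (and its analogue in the proof of Lemma~\ref{lem:fssC}) genuinely lies in the fragment at hand; this holds because no trace quantifier occurs under the scope of a temporal operator, so the conjunction can be rewritten in prenex normal form and is thus a \hltls (respectively \hltlc) sentence.

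Putting the two directions together, each of the three finite-state satisfiability problems is polynomial-time inter-reducible with truth in second-order arithmetic, hence polynomial-time equivalent to it. I do not expect any genuine obstacle in this final assembly: all of the real work sits in the lemmata, in particular in the encodings of multiplication using only stuttering (Lemma~\ref{lem:mcS}, reused here via Lemma~\ref{lem:fssS}) and using only contexts (Lemma~\ref{lem:mcC}, reused via Lemma~\ref{lem:fssC}), and in constructing a finite transition system whose (uncountable) trace set is pinned down exactly by a \hltls, respectively \hltlc, formula.
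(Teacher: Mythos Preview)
Your proposal is correct and matches the paper's approach: the theorem is obtained by combining Lemma~\ref{lem:fssS}, Lemma~\ref{lem:fssC}, and Lemma~\ref{lem:fss}, together with the fact that \hltls and \hltlc are syntactic fragments of \ghltl. The paper states the theorem as an immediate consequence of these lemmata without further argument, so your slightly more explicit assembly is entirely in line with what is intended.
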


\begin{rem}
\label{rem:simplefss}
Again, the simple fragments of \ghltl and \hltlc contain \hyltl. 
Hence, their finite-state satisfiability problems are $\Sigma_1^0$-hard with respect to polynomial-time reductions, as for \hyltl~\cite{FinkbeinerH16}.
Finally, as model-checking is decidable for both simple fragments~\cite{DBLP:conf/lics/BozzelliPS21,DBLP:conf/fsttcs/BombardelliB0T24}, one can exhaustively check all finite transition systems until a satisfying one is found.
Hence, their finite-state satisfiability problems are in fact $\Sigma_1^0$-complete with respect to polynomial-time reductions.
\end{rem}

\section{Conclusion}

In this work, we have settled the complexity of \ghltl, an expressive logic for the specification of asynchronous hyperproperties.
Although it is obtained by adding stuttering, contexts, and trace quantification under the scope of temporal operators to \hyltl, we have proven that its satisfiability problem is as hard as that of its (much weaker) fragment \hyltl. 
On the other hand, model-checking \ghltl and finite-state satisfiability for \ghltl are much harder than for \hyltl, i.e., equivalent to truth in second-order arithmetic vs.\ decidable and $\Sigma_1^0$-complete. 
Here, the lower bounds hold for simpler fragments, i.e., \hltls and \hltlc, and very simple transition systems, that in the case of \hltlc do not even depend on the formula.

Our work extends a line of work that has settled the complexity of synchronous hyperlogics like \hyltl~\cite{hyperltlsat}, \hyqptl~\cite{hyq}, and second-order \hyltl~\cite{frz26}.
In future work, we aim to resolve the exact complexity of other logics for asynchronous hyperproperties proposed in the literature.

\bibliographystyle{alphaurl}
\bibliography{bib}

@Inproceedings{confversion,
  author    = {Regaud, Ga\"etan and Zimmermann, Martin},
  year      = {2025},
  title     = {The Complexity of Generalized {HyperLTL} with Stuttering and Contexts},
  editor    = {Bacci, Giorgio and Francalanza, Adrian},
  booktitle = {GandALF 2025},
  series    = {EPTCS},
  volume    = {428},
  publisher = {Open Publishing Association},
  pages     = {161-176},
  doi       = {10.4204/EPTCS.428.12},
}

@article{DBLP:journals/eatcs/Finkbeiner17,
  author       = {Bernd Finkbeiner},
  title        = {Temporal Hyperproperties},
  journal      = {Bull. {EATCS}},
  volume       = {123},
  year         = {2017},
  timestamp    = {Fri, 12 Feb 2021 13:40:05 +0100},
  biburl       = {https://dblp.org/rec/journals/eatcs/Finkbeiner17.bib},
  bibsource    = {dblp computer science bibliography, https://dblp.org}
}

@article{frz26,
  author       = {Hadar Frenkel and
                  Ga{\"{e}}tan Regaud and
                  Martin Zimmermann},
  title        = {The Complexity of Second-order {HyperLTL}},
  journal      = {Log. Methods Comput. Sci.},
  volume       = {22},
  number       = {1},
  year         = {2026},
  doi          = {10.46298/LMCS-22(1:26)2026},
  timestamp    = {Sun, 26 Apr 2026 20:47:00 +0200},
  biburl       = {https://dblp.org/rec/journals/lmcs/FrenkelRZ26.bib},
  bibsource    = {dblp computer science bibliography, https://dblp.org}
}

@inproceedings{expressiveness,
  author       = {Laura Bozzelli and
                  Adriano Peron and
                  C{\'{e}}sar S{\'{a}}nchez},
  editor       = {Bartek Klin and
                  Slawomir Lasota and
                  Anca Muscholl},
  title        = {Expressiveness and Decidability of Temporal Logics for Asynchronous
                  Hyperproperties},
  booktitle    = {{CONCUR} 2022},
  series       = {LIPIcs},
  volume       = {243},
  pages        = {27:1--27:16},
  publisher    = {Schloss Dagstuhl - Leibniz-Zentrum f{\"{u}}r Informatik},
  year         = {2022},
  doi          = {10.4230/LIPICS.CONCUR.2022.27},
  timestamp    = {Wed, 21 Aug 2024 22:46:00 +0200},
  biburl       = {https://dblp.org/rec/conf/concur/BozzelliPS22.bib},
  bibsource    = {dblp computer science bibliography, https://dblp.org}
}

@article{hyq,
  author       = {Ga{\"{e}}tan Regaud and
                  Martin Zimmermann},
  title        = {The complexity of HyperQPTL},
  journal      = {Inf. Process. Lett.},
  volume       = {193},
  pages        = {106626},
  year         = {2026},
  doi          = {10.1016/J.IPL.2026.106626},
  timestamp    = {Tue, 24 Mar 2026 08:46:38 +0100},
  biburl       = {https://dblp.org/rec/journals/ipl/RegaudZ26.bib},
  bibsource    = {dblp computer science bibliography, https://dblp.org}
}

@article{hmu,
  author       = {Jens Oliver Gutsfeld and
                  Markus M{\"{u}}ller{-}Olm and
                  Christoph Ohrem},
  title        = {Automata and fixpoints for asynchronous hyperproperties},
  journal      = {Proc. {ACM} Program. Lang.},
  volume       = {5},
  number       = {{POPL}},
  pages        = {1--29},
  year         = {2021},
 xurl          = {https://doi.org/10.1145/3434319},
  doi          = {10.1145/3434319},
  timestamp    = {Fri, 22 Mar 2024 09:00:10 +0100},
  biburl       = {https://dblp.org/rec/journals/pacmpl/GutsfeldMO21.bib},
  bibsource    = {dblp computer science bibliography, https://dblp.org}
}

@inproceedings{hyperpdl,
  author       = {Jens Oliver Gutsfeld and
                  Markus M{\"{u}}ller{-}Olm and
                  Christoph Ohrem},
  editor       = {Igor Konnov and
                  Laura Kov{\'{a}}cs},
  title        = {Propositional Dynamic Logic for Hyperproperties},
  booktitle    = {{CONCUR} 2020},
  series       = {LIPIcs},
  volume       = {171},
  pages        = {50:1--50:22},
  publisher    = {Schloss Dagstuhl - Leibniz-Zentrum f{\"{u}}r Informatik},
  year         = {2020},
 xurl          = {https://doi.org/10.4230/LIPIcs.CONCUR.2020.50},
  doi          = {10.4230/LIPICS.CONCUR.2020.50},
  timestamp    = {Wed, 21 Aug 2024 22:46:00 +0200},
  biburl       = {https://dblp.org/rec/conf/concur/GutsfeldMO20.bib},
  bibsource    = {dblp computer science bibliography, https://dblp.org}
}

@inproceedings{noninterference,
  author       = {Joseph A. Goguen and
                  Jos{\'{e}} Meseguer},
  title        = {Security Policies and Security Models},
  booktitle    = {S\&P 1982},
  pages        = {11--20},
  publisher    = {{IEEE} Computer Society},
  year         = {1982},
 xurl          = {https://doi.org/10.1109/SP.1982.10014},
  doi          = {10.1109/SP.1982.10014},
  timestamp    = {Thu, 21 Sep 2023 15:57:32 +0200},
  biburl       = {https://dblp.org/rec/conf/sp/GoguenM82a.bib},
  bibsource    = {dblp computer science bibliography, https://dblp.org}
}

@inproceedings{hypernode,
  author       = {Ezio Bartocci and
                  Thomas A. Henzinger and
                  Dejan Nickovic and
                  Ana Oliveira da Costa},
  editor       = {Guillermo A. P{\'{e}}rez and
                  Jean{-}Fran{\c{c}}ois Raskin},
  title        = {Hypernode Automata},
  booktitle    = {{CONCUR} 2023},
  series       = {LIPIcs},
  volume       = {279},
  pages        = {21:1--21:16},
  publisher    = {Schloss Dagstuhl - Leibniz-Zentrum f{\"{u}}r Informatik},
  year         = {2023},
 xurl          = {https://doi.org/10.4230/LIPIcs.CONCUR.2023.21},
  doi          = {10.4230/LIPICS.CONCUR.2023.21},
  timestamp    = {Wed, 21 Aug 2024 22:46:00 +0200},
  biburl       = {https://dblp.org/rec/conf/concur/BartocciHNC23.bib},
  bibsource    = {dblp computer science bibliography, https://dblp.org}
}

@inproceedings{DBLP:conf/fsttcs/BombardelliB0T24,
  author       = {Alberto Bombardelli and
                  Laura Bozzelli and
                  C{\'{e}}sar S{\'{a}}nchez and
                  Stefano Tonetta},
  editor       = {Siddharth Barman and
                  Slawomir Lasota},
  title        = {Unifying Asynchronous Logics for Hyperproperties},
  booktitle    = {{FSTTCS} 2024},
  series       = {LIPIcs},
  volume       = {323},
  pages        = {14:1--14:18},
  publisher    = {Schloss Dagstuhl - Leibniz-Zentrum für Informatik},
  year         = {2024},
  doi          = {10.4230/LIPICS.FSTTCS.2024.14},
  timestamp    = {Thu, 05 Dec 2024 17:16:11 +0100},
  biburl       = {https://dblp.org/rec/conf/fsttcs/BombardelliB0T24.bib},
  bibsource    = {dblp computer science bibliography, https://dblp.org}
}

@article{hyperltlsat,
  author       = {Marie Fortin and
                  Louwe B. Kuijer and
                  Patrick Totzke and
                  Martin Zimmermann},
  title        = {{HyperLTL} Satisfiability Is Highly Undecidable, {HyperCTL}$^*$
                  is Even Harder},
  journal      = {Log. Methods Comput. Sci.},
  volume       = {21},
  number       = {1},
  pages        = {3},
  year         = {2025},
  doi          = {10.46298/LMCS-21(1:3)2025},
  timestamp    = {Mon, 03 Mar 2025 22:18:19 +0100},
  biburl       = {https://dblp.org/rec/journals/lmcs/FortinKTZ25.bib},
  bibsource    = {dblp computer science bibliography, https://dblp.org}
}

@inproceedings{DBLP:conf/cav/BaumeisterCBFS21,
  author    = {Jan Baumeister and
               Norine Coenen and
               Borzoo Bonakdarpour and
               Bernd Finkbeiner and
               C{\'{e}}sar S{\'{a}}nchez},
  editor    = {Alexandra Silva and
               K. Rustan M. Leino},
  title     = {A Temporal Logic for Asynchronous Hyperproperties},
  booktitle = {{CAV}
               2021, Part {I}},
  series    = {LNCS},
  volume    = {12759},
  pages     = {694--717},
  publisher = {Springer},
  year      = {2021},
 xurl       = {https://doi.org/10.1007/978-3-030-81685-8\_33},
  doi       = {10.1007/978-3-030-81685-8\_33},
  timestamp = {Wed, 28 Dec 2022 16:42:42 +0100},
  biburl    = {https://dblp.org/rec/conf/cav/BaumeisterCBFS21.bib},
  bibsource = {dblp computer science bibliography, https://dblp.org}
}

@inproceedings{DBLP:conf/lics/BozzelliPS21,
  author    = {Laura Bozzelli and
               Adriano Peron and
               C{\'{e}}sar S{\'{a}}nchez},
  title     = {Asynchronous Extensions of {HyperLTL}},
  booktitle = {{LICS}
               2021},
  pages     = {1--13},
  publisher = {{IEEE}},
  year      = {2021},
 xurl       = {https://doi.org/10.1109/LICS52264.2021.9470583},
  doi       = {10.1109/LICS52264.2021.9470583},
  timestamp = {Wed, 28 Dec 2022 16:42:42 +0100},
  biburl    = {https://dblp.org/rec/conf/lics/BozzelliPS21.bib},
  bibsource = {dblp computer science bibliography, https://dblp.org}
}

@INPROCEEDINGS{Pnueli77, 
author={Pnueli, Amir}, 
booktitle={FOCS 1977}, 
title={The temporal logic of programs}, 
year={1977}, 
pages={46--57}, 
keywords={Clocks;Logic;Operating systems;Power system modeling;Programming profession;Real time systems;Reasoning about programs;Safety;Stress;System recovery}, 
doi={10.1109/SFCS.1977.32}, 
ISSN={0272-5428}, 
month={Oct},
publisher={IEEE}}

@inproceedings{FinkbeinerH16,
  author    = {Bernd Finkbeiner and
               Christopher Hahn},
  editor    = {Jos{\'{e}}e Desharnais and
               Radha Jagadeesan},
  title     = {Deciding Hyperproperties},
  booktitle = {{CONCUR} 2016},
  series    = {LIPIcs},
  volume    = {59},
  pages     = {13:1--13:14},
  publisher = {Schloss Dagstuhl - Leibniz-Zentrum für Informatik},
  year      = {2016},
  doi       = {10.4230/LIPIcs.CONCUR.2016.13},
  timestamp = {Tue, 11 Feb 2020 15:52:14 +0100},
  biburl    = {https://dblp.org/rec/conf/concur/FinkbeinerH16.bib},
  bibsource = {dblp computer science bibliography, https://dblp.org}
}

@inproceedings{FinkbeinerHHT20,
  author    = {Bernd Finkbeiner and
               Christopher Hahn and
               Jana Hofmann and
               Leander Tentrup},
  editor    = {Shuvendu K. Lahiri and
               Chao Wang},
  title     = {Realizing omega-regular Hyperproperties},
  booktitle = {{CAV} 2020, Part {II}},
  series    = {LNCS},
  volume    = {12225},
  pages     = {40--63},
  publisher = {Springer},
  year      = {2020},
  doi       = {10.1007/978-3-030-53291-8\_4},
  timestamp = {Wed, 12 Aug 2020 15:15:44 +0200},
  biburl    = {https://dblp.org/rec/conf/cav/FinkbeinerHHT20.bib},
  bibsource = {dblp computer science bibliography, https://dblp.org}
}

@article{ClarksonS10,
  author    = {Michael R. Clarkson and
               Fred B. Schneider},
  title     = {Hyperproperties},
  journal   = {J. Comput. Secur.},
  volume    = {18},
  number    = {6},
  pages     = {1157--1210},
  year      = {2010},
  doi       = {10.3233/JCS-2009-0393},
  timestamp = {Mon, 11 May 2020 22:58:57 +0200},
  biburl    = {https://dblp.org/rec/journals/jcs/ClarksonS10.bib},
  bibsource = {dblp computer science bibliography, https://dblp.org}
}

@inproceedings{ClarksonFKMRS14,
  author    = {Michael R. Clarkson and
               Bernd Finkbeiner and
               Masoud Koleini and
               Kristopher K. Micinski and
               Markus N. Rabe and
               C{\'{e}}sar S{\'{a}}nchez},
  editor    = {Mart{\'{\i}}n Abadi and
               Steve Kremer},
  title     = {Temporal Logics for Hyperproperties},
  booktitle = {{POST} 2014},
  series    = {LNCS},
  volume    = {8414},
  pages     = {265--284},
  publisher = {Springer},
  year      = {2014},
  doi       = {10.1007/978-3-642-54792-8\_15},
  timestamp = {Wed, 25 Sep 2019 18:20:09 +0200},
  biburl    = {https://dblp.org/rec/conf/post/ClarksonFKMRS14.bib},
  bibsource = {dblp computer science bibliography, https://dblp.org}
}

@phdthesis{Rabe16diss,
  author    = {Markus N. Rabe},
  title     = {A temporal logic approach to information-flow control},
  school    = {Saarland University},
  year      = {2016},
xurl       = {http://scidok.sulb.uni-saarland.de/volltexte/2016/6387/},
}

@inproceedings{MZ20,
  author    = {Corto Mascle and
               Martin Zimmermann},
  title     = {The Keys to Decidable {HyperLTL} Satisfiability: Small Models or Very
               Simple Formulas},
  booktitle = {{CSL} 2020},
    editor    = {Maribel Fern{\'{a}}ndez and
               Anca Muscholl},
  pages     = {29:1--29:16},
  year      = {2020},
  doi       = {10.4230/LIPIcs.CSL.2020.29},
  series    = {LIPIcs},
  volume    = {152},
  publisher = {Schloss Dagstuhl - Leibniz-Zentrum für Informatik},
}

@inproceedings{FZ17,
  author    = {Bernd Finkbeiner and
               Martin Zimmermann},
  title     = {{The First-Order Logic of Hyperproperties}},
  booktitle = {{STACS} 2017},
  pages     = {30:1--30:14},
  year      = {2017},
  doi       = {10.4230/LIPIcs.STACS.2017.30},
  series    = {LIPIcs},
  volume    = {66},
  publisher = {Schloss Dagstuhl - Leibniz-Zentrum für Informatik},
}

@inproceedings{FinkbeinerRS15,
  author    = {Bernd Finkbeiner and
               Markus N. Rabe and
               C{\'{e}}sar S{\'{a}}nchez},
  editor    = {Daniel Kroening and
               Corina S. Pasareanu},
  title     = "{Algorithms for Model Checking {HyperLTL} and {HyperCTL}${}^*$}",
  booktitle = {{CAV} 2015, Part
               {I}},
  series    = {LNCS},
  volume    = {9206},
  pages     = {30--48},
  publisher = {Springer},
  year      = {2015},
  doi       = {10.1007/978-3-319-21690-4\_3},
  timestamp = {Fri, 27 Mar 2020 08:45:57 +0100},
  biburl    = {https://dblp.org/rec/conf/cav/FinkbeinerRS15.bib},
  bibsource = {dblp computer science bibliography, https://dblp.org}
}

@book{Rogers87,
author = {Rogers, Hartley},
title = {Theory of Recursive Functions and Effective Computability},
year = {1987},
isbn = {0262680521},
publisher = {MIT Press},
address = {Cambridge, MA, USA}
}

@book{BK08,
  author       = {Christel Baier and
                  Joost{-}Pieter Katoen},
  title        = {Principles of model checking},
  publisher    = {{MIT} Press},
  year         = {2008},
  isbn         = {978-0-262-02649-9},
  timestamp    = {Tue, 15 Feb 2011 11:22:37 +0100},
  biburl       = {https://dblp.org/rec/books/daglib/0020348.bib},
  bibsource    = {dblp computer science bibliography, https://dblp.org}
}

\end{document}